  \theoremstyle{plain}
  \newtheorem{theorem}{Theorem}[chapter]
  \newtheorem{lemma}[theorem]{Lemma}
  \newtheorem{corollary}{Corollary}
  \theoremstyle{definition}
  \newtheorem{definition}[theorem]{Definition}
  \theoremstyle{remark}
\newcommand{\congmat}[2]{[#1\ \ #2]}
\newcommand{\safemath}[2]{\newcommand{#1}{\ensuremath{#2}\xspace}}
\newcommand{\Nsafemath}[2]{\renewcommand{#1}{\ensuremath{#2}\xspace}}
\newcommand{\ind}[1]{\chi_{#1}}	
\safemath{\sca}{a}
\safemath{\scb}{b}
\safemath{\scc}{c}
\safemath{\scd}{d}
\safemath{\sce}{e}
\safemath{\scf}{f}
\safemath{\scg}{g}
\safemath{\sch}{h}
\safemath{\sci}{i}
\safemath{\scj}{j}
\safemath{\sck}{k}
\safemath{\scl}{l}
\safemath{\scm}{m}
\safemath{\scn}{n}
\safemath{\sco}{o}
\safemath{\scp}{p}
\safemath{\scq}{q}
\safemath{\scr}{r}
\safemath{\scs}{s}
\safemath{\sct}{t}
\safemath{\scu}{u}
\safemath{\scv}{v}
\safemath{\scw}{w}
\safemath{\scx}{x}
\safemath{\scy}{y}
\safemath{\scz}{z}
\safemath{\rsca}{A}
\safemath{\rscb}{B}
\safemath{\rscc}{C}
\safemath{\rscd}{D}
\safemath{\rsce}{E}
\safemath{\rscf}{F}
\safemath{\rscg}{G}
\safemath{\rsch}{H}
\safemath{\rsci}{I}
\safemath{\rscj}{J}
\safemath{\rsck}{K}
\safemath{\rscl}{L}
\safemath{\rscm}{M}
\safemath{\rscn}{N}
\safemath{\rsco}{O}
\safemath{\rscp}{P}
\safemath{\rscq}{Q}
\safemath{\rscr}{R}
\safemath{\rscs}{S}
\safemath{\rsct}{T}
\safemath{\rscu}{U}
\safemath{\rscv}{V}
\safemath{\rscw}{W}
\safemath{\rscx}{X}
\safemath{\rscy}{Y}
\safemath{\rscz}{Z}
\safemath{\veca}{{\bf{a}}}
\safemath{\vecb}{{\bf{b}}}
\safemath{\vecc}{{\bf{c}}}
\safemath{\vecd}{{\bf{d}}}
\safemath{\vece}{{\bf{e}}}
\safemath{\vecf}{{\bf{f}}}
\safemath{\vecg}{{\bf{g}}}
\safemath{\vech}{{\bf{h}}}
\safemath{\veci}{{\bf{i}}}
\safemath{\vecj}{{\bf{j}}}
\safemath{\veck}{{\bf{k}}}
\safemath{\vecl}{{\bf{l}}}
\safemath{\vecm}{{\bf{m}}}
\safemath{\vecn}{{\bf{n}}}
\safemath{\veco}{{\bf{o}}}
\safemath{\vecp}{{\bf{p}}}
\safemath{\vecq}{{\bf{q}}}
\safemath{\vecr}{{\bf{r}}}
\safemath{\vecs}{{\bf{s}}}
\safemath{\vect}{{\bf{t}}}
\safemath{\vecu}{{\bf{u}}}
\safemath{\vecv}{{\bf{v}}}
\safemath{\vecw}{{\bf{w}}}
\safemath{\vecx}{{\bf{x}}}
\safemath{\vecy}{{\bf{y}}}
\safemath{\vecz}{{\bf{z}}}
\safemath{\veczero}{{\bf{0}}}
\safemath{\vecone}{{\bf{1}}}
\safemath{\rveca}{{\bf{A}}}
\safemath{\rvecb}{{\bf{B}}}
\safemath{\rvecc}{{\bf{C}}}
\safemath{\rvecd}{{\bf{D}}}
\safemath{\rvece}{{\bf{E}}}
\safemath{\rvecf}{{\bf{F}}}
\safemath{\rvecg}{{\bf{G}}}
\safemath{\rvech}{{\bf{H}}}
\safemath{\rveci}{{\bf{I}}}
\safemath{\rvecj}{{\bf{J}}}
\safemath{\rveck}{{\bf{K}}}
\safemath{\rvecl}{{\bf{L}}}
\safemath{\rvecm}{{\bf{M}}}
\safemath{\rvecn}{{\bf{N}}}
\safemath{\rveco}{{\bf{O}}}
\safemath{\rvecp}{{\bf{P}}}
\safemath{\rvecq}{{\bf{Q}}}
\safemath{\rvecr}{{\bf{R}}}
\safemath{\rvecs}{{\bf{S}}}
\safemath{\rvect}{{\bf{T}}}
\safemath{\rvecu}{{\bf{U}}}
\safemath{\rvecv}{{\bf{V}}}
\safemath{\rvecw}{{\bf{W}}}
\safemath{\rvecx}{{\bf{X}}}
\safemath{\rvecy}{{\bf{Y}}}
\safemath{\rvecz}{{\bf{Z}}}
\safemath{\matA}{{\bf{A}}}
\safemath{\matB}{{\bf{B}}}
\safemath{\matC}{{\bf{C}}}
\safemath{\matD}{{\bf{D}}}
\safemath{\matE}{{\bf{E}}}
\safemath{\matF}{{\bf{F}}}
\safemath{\matG}{{\bf{G}}}
\safemath{\matH}{{\bf{H}}}
\safemath{\matI}{{\bf{I}}}
\safemath{\matJ}{{\bf{J}}}
\safemath{\matK}{{\bf{K}}}
\safemath{\matL}{{\bf{L}}}
\safemath{\matM}{{\bf{M}}}
\safemath{\matN}{{\bf{N}}}
\safemath{\matO}{{\bf{O}}}
\safemath{\matP}{{\bf{\Pi}}}
\safemath{\matQ}{{\bf{Q}}}
\safemath{\matR}{{\bf{R}}}
\safemath{\matS}{{\bf{S}}}
\safemath{\matT}{{\bf{T}}}
\safemath{\matU}{{\bf{U}}}
\safemath{\matV}{{\bf{V}}}
\safemath{\matW}{{\bf{W}}}
\safemath{\matX}{{\bf{X}}}
\safemath{\matY}{{\bf{Y}}}
\safemath{\matZ}{{\bf{Z}}}
\safemath{\matzero}{{\bf{0}}}
\safemath{\rmatA}{{\bf{A}}}
\safemath{\rmatB}{{\bf{B}}}
\safemath{\rmatC}{{\bf{C}}}
\safemath{\rmatD}{{\bf{D}}}
\safemath{\rmatE}{{\bf{E}}}
\safemath{\rmatF}{{\bf{F}}}
\safemath{\rmatG}{{\bf{G}}}
\safemath{\rmatH}{{\bf{H}}}
\safemath{\rmatI}{{\bf{I}}}
\safemath{\rmatJ}{{\bf{J}}}
\safemath{\rmatK}{{\bf{K}}}
\safemath{\rmatL}{{\bf{L}}}
\safemath{\rmatM}{{\bf{M}}}
\safemath{\rmatN}{{\bf{N}}}
\safemath{\rmatO}{{\bf{O}}}
\safemath{\rmatP}{{\bf{P}}}
\safemath{\rmatQ}{{\bf{Q}}}
\safemath{\rmatR}{{\bf{R}}}
\safemath{\rmatS}{{\bf{S}}}
\safemath{\rmatT}{{\bf{T}}}
\safemath{\rmatU}{{\bf{U}}}
\safemath{\rmatV}{{\bf{V}}}
\safemath{\rmatW}{{\bf{W}}}
\safemath{\rmatX}{{\bf{X}}}
\safemath{\rmatY}{{\bf{Y}}}
\safemath{\rmatZ}{{\bf{Z}}}
\safemath{\rmatzero}{{\bf{0}}}
\safemath{\setA}{\mathcal{A}}
\safemath{\setB}{\mathcal{B}}
\safemath{\setC}{\mathcal{C}}
\safemath{\setD}{\mathcal{D}}
\safemath{\setE}{\mathcal{E}}
\safemath{\setF}{\mathcal{F}}
\safemath{\setG}{\mathcal{G}}
\safemath{\setH}{\mathcal{H}}
\safemath{\setI}{\mathcal{I}}
\safemath{\setJ}{\mathcal{J}}
\safemath{\setK}{\mathcal{K}}
\safemath{\setL}{\mathcal{L}}
\safemath{\setM}{\mathcal{M}}
\safemath{\setN}{\mathcal{N}}
\safemath{\setO}{\mathcal{O}}
\safemath{\setP}{\mathcal{P}}
\safemath{\setQ}{\mathcal{Q}}
\safemath{\setR}{\mathcal{R}}
\safemath{\setS}{\mathcal{S}}
\safemath{\setT}{\mathcal{T}}
\safemath{\setU}{\mathcal{U}}
\safemath{\setV}{\mathcal{V}}
\safemath{\setW}{\mathcal{W}}
\safemath{\setX}{\mathcal{X}}
\safemath{\setY}{\mathcal{Y}}
\safemath{\setZ}{\mathcal{Z}}
\safemath{\emptySet}{\varnothing}
\safemath{\opE}{\mathbb{E}}
\safemath{\opP}{\mathbb{P}}
\safemath{\tp}{T}
\safemath{\herm}{*}
\safemath{\pinv}{\dagger}
\safemath{\tr}{\operatorname{tr}}
\Nsafemath{\det}{\operatorname{det}}
\safemath{\rank}{\operatorname{rank}}
\safemath{\range}{\operatorname{range\,}}
\safemath{\spn}{\operatorname{span}}
\safemath{\spark}{\operatorname{spark}}
\Nsafemath{\Re}{\operatorname{Re}}
\Nsafemath{\Im}{\operatorname{im}}
\safemath{\complex}{\mathbb{C}}
\safemath{\reals}{\mathbb{R}}
\safemath{\integers}{\mathbb{Z}}
\safemath{\naturals}{\mathbb{N}}
\newcommand{\onorm}[1]{{|\hspace{-0.25truemm}|\hspace{-0.25truemm}|#1|\hspace{-0.25truemm}|\hspace{-0.25truemm}|}}
\begin{document}
  



  \def\Sy{s}
\def\Sz{t}
\makeatletter
\newcommand{\subalign}[1]{%
  \vcenter{%
    \Let@ \restore@math@cr \default@tag
    \baselineskip\fontdimen10 \scriptfont\tw@
    \advance\baselineskip\fontdimen12 \scriptfont\tw@
    \lineskip\thr@@\fontdimen8 \scriptfont\thr@@
    \lineskiplimit\lineskip
    \ialign{\hfil$\m@th\scriptstyle##$&$\m@th\scriptstyle{}##$\crcr
      #1\crcr
    }%
  }
}
\makeatother
\allowdisplaybreaks[2]
\chapter{Uncertainty Relations and Sparse Signal Recovery}
\vspace{-21truemm}
{\bf{Erwin Riegler and Helmut B\"olcskei}}

\section{Abstract}

This chapter provides a principled introduction to uncertainty relations\index{subject}{uncertainty relation} underlying sparse signal recovery. We start
with the seminal work by Donoho and Stark, 1989, which defines uncertainty relations\index{subject}{uncertainty relation} as upper bounds on the operator norm of
the band-limitation operator followed by the time-limitation operator, generalize this theory to arbitrary pairs of operators, and then
develop---out of this generalization---the coherence-based\index{subject}{coherence}  uncertainty relations\index{subject}{uncertainty relation} due to Elad and Bruckstein, 2002, as well as uncertainty relations\index{subject}{uncertainty relation} 
in terms of concentration of $1$-norm or $2$-norm. 
The theory is completed with the recently discovered set-theoretic uncertainty relations\index{subject}{uncertainty relation} which lead to best possible recovery thresholds
in terms of a general measure of parsimony, namely Minkowski dimension\index{subject}{Minkowski dimension}. 
We also elaborate on the remarkable connection between uncertainty relations\index{subject}{uncertainty relation} and the ``large sieve"\index{subject}{large sieve}, a family of inequalities developed in analytic number theory. 
It is finally shown how uncertainty relations\index{subject}{uncertainty relation} allow to establish fundamental limits of practical
signal recovery problems such as inpainting, declipping, super-resolution, and denoising
of signals corrupted by impulse noise or narrowband interference. Detailed proofs 
are provided throughout the chapter.

\section{Introduction}

The uncertainty principle in quantum mechanics says that certain pairs of physical properties of a particle, such as position and momentum, can  be known to within a limited precision only \cite{he30}. 
Uncertainty relations\index{subject}{uncertainty relation} in signal analysis  \cite{fa78,copr84,be94,fosi97} state that a signal and its Fourier transform 
can not both be arbitrarily well concentrated; corresponding mathematical formulations exist for  square-integrable or integrable functions 
\cite{dost89,dolo92}, for vectors in $(\complex ^m, \lVert\,\cdot\,\rVert_2)$ or  $(\complex ^m, \lVert\,\cdot\,\rVert_1)$ \cite{dost89,dolo92,elbr02,stkupobo12,kudubo12}, and for finite abelian groups \cite{te99,ta05}. 
These results feature prominently in many areas of the mathematical data sciences. Specifically, in compressed sensing \cite{dost89,dolo92,elbr02,stkupobo12,striagbo17,fora13} uncertainty relations\index{subject}{uncertainty relation} lead to sparse signal recovery thresholds, in Gabor and Wilson frame theory \cite{Groechenig2001} they characterize limits on the time-frequency localization of frame elements, in communications \cite{ga46} they play a fundamental role in the design of pulse shapes for
orthogonal frequency division multiplexing (OFDM) systems \cite{Boel2003}, in the theory of partial differential equations they serve to characterize existence and smoothness properties of solutions \cite{fe83}, and in 
coding theory they help to understand questions around the existence of good cyclic codes \cite{evkolu17}. 


This chapter provides a principled introduction to uncertainty relations\index{subject}{uncertainty relation} underlying sparse signal recovery,
starting with the seminal work by Donoho and Stark \cite{dost89}, ranging over the Elad-Bruckstein 
coherence-based\index{subject}{coherence} uncertainty relation\index{subject}{uncertainty relation} for general pairs of orthonormal bases \cite{elbr02}, later extended to general pairs
of dictionaries \cite{kudubo12}, to the recently discovered set-theoretic uncertainty relation\index{subject}{uncertainty relation} \cite{striagbo17} which leads to information-theoretic recovery thresholds for general
notions of parsimony.  We also elaborate on the remarkable connection \cite{dolo92} between uncertainty relations\index{subject}{uncertainty relation} for signals and their Fourier transforms---with concentration measured 
in terms of support---and the ``large sieve''\index{subject}{large sieve}, a family of inequalities involving trigonometric polynomials, originally
developed in the field of analytic number theory \cite{bo74,mo01}.

Uncertainty relations\index{subject}{uncertainty relation} play an important role in data science beyond sparse signal recovery, specifically in the sparse signal separation\index{subject}{signal separation} problem, 
which comprises numerous practically relevant applications such as (image or audio signal) inpainting, declipping, super-resolution, and the
recovery of signals corrupted by impulse noise or by narrowband interference. We provide a systematic treatment of the sparse signal separation\index{subject}{signal separation} 
problem and develop its limits out of uncertainty relations\index{subject}{uncertainty relation} for general pairs of dictionaries as introduced in \cite{kudubo12}. While the flavor of these
results is that beyond certain thresholds something is not possible, for example a nonzero vector can not be concentrated with respect to two different orthonormal bases 
beyond a certain limit, uncertainty relations\index{subject}{uncertainty relation} can also reveal that something unexpected is possible. 
Specifically, we demonstrate that signals that are sparse in certain bases can be recovered in a stable fashion from partial and noisy observations.

In practice one often encounters more general concepts of parsimony, such as, e.g., manifold structures and fractal sets.
Manifolds are prevalent in the data sciences, e.g., in compressed sensing \cite{bawa09,care09,elkubo10,capl11,albdekori18,ristbo15}, machine learning \cite{lz12}, image processing \cite{lufahe98,soze98}, and handwritten-digit recognition \cite{hidare97}. 
Fractal sets find application in image compression and in modeling of Ethernet traffic \cite{letawi94}. 
In the last part of this chapter, we develop an information-theoretic framework for sparse signal separation\index{subject}{signal separation} and recovery, which applies to 
arbitrary signals of ``low description complexity". The complexity measure our results are formulated in, namely Minkowski dimension\index{subject}{Minkowski dimension}, is agnostic to
signal structure and goes beyond the notion of sparsity\index{subject}{sparsity} in terms
of the number of nonzero entries or concentration in $1$-norm or $2$-norm.
The corresponding recovery thresholds are information-theoretic in the sense of applying to arbitrary signal structures,
and provide results of best possible nature that are, however, not constructive in terms of recovery algorithms.

To keep the exposition  simple and to elucidate the main conceptual aspects, we restrict ourselves to the finite-dimensional cases $(\complex ^m, \lVert\,\cdot\,\rVert_2)$  and $(\complex ^m, \lVert\,\cdot\,\rVert_1)$  throughout. 
References to uncertainty relations\index{subject}{uncertainty relation} for the infinite-dimensional case will be given wherever possible and appropriate. 
Some of the results in this chapter have not been reported before in the literature. 
Detailed proofs will be provided for most of the statements, with the goal of allowing the reader to acquire a technical working knowledge that can serve as a basis for own further research.   

The chapter is organized as follows. 
In Sections \ref{sec:l2} and \ref{sec:l1}, we derive  uncertainty relations\index{subject}{uncertainty relation} for vectors in $(\complex ^m, \lVert\,\cdot\,\rVert_2)$ and  $(\complex ^m, \lVert\,\cdot\,\rVert_1)$, respectively,  discuss the connection to the large sieve\index{subject}{large sieve},  present applications to  noisy signal recovery problems, and 
 establish a fundamental relation between  uncertainty relations\index{subject}{uncertainty relation} for sparse vectors and null-space properties\index{subject}{null-space property} of the accompanying dictionary matrices. 
Section \ref{sec2} is devoted to understanding the role of uncertainty relations\index{subject}{uncertainty relation} in  sparse  signal separation\index{subject}{signal separation} problems. 
In Section \ref{sec3}, we generalize the classical sparsity\index{subject}{sparsity} notion as used in compressed sensing to a more comprehensive concept of description complexity, namely, lower modified Minkowski dimension\index{subject}{Minkowski dimension! modified}, which in turn leads to a set-theoretic null-space property\index{subject}{null-space property! set-theoretic} and corresponding recovery thresholds.   
Section \ref{sec4} presents a large sieve\index{subject}{large sieve} inequality in $(\complex ^m, \lVert\,\cdot\,\rVert_2)$ that one of our results in Section \ref{sec:l2} is based on. 
Section \ref{sec:cont}  lists infinite-dimensional counterparts---available in the literature---to some of the results in this chapter. 
In Section \ref{thm:davidproof}, we provide a proof of the set-theoretic null-space property\index{subject}{null-space property! set-theoretic} stated in Section \ref{sec3}. 
Finally, Section \ref{sec:norm} contains results on operator norms used frequently in this chapter. 

\emph{Notation.} For  $\setA\subseteq\{1,\dots,m\}$, $\matD_\setA$ denotes the $m\times m$ diagonal  matrix with diagonal entries $(\matD_\setA)_{i,i}=1$ for $i\in\setA$, and $(\matD_\setA)_{i,i}=0$ else. 
With $\matU\in\complex^{m\times m}$ unitary and  $\setA\subseteq\{1,\dots,m\}$, we 
define the orthogonal projection  
$\matP_\setA(\matU)=\matU\matD_\setA\matU^\herm$ 
and set $\setW^{\matU,\setA}=\range(\matP_\setA(\matU))$. 
For $\vecx\in\complex^m$ and  $\setA\subseteq\{1,\dots,m\}$, we let $\vecx_\setA=\matD_\setA\vecx$. 
With $\matA\in \complex^{m\times m}$, 
$\onorm{\matA}_1=\max_{\vecx:\, \|\vecx\|_1=1}\|\matA\vecx\|_1$ refers to  the operator $1$-norm, 
$\onorm{\matA}_2=\max_{\vecx:\, \|\vecx\|_2=1}\|\matA\vecx\|_2$ designates the  operator $2$-norm,  
$\|\matA\|_2=\sqrt{\tr(\matA\matA^\herm)}$ is the Frobenius norm, and $\|\matA\|_1=\sum_{i,j=1}^m |A_{i,j}|$. 
The $m\times m$ DFT matrix $\matF$ has entry $(1/\sqrt{m})e^{-2\pi j kl/m}$ in its $k$-th row and $l$-th column for $k,l\in\{1,\dots,m\}$.
For  $x\in\reals$,  we set $[x]_+=\max\{x,0\}$.  
The vector $\vecx\in\complex^m$ is said to be $s$-sparse if it has at most $s$ nonzero entries.  
The open ball in $(\complex^m, \lVert\,\cdot\,\rVert_2)$ of radius $\rho$ centered at $\vecu\in \complex^m$ is denoted by $\setB_m(\vecu,\rho)$ and $V_m(\rho)$  refers to its volume. The indicator function on the set $\setA$ is $\ind{\setA}$. 
We use the convention  $0\cdot\infty=0$. 

\section{Uncertainty Relations\index{subject}{uncertainty relation} in  $(\complex ^m, \lVert\,\cdot\,\rVert_2)$}\label{sec:l2}

Donoho and Stark \cite{dost89} define uncertainty relations\index{subject}{uncertainty relation} as upper bounds on the operator norm of the band-limitation operator followed by the  time-limitation operator. 
We adopt this elegant concept and extend it to refer to an upper bound on the operator norm of a general orthogonal projection operator 
(replacing the band-limitation operator) followed by the ``time-limitation operator" $\matD_\setP$ as an uncertainty relation\index{subject}{uncertainty relation}. More specifically,  
let $\matU\in\complex^{m\times m}$ be a unitary matrix,  $\setP,\setQ\subseteq\{1,\dots,m\}$, and consider the orthogonal projection
$\matP_\setQ(\matU)$ onto the subspace $\setW^{\matU,\setQ}$ which is spanned by  $\{\vecu_i:i\in\setQ\}$. 
Let\footnote{We note that, for general unitary $\matA,\matB\in\complex^{m\times m}$, 
unitary invariance of   $\lVert\,\cdot\,\rVert_2$  yields 
 $\onorm{\matP_\setP(\matA)\matP_\setQ(\matB)}_2=\onorm{\matD_\setP\matP_\setQ(\matU)}_2$ with $\matU=\matA^\herm\matB$. 
 The situation where both the band-limitation and the time-limitation operator are replaced by general orthogonal projection operators can hence be reduced to the case considered here.} 
$\Delta_{\setP,\setQ}(\matU)=\onorm{\matD_\setP\matP_\setQ(\matU)}_2$.
In the setting of \cite{dost89} $\matU$ would correspond to the DFT matrix $\matF$ and 
$\Delta_{\setP,\setQ}(\matF)$ is the operator $2$-norm of the band-limitation operator followed by the  time-limitation operator, both in finite dimensions. 
By Lemma \ref{lem:norm} we have  
\begin{align}\label{eq:Delta2}
 \Delta_{\setP,\setQ}(\matU)=\max_{\vecx\in\setW^{\matU,\setQ}\setminus\{\veczero\}}\frac{\|\vecx_\setP\|_2}{\|\vecx\|_2}.
\end{align}  
An uncertainty relation\index{subject}{uncertainty relation} in $(\complex ^m, \lVert\,\cdot\,\rVert_2)$  is  an upper bound of the form $\Delta_{\setP,\setQ}(\matU)\leq c$ with $c\geq 0$, and states that
 $\|\vecx_\setP\|_2\leq c \|\vecx\|_2$  for all  $\vecx\in\setW^{\matU,\setQ}$.  
$\Delta_{\setP,\setQ}(\matU)$ hence quantifies how well a vector supported on $\setQ$ in the basis $\matU$ can be concentrated on $\setP$. 
Note that an uncertainty relation\index{subject}{uncertainty relation} in $(\complex ^m, \lVert\,\cdot\,\rVert_2)$ is  nontrivial only if $c<1$.   
Application of Lemma  \ref{lem:ineqonorm} now yields 
\begin{align}\label{eq:unl21}
\frac{\|\matD_\setP\matP_\setQ(\matU)\|_2}{\sqrt{\rank(\matD_\setP\matP_\setQ(\matU))}}
\leq
\Delta_{\setP,\setQ}(\matU)
&\leq \|\matD_\setP\matP_\setQ(\matU)\|_2, 
\end{align}
where the upper bound  constitutes an uncertainty relation\index{subject}{uncertainty relation} and 
the lower bound will allow us  to assess its tightness. 
Next, note that
\begin{align}
\|\matD_\setP\matP_\setQ(\matU)\|_2=\sqrt{\tr(\matD_\setP\matP_\setQ(\matU))}
\end{align} 
and 
\begin{align}
\rank(\matD_\setP\matP_\setQ(\matU))
&=\rank(\matD_\setP\matU\matD_\setQ\matU^\herm)\\
&\leq \min\{|\setP|,|\setQ|\}, \label{eq:inequnc}
\end{align} 
where \eqref{eq:inequnc} follows from  $\rank(\matD_\setP\matU\matD_\setQ)\leq \min\{|\setP|,|\setQ|\}$ and \cite[Property (c), Chapter 0.4.5]{hojo13}. 
When used in  \eqref{eq:unl21} this implies 
\begin{align}\label{eq:unl21a}
\sqrt{\frac{\tr(\matD_\setP\matP_\setQ(\matU))}{\min\{|\setP|,|\setQ|\}}}
\leq
\Delta_{\setP,\setQ}(\matU)
&\leq\sqrt{\tr(\matD_\setP\matP_\setQ(\matU))}.
\end{align}
Particularizing to  $\matU=\matF$, we obtain   
\begin{align}
\sqrt{\tr(\matD_\setP\matP_\setQ(\matF))}\label{eq:dost1}
&=\sqrt{\tr(\matD_\setP\matF\matD_\setQ\matF^\herm)}\\
&=\sqrt{\sum_{i\in\setP}\sum_{j\in\setQ}|\matF_{i,j}|^2}\\
&=\sqrt{\frac{|\setP| |\setQ|}{m}},   \label{eq:dost2}
\end{align}
so that \eqref{eq:unl21a} reduces to 
\begin{align}\label{eq:unl21F}
\sqrt{\frac{\max\{|\setP|,|\setQ|\}}{m}}
\leq
\Delta_{\setP,\setQ}(\matF)
&\leq \sqrt{\frac{|\setP| |\setQ|}{m}}. 
\end{align}
There exist sets $\setP,\setQ\subseteq\{1,\dots,m\}$ that saturate both  bounds in \eqref{eq:unl21F}, e.g.,  $\setP=\{1\}$ and $\setQ=\{1,\dots,m\}$, which yields  
$\sqrt{\max\{|\setP|,|\setQ|\}/m}=\sqrt{|\setP| |\setQ|/m}=1$ and therefore $\Delta_{\setP,\setQ}(\matF)=1$.  
An example of sets $\setP,\setQ\subseteq\{1,\dots,m\}$ saturating only the lower bound in \eqref{eq:unl21F} is as follows.  
Take  $n$ to divide $m$ and set
\begin{align} \label{eq:setP}
\setP=\mleft\{\frac{m}{n},\frac{2m}{n},\dots,\frac{(n-1)m}{n},m\mright\}
\end{align}
and
\begin{align}\label{eq:setQ}
\setQ=\{l+1,\dots,l+n\}
\end{align} with 
$l\in\{1,\dots,m\}$ and $\setQ$ interpreted circularly in $\{1,\dots,m\}$.
Then, the upper bound in \eqref{eq:unl21F} is  
\begin{align}\label{eq:exatrace}
\sqrt{\frac{|\setP| |\setQ|}{m}}={\frac{n}{\sqrt{m}}}, 
\end{align}
whereas the lower bound 
becomes 
\begin{align}
\sqrt{\frac{\max\{|\setP|,|\setQ|\}}{m}}=\sqrt{\frac{n}{m}}. 
\end{align}
Thus, for $m\to\infty$  with fixed ratio  $m/n$, 
 the upper bound in \eqref{eq:unl21F} tends to infinity whereas the corresponding lower bound 
 remains constant. 
The following result states that the lower bound in   \eqref{eq:unl21F} is tight for $\setP$ and $\setQ$ as in  \eqref{eq:setP} and \eqref{eq:setQ}, respectively. This implies a lack of tightness of the uncertainty 
relation $\Delta_{\setP,\setQ}(\matF)\leq \sqrt{|\setP| |\setQ|/m}$ by a factor of $\sqrt{n}$.  
The large sieve-based\index{subject}{large sieve} uncertainty relation\index{subject}{uncertainty relation} developed in the next section will be seen to remedy this problem.  

\begin{lemma}\cite[Theorem 11]{dost89}\label{lem:lemex}
Let $n$ divide $m$ and consider  
\begin{align} \label{eq:setP2}
\setP=\mleft\{\frac{m}{n},\frac{2m}{n},\dots,\frac{(n-1)m}{n},m\mright\}
\end{align}
and
\begin{align}\label{eq:setQ2}
\setQ=\{l+1,\dots,l+n\}
\end{align} with 
$l\in\{1,\dots,m\}$ and $\setQ$ interpreted circularly in $\{1,\dots,m\}$.
Then, $\Delta_{\setP,\setQ}(\matF)=\sqrt{n/m}$. 
\end{lemma}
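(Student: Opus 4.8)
The plan is to compute $\Delta_{\setP,\setQ}(\matF)$ exactly by identifying it with the largest singular value of a small, highly structured matrix. Since $\matF^\herm$ is unitary and the operator $2$-norm is invariant under right multiplication by a unitary matrix, I would first write
\begin{align}
\Delta_{\setP,\setQ}(\matF)=\onorm{\matD_\setP\matF\matD_\setQ\matF^\herm}_2=\onorm{\matD_\setP\matF\matD_\setQ}_2.
\end{align}
The matrix $\matD_\setP\matF\matD_\setQ$ is obtained from $\matF$ by zeroing out all rows outside $\setP$ and all columns outside $\setQ$; deleting these zero rows and columns does not change the nonzero singular values, so it suffices to determine the largest singular value of the $n\times n$ submatrix $\tilde{\matM}$ of $\matF$ indexed by rows in $\setP$ and columns in $\setQ$ (recall $|\setP|=|\setQ|=n$).

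Next I would compute the entries of $\tilde{\matM}$ explicitly, exploiting the arithmetic structure of $\setP$ and $\setQ$. Index the rows of $\setP$ by $r\in\{1,\dots,n\}$, so that the $r$-th row corresponds to $i=rm/n$ (an integer because $n\mid m$), and index the columns of $\setQ$ by $s\in\{1,\dots,n\}$, so that the $s$-th column is $c_s\in\{1,\dots,m\}$ with $c_s\equiv l+s\pmod m$. Writing $j$ for the imaginary unit, the relevant entry of $\matF$ is
\begin{align}
\tilde{\matM}_{r,s}=\frac{1}{\sqrt{m}}e^{-2\pi j\,(rm/n)\,c_s/m}=\frac{1}{\sqrt{m}}e^{-2\pi j\,r c_s/n}.
\end{align}
Here the key simplification is that $n\mid m$ forces $c_s\equiv l+s\pmod n$, so the circular wrap-around of $\setQ$ is immaterial and $\tilde{\matM}_{r,s}=(1/\sqrt{m})e^{-2\pi j\,r(l+s)/n}$.

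Finally I would show that $\tilde{\matM}$ is a scalar multiple of a unitary matrix, which pins down all of its singular values. Factoring the entry as $\tilde{\matM}_{r,s}=(1/\sqrt{m})\,e^{-2\pi j r l/n}\,e^{-2\pi j r s/n}$ exhibits $\tilde{\matM}$ as a diagonal phase matrix times an $n$-point DFT matrix (up to the normalization $\sqrt{n/m}$), both unitary. Equivalently, and perhaps most cleanly, I would verify directly that
\begin{align}
(\tilde{\matM}\tilde{\matM}^\herm)_{r,r'}=\frac{1}{m}e^{-2\pi j(r-r')l/n}\sum_{s=1}^{n}e^{-2\pi j(r-r')s/n}=\frac{n}{m}\,\delta_{r,r'},
\end{align}
where the geometric sum over a full period vanishes unless $r=r'$ (using $r,r'\in\{1,\dots,n\}$). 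Hence $\tilde{\matM}\tilde{\matM}^\herm=(n/m)\matI$, so every singular value of $\tilde{\matM}$ equals $\sqrt{n/m}$, giving $\Delta_{\setP,\setQ}(\matF)=\onorm{\tilde{\matM}}_2=\sqrt{n/m}$. The only real subtlety, and the step I would be most careful about, is the index bookkeeping in the second paragraph: confirming that $i=rm/n$ ranges over $\setP$, that the exponent collapses from modulus $m$ to modulus $n$, and that the circular interpretation of $\setQ$ does not affect $\tilde{\matM}_{r,s}$. Everything after that is a one-line orthogonality computation.
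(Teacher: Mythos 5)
Your proof is correct, and it rests on the same core insight as the paper's: since $n$ divides $m$, the $\setP\times\setQ$ compression of $\matF$ is, up to phases, a rescaled $n$-point DFT, and the $n$ consecutive columns in $\setQ$ sweep out a full period modulo $n$. The packaging, however, is genuinely different. The paper first invokes Lemma \ref{lem:norm} together with left unitary invariance to reduce to $\onorm{\matD_\setQ\matF^\herm\matD_\setP}_2$, then evaluates the ratio $\|\matD_\setQ\matF^\herm\vecx\|_2/\|\vecx\|_2$ for every $\vecx$ supported on $\setP$ via the change of variables $y_s=x_{ms/n}$, citing unitarity of the $n\times n$ DFT to conclude that the ratio is identically $\sqrt{n/m}$. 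You instead cancel $\matF^\herm$ directly by right unitary invariance (bypassing Lemma \ref{lem:norm} altogether), extract the $n\times n$ submatrix indexed by $\setP$ and $\setQ$, and verify by a one-line geometric-sum computation that its Gram matrix equals $(n/m)\matI$. These final steps are mathematically equivalent (an operator is a scaled isometry if and only if its Gram matrix is a scaled identity), but your version is self-contained in that it proves the needed orthogonality rather than quoting DFT unitarity as a black box, and it makes explicit that \emph{all} $n$ singular values equal $\sqrt{n/m}$; the paper's version, in turn, slots the computation into its pre-built lemma infrastructure and displays the scaled-isometry structure at the level of vectors. The index bookkeeping you flag as the delicate point — the collapse $c_s\equiv l+s \pmod{n}$, which is exactly where the hypothesis $n\mid m$ and the circular interpretation of $\setQ$ enter — is handled correctly.
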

\begin{proof} 
We have   
\begin{align}
\Delta_{\setP,\setQ}(\matF)
&=\onorm{\matP_\setQ(\matF)\matD_\setP}_2\label{eq:stepa1s}\\
&=\onorm{\matD_\setQ\matF^\herm\matD_\setP}_2\label{eq:stepa1sb}\\
&=\max_{\vecx:\,\|\vecx\|_2=1}\|\matD_\setQ\matF^\herm\matD_\setP\vecx\|_2\\
&=\max_{\vecx:\,\vecx\neq\veczero}\frac{\|\matD_\setQ\matF^\herm\vecx_\setP\|_2}{\|\vecx\|_2}\\
&= \max_{ 
\subalign{\vecx:\,\vecx&=\vecx_\setP\\
\phantom{\vecx:\,}\vecx&\neq\veczero
}}
\frac{\|\matD_\setQ\matF^\herm\vecx\|_2}{\|\vecx\|_2},\label{eq:stepa2}
\end{align}
where in \eqref{eq:stepa1s} we applied Lemma \ref{lem:norm} and in 
\eqref{eq:stepa1sb} we used unitary invariance of   $\lVert\,\cdot\,\rVert_2$.  
Next, consider an arbitrary but fixed $\vecx\in\complex^m$ with $\vecx=\vecx_\setP$ and define 
$\vecy\in\complex^n$ according to  $y_s=x_{ms/n}$ for $s=1,\dots,n$. 
It follows that 
\begin{align}
\|\matD_\setQ\matF^\herm\vecx\|^2_2\label{eq:cdcdd1}
&=\frac{1}{{m}}\sum_{q\in\setQ}\Big|\sum_{p\in\setP}x_p\,e^{\frac{2\pi j pq}{m}}\Big|^2\\
&=\frac{1}{{m}}\sum_{q\in\setQ}\Big|\sum_{s=1}^{n}x_{ms/n}\,e^{\frac{2\pi j sq}{n}}\Big|^2\\
&=\frac{1}{{m}}\sum_{q\in\setQ}\Big|\sum_{s=1}^{n}y_{s}\,e^{\frac{2\pi j sq}{n}}\Big|^2\\ 
&=\frac{n}{m}\|\matF^\herm\vecy\|_2^2\label{eq:cdcdd3}\\
&=\frac{n}{m}\|\vecy\|_2^2,\label{eq:cdcdd2}
\end{align}
where $\matF$ in \eqref{eq:cdcdd3} is the $n\times n$ DFT matrix and 
in  \eqref{eq:cdcdd2} we used unitary invariance of $\lVert\,\cdot\,\rVert_2$. With \eqref{eq:cdcdd1}--\eqref{eq:cdcdd2} and 
$\|\vecx\|_2=\|\vecy\|_2$ in \eqref{eq:stepa2}, we get $\Delta_{\setP,\setQ}(\matF)=\sqrt{n/m}$. 
\end{proof}  

\subsection{Uncertainty Relations\index{subject}{uncertainty relation} Based on the Large Sieve\index{subject}{large sieve}}

The uncertainty relation\index{subject}{uncertainty relation} in \eqref{eq:unl21a} is very crude as it simply upper-bounds the operator $2$-norm by the Frobenius norm. 
For $\matU=\matF$ a more sophisticated  upper bound on  $\Delta_{\setP,\setQ}(\matF)$ was reported in \cite[Theorem 12]{dolo92}. 
The proof of this result establishes a remarkable connection to the so-called ``large sieve''\index{subject}{large sieve}, a family of inequalities involving trigonometric polynomials originally  developed in the field of analytic number theory \cite{bo74,mo01}. We next present a slightly improved and generalized version of \cite[Theorem 12]{dolo92}. 

\begin{theorem}\label{thm:nyquist}
Let $\setP\subseteq \{1,\dots,m\}$, $l,n\in\{1,\dots,m\}$, and 
\begin{align}
\setQ=\{l+1,\dots,l+n\}
\end{align} with $\setQ$ interpreted circularly in $\{1,\dots,m\}$. For $\lambda\in(0,m]$, we define the 
circular Nyquist density\index{subject}{circular Nyquist density} $\rho(\setP,\lambda)$ according to 
\begin{align}\label{eq:Nyquist}
\rho(\setP,\lambda)=\frac{1}{\lambda}\max_{r\in[0,m)}|\widetilde{\setP}\cap(r,r+\lambda)|, 
\end{align}
where $\widetilde{\setP}=\setP\cup\{m+p:p\in\setP\}$. 
Then, 
\begin{align}\label{eq:temsieve}
\Delta_{\setP,\setQ}(\matF)\leq \sqrt{\mleft(\frac{\lambda(n-1)}{m}+1\mright) \rho(\setP,\lambda)}
\end{align}
for all $\lambda\in(0,m]$. 
\end{theorem}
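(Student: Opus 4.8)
The plan is to convert the operator-norm quantity $\Delta_{\setP,\setQ}(\matF)$ into a sampling expression for a trigonometric polynomial of length $n$, and then to invoke the large sieve inequality of Section~\ref{sec4}. First I would argue, exactly as in the opening steps of the proof of Lemma~\ref{lem:lemex}, that
\begin{align}
\Delta_{\setP,\setQ}(\matF)
=\onorm{\matD_\setP\matF\matD_\setQ\matF^\herm}_2
=\onorm{\matD_\setP\matF\matD_\setQ}_2,
\end{align}
where the second equality uses $\onorm{\matX\matF^\herm}_2=\onorm{\matX}_2$ for the unitary factor $\matF^\herm$. Since $\matD_\setQ$ projects onto the coordinates in $\setQ$, the maximizer may be taken supported on $\setQ$, so that
\begin{align}\label{eq:sketchmax}
\Delta_{\setP,\setQ}(\matF)^2
=\max_{\vecc=\vecc_\setQ,\,\vecc\neq\veczero}
\frac{\|\matD_\setP\matF\vecc\|_2^2}{\|\vecc\|_2^2}.
\end{align}

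Next I would evaluate the numerator explicitly. For $\vecc$ supported on $\setQ$,
\begin{align}
\|\matD_\setP\matF\vecc\|_2^2
=\frac{1}{m}\sum_{p\in\setP}\Big|\sum_{q\in\setQ}c_q\,e^{-2\pi j pq/m}\Big|^2
=\frac{1}{m}\sum_{p\in\setP}|T(p/m)|^2,
\end{align}
where $T(\xi)=\sum_{q\in\setQ}c_q\,e^{-2\pi j q\xi}$ is a trigonometric polynomial whose frequencies are the $n$ consecutive integers in $-\setQ$, and hence of span $n-1$. This is the crucial structural observation: the interval structure of $\setQ$ makes $T$ a \emph{length-$n$} trigonometric polynomial, while the set $\setP$ enters only through the sampling points $p/m$. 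It then remains to bound $\sum_{p\in\setP}|T(p/m)|^2$ from above, which is precisely a large sieve inequality. I would invoke the result of Section~\ref{sec4}, which in the present notation reads
\begin{align}\label{eq:sketchsieve}
\sum_{p\in\setP}|T(p/m)|^2
\leq\bigl(\lambda(n-1)+m\bigr)\,\rho(\setP,\lambda)\,\|\vecc\|_2^2
\end{align}
for every $\lambda\in(0,m]$. Substituting \eqref{eq:sketchsieve} into \eqref{eq:sketchmax}, dividing by $m\|\vecc\|_2^2$, and noting $\tfrac{1}{m}(\lambda(n-1)+m)=\tfrac{\lambda(n-1)}{m}+1$ yields \eqref{eq:temsieve}.

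The main obstacle is \eqref{eq:sketchsieve} itself, i.e.\ the large sieve bound, which is the genuine content relegated to Section~\ref{sec4}. The difficulty is that the sampling points $p/m$, $p\in\setP$, need not be uniformly separated, since $\setP$ may contain tightly clustered indices. The classical large sieve assumes $\delta$-separated sampling points and produces a factor of the form $(n-1)+\delta^{-1}$; here separation is replaced by the circular Nyquist density $\rho(\setP,\lambda)$ measured in windows of length $\lambda$ in the index variable (equivalently, length $\lambda/m$ in the variable $\xi=p/m$, which accounts for the factor $m$ in \eqref{eq:sketchsieve}), and one optimizes over the scale $\lambda$. Establishing \eqref{eq:sketchsieve} requires either a Selberg/Beurling extremal majorant of $\ind{\setP}$ whose Fourier transform is controlled on the frequency band $\{-(n-1),\dots,n-1\}$ of $|T|^2$, or a Montgomery--Vaughan duality argument combined with a Hilbert-inequality estimate adapted to the density condition. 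I would finally note that passing from $\setQ$ to the frequency set $-\setQ$ of $T$ is harmless, since $-\setQ$ is again an interval of $n$ consecutive integers and $\rho(\setP,\lambda)$ depends only on $\setP$.
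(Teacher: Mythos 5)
Your proposal is correct and takes essentially the same route as the paper's proof: both reduce $\Delta_{\setP,\setQ}(\matF)$ to sampling a length-$n$ trigonometric polynomial at the points $p/m$, $p\in\setP$, and then invoke the large sieve inequality of Section~\ref{sec4} (Lemma~\ref{lem:sievel2}) with $\delta=\lambda/m$, converting $\sup_{r}\mu\mleft(\mleft(r,r+\delta\mright)\mright)$ into $\lambda\,\rho(\setP,\lambda)$ exactly as you describe. The sieve bound you flag as the main obstacle is precisely what Lemma~\ref{lem:sievel2} supplies (proved there via Vaaler's Selberg-type extremal construction), so no additional Montgomery--Vaughan machinery is needed.
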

\begin{proof}
If $\setP=\emptyset$, then $\Delta_{\setP,\setQ}(\matF)=0$  as a consequence of $\matP_\emptyset(\matF)=\matzero$ and \eqref{eq:temsieve} holds trivially. Suppose  now that  $\setP\neq\emptyset$, consider an arbitrary but fixed $\vecx\in\setW^{\matF,\setQ}$ with $\|\vecx\|_2=1$,  and set $\veca=\matF^\herm\vecx$. 
Then, $\veca=\veca_\setQ$ and, by unitarity of $\matF$, $\|\veca\|_2=1$.  
We have 
\begin{align}
|x_p|^2
&=|(\matF\veca)_p|^2\label{eq:steppoly1}\\
&=\frac{1}{m}\biggl|\sum_{q\in\setQ} a_qe^{-\frac{2\pi jpq}{m}}\biggr|^2\\
&=\frac{1}{m}\biggl|\sum_{k=1}^n a_k e^{-\frac{2\pi jpk}{m}}\biggr|^2\\
&=\frac{1}{m}\mleft|\psi\mleft(\frac{p}{m}\mright)\mright|^2\quad\text{for}\ p\in\{1,\dots,m\},\label{eq:steppoly2} 
\end{align}
where we defined the $1$-periodic trigonometric  polynomial $\psi(s)$ according to 
\begin{align}
\psi(s)=\sum_{k=1}^{n}a_k e^{-2\pi j ks}. 
\end{align}
Next, let $\nu_t$ denote the unit Dirac measure centered at $t\in\reals$ and set $\mu=\sum_{p\in\setP}\nu_{p/m}$ with $1$-periodic extension outside $[0,1)$. 
Then, 
\begin{align}
\|\vecx_\setP\|_2^2
&=\frac{1}{m}\sum_{p\in\setP}\mleft|\psi\mleft(\frac{p}{m}\mright)\mright|^2\label{eq:usepolyexp}\\
&=\frac{1}{m}\int_{[0,1)} |\psi(s)|^2\mathrm d\mu(s) \label{eq:diracS}\\
&\leq  \mleft(\frac{n-1}{m}+\frac{1}{\lambda}\mright)\sup_{r\in[0,1)}\mu\mleft(\mleft(r,r+\frac{\lambda}{m}\mright)\mright)\label{eq:usesieve}
\end{align}
for all $\lambda\in(0,m]$, where 
\eqref{eq:usepolyexp} is by \eqref{eq:steppoly1}--\eqref{eq:steppoly2} and 
in \eqref{eq:usesieve} we applied the large sieve\index{subject}{large sieve} inequality  Lemma \ref{lem:sievel2} with $\delta=\lambda/m$ and $\|\veca\|_2=1$. 
Now, 
\begin{align}
&\sup_{r\in[0,1)} \mu\mleft(\mleft(r,r+\frac{\lambda}{m}\mright)\mright)\label{eq:fffhh0}\\
&=\sup_{r\in[0,m)}\sum_{p\in\setP}(\nu_{p}((r,r+\lambda))+\nu_{m+p}((r,r+\lambda)))\label{eq:fffhh2}\\
&=\max_{r\in[0,m)}|\widetilde{\setP}\cap(r,r+\lambda)|\\
&=\lambda\, \rho(\setP,\lambda)\quad \text{for all}\ \lambda\in(0,m],\label{eq:fffhh1}
\end{align}
where in \eqref{eq:fffhh2} we used the $1$-periodicity of $\mu$. 
Using \eqref{eq:fffhh0}--\eqref{eq:fffhh1} in \eqref{eq:usesieve} yields
\begin{align}
\|\vecx_\setP\|_2^2
&\leq\mleft(\frac{\lambda(n-1)}{m}+1\mright) \rho(\setP,\lambda)\quad \text{for all}\ \lambda\in(0,m].
\end{align}
As $\vecx\in\setW^{\matF,\setQ}$ with $\|\vecx\|_2=1$ was arbitrary, we conclude that 
\begin{align}
\Delta^2_{\setP,\setQ}(\matF)
&=\max_{\vecx\in\setW^{\matF,\setQ}\setminus\{\veczero\}}\frac{\|\vecx_\setP\|^2_2}{\|\vecx\|^2_2}\label{eq:laststep}\\
&\leq\mleft(\frac{\lambda(n-1)}{m}+1\mright) \rho(\setP,\lambda)\quad \text{for all}\ \lambda\in(0,m],
\end{align}
thereby finishing the proof. 
\end{proof}
Theorem \ref{thm:nyquist} slightly improves upon \cite[Theorem 12]{dolo92} by virtue of applying to more general sets $\setQ$ and defining the 
circular Nyquist density\index{subject}{circular Nyquist density} in \eqref{eq:Nyquist} in terms of open intervals $(r,r+\lambda)$. 

We next apply Theorem \ref{thm:nyquist} to specific choices of $\setP$ and $\setQ$.  
First, consider $\setP=\{1\}$ and $\setQ=\{1,\dots,m\}$, 
which were shown to saturate the upper and the lower  bound in \eqref{eq:unl21F}, leading to  $\Delta_{\setP,\setQ}(\matF)=1$. Since $\setP$ consists of a single point,  $\rho(\setP,\lambda)=1/\lambda$ for all $\lambda\in(0,m]$.
Thus, Theorem \ref{thm:nyquist} with $n=m$ yields 
\begin{align}\label{eq:ssss1}
\Delta_{\setP,\setQ}(\matF)\leq \sqrt{\frac{m-1}{m}+\frac{1}{\lambda}}\quad\text{for all}\ \lambda\in (0,m]. 
\end{align}
Setting $\lambda=m$ in \eqref{eq:ssss1}  yields  $\Delta_{\setP,\setQ}(\matF)\leq 1$.

Next, consider $\setP$ and $\setQ$ as in  \eqref{eq:setP} and \eqref{eq:setQ}, respectively, which, as already mentioned, have the uncertainty relation\index{subject}{uncertainty relation} in \eqref{eq:unl21F} lacking tightness by a factor of $\sqrt{n}$.  
Since $\setP$ consists of points  spaced  $m/n$ apart,  we get $\rho(\setP,\lambda)=1/\lambda$ for all $\lambda\in(0,m/n]$.
The upper bound \eqref{eq:temsieve} now becomes 
\begin{align}\label{eq:rrrr1}
\Delta_{\setP,\setQ}(\matF)\leq \sqrt{\frac{n-1}{m}+\frac{1}{\lambda}}\quad\text{for all}\ \lambda\in \mleft(0,\frac{m}{n}\mright]. 
\end{align}
Setting  $\lambda=m/n$ in \eqref{eq:rrrr1} yields 
\begin{align}\label{eq:exasieve}
\Delta_{\setP,\setQ}(\matF)\leq\sqrt{(2n-1)/m}\leq \sqrt{2}\sqrt{n/m}, 
\end{align}
which is tight up to a factor of $\sqrt{2}$ (cf. Lemma \ref{lem:lemex}).  
We hasten to add, however, that the large sieve\index{subject}{large sieve} technique applies to $\matU=\matF$ only. 

\subsection{Coherence-based\index{subject}{coherence} Uncertainty Relation\index{subject}{uncertainty relation}}

We next present an uncertainty relation\index{subject}{uncertainty relation} that is of simple form and applies to general unitary $\matU$.  
To this end, we first introduce the concept of coherence\index{subject}{coherence}  of a matrix. 
\begin{definition}\label{def:coh}
For $\matA=(\veca_1\dots\veca_n)\in\complex^{m\times n}$ with columns $\lVert\,\cdot\,\rVert_2$-normalized to $1$, the  coherence\index{subject}{coherence}  is 
defined as $\mu(\matA)=\max_{i\neq j}|\veca_i^\herm\veca_j|$.
\end{definition}

We  have the following coherence-based\index{subject}{coherence}  uncertainty relation\index{subject}{uncertainty relation} valid for  general unitary $\matU$.  

\begin{lemma}\label{lem:U} 
Let $\matU\in\complex^{m\times m}$ be  unitary  and  $\setP,\setQ\subseteq\{1,\dots,m\}$.  Then, 
\begin{align}\label{eq:normineq}
\Delta_{\setP,\setQ}(\matU)\leq \sqrt{|\setP||\setQ|}\,\mu(\congmat{\matI}{\matU}).  
\end{align}
\end{lemma}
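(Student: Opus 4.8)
The plan is to reduce the claimed bound to the Frobenius-norm upper bound on $\Delta_{\setP,\setQ}(\matU)$ already recorded in the right-hand inequality of \eqref{eq:unl21a}, and then to control the entries of $\matU$ uniformly by the coherence $\mu(\congmat{\matI}{\matU})$. The whole argument rests on recognizing that the coherence of $\congmat{\matI}{\matU}$ is nothing but the largest-magnitude entry of $\matU$.

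First I would unpack the coherence term. The matrix $\congmat{\matI}{\matU}\in\complex^{m\times 2m}$ has as its columns the standard basis vectors $\vece_1,\dots,\vece_m$ followed by the columns $\vecu_1,\dots,\vecu_m$ of $\matU$, all of which are already $\lVert\,\cdot\,\rVert_2$-normalized to $1$ (the $\vece_i$ trivially, the $\vecu_j$ by unitarity), so Definition \ref{def:coh} applies verbatim. Among distinct columns, any two $\vece_i,\vece_j$ with $i\neq j$ are orthogonal, and any two $\vecu_i,\vecu_j$ with $i\neq j$ are orthogonal by unitarity of $\matU$; the only inner products that can be nonzero are the cross terms $\vece_i^\herm\vecu_j=U_{i,j}$, which are taken between distinct columns of $\congmat{\matI}{\matU}$ for \emph{every} pair $(i,j)$. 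Hence $\mu(\congmat{\matI}{\matU})=\max_{i,j}|U_{i,j}|$.

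Next I would invoke the upper bound $\Delta_{\setP,\setQ}(\matU)\leq\sqrt{\tr(\matD_\setP\matP_\setQ(\matU))}$ from \eqref{eq:unl21a} and evaluate the trace. Expanding $\matP_\setQ(\matU)=\matU\matD_\setQ\matU^\herm$ exactly as in the passage from \eqref{eq:dost1} to \eqref{eq:dost2}, but keeping $\matU$ general rather than specializing to $\matF$, gives
\[
\tr(\matD_\setP\matU\matD_\setQ\matU^\herm)=\sum_{i\in\setP}\sum_{j\in\setQ}|U_{i,j}|^2 .
\]
Bounding each summand by $\mu(\congmat{\matI}{\matU})^2$ yields $\tr(\matD_\setP\matP_\setQ(\matU))\leq|\setP||\setQ|\,\mu(\congmat{\matI}{\matU})^2$, and taking square roots delivers \eqref{eq:normineq}. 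There is no genuine obstacle here; the only point requiring care is the bookkeeping in the first step, namely making sure that the cross inner products $\vece_i^\herm\vecu_j$ are counted for all $(i,j)$ (they always involve distinct columns of the concatenated matrix, including $i=j$), so that the coherence equals $\max_{i,j}|U_{i,j}|$ rather than a maximum restricted to off-diagonal entries. Everything else is a direct application of an inequality and a trace identity already established earlier in the section.
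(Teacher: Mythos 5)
Your proof is correct and takes essentially the same route as the paper's: both apply the Frobenius-norm bound \eqref{eq:unl21a}, expand the trace as $\sum_{i\in\setP}\sum_{j\in\setQ}|U_{i,j}|^2$, and bound each term by the squared coherence. The only difference is that you spell out explicitly why $\mu(\congmat{\matI}{\matU})=\max_{i,j}|U_{i,j}|$ (including the $i=j$ cross terms), a point the paper leaves implicit with the phrase ``by the definition of coherence''---a worthwhile clarification, but not a different argument.
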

\begin{proof}
The claim follows from 
\begin{align}
\Delta_{\setP,\setQ}^2(\matU)
&\leq\tr(\matD_\setP\matU\matD_\setQ\matU^\herm)\label{eq:ineqnorm2aa}\\
&=\sum_{k\in\setP}\sum_{l\in\setQ} |\matU_{k,l}|^2\\
&\leq |\setP||\setQ|\max_{k, l} |\matU_{k,l}|^2\label{eq:ineqnorm2a}\\
&=|\setP||\setQ|\,\mu^2(\congmat{\matI}{\matU}), \label{eq:ineqnorm2} 
\end{align}
where \eqref{eq:ineqnorm2aa} is by \eqref{eq:unl21a} and in \eqref{eq:ineqnorm2} we used the definition of coherence\index{subject}{coherence}.  
\end{proof}
Since $\mu(\congmat{\matI}{\matF})=1/\sqrt{m}$, Lemma \ref{lem:U} particularized to $\matU=\matF$ recovers the upper bound in   \eqref{eq:unl21F}.  

\subsection{Concentration Inequalities}\label{sec:concentration}

As mentioned at the beginning of this chapter, the classical uncertainty relation\index{subject}{uncertainty relation} in signal analysis quantifies how well concentrated a signal can be in time and frequency. 
In the finite-dimensional setting considered here this amounts to characterizing the concentration of 
$\vecp$ and $\vecq$ in  $\vecp=\matF\vecq$. We will actually study the more general case obtained by replacing 
$\matI$ and $\matF$ by unitary $\matA\in\complex^{m\times m}$ and $\matB\in\complex^{m\times m}$, respectively, and will ask ourselves how well concentrated $\vecp$ and $\vecq$ in $\matA\vecp=\matB\vecq$ can be. 
Rewriting $\matA\vecp=\matB\vecq$ according to  $\vecp=\matU\vecq$ with $\matU=\matA^\herm\matB$, 
we now show how the uncertainty relation \index{subject}{uncertainty relation} in Lemma \ref{lem:U} can be used to answer this question. 
Let us start by introducing a measure for concentration in $(\complex^m,\lVert\,\cdot\,\rVert_2)$. 
 
\begin{definition}
Let $\setP\subseteq\{1,\dots,m\}$ and $\varepsilon_\setP\in[0,1]$. The vector $\vecx\in\complex^m$ is said to be 
$\varepsilon_\setP$-concentrated if $\|\vecx-\vecx_\setP\|_2\leq \varepsilon_\setP\|\vecx\|_2$. 
\end{definition}
The fraction of $2$-norm an $\varepsilon_\setP$-concentrated vector exhibits outside $\setP$ is therefore no more than 
$\varepsilon_\setP$. In particular,  if $\vecx$  is $\varepsilon_\setP$-concentrated  with $\varepsilon_\setP=0$, then $\vecx=\vecx_\setP$ and $\vecx$ is $|\setP|$-sparse. 
The zero vector is trivially $\varepsilon_\setP$-concentrated for all   $\setP\subseteq\{1,\dots,m\}$ and  $\varepsilon_\setP\in[0,1]$. 

We next derive a lower bound on $\Delta_{\setP,\setQ}(\matU)$ for unitary matrices $\matU$ that relate $\varepsilon_\setP$-concentrated vectors $\vecp$ 
to $\varepsilon_\setQ$-concentrated vectors $\vecq$ through $\vecp=\matU\vecq$. The formal statement is as follows.  
\begin{lemma}\label{lem:U1} 
Let $\matU\in\complex^{m\times m}$  be  unitary  and   $\setP,\setQ\subseteq\{1,\dots,m\}$. 
Suppose that   there exist a nonzero $\varepsilon_\setP$-concentrated $\vecp\in\complex^m$ and a 
nonzero $\varepsilon_\setQ$-concentrated $\vecq\in\complex^m$ such that 
$\vecp=\matU\vecq$.    
Then,
\begin{align}
\Delta_{\setP,\setQ}(\matU)\geq [1-\varepsilon_\setP-\varepsilon_\setQ]_+.\label{eq:lemU}
\end{align}
\end{lemma}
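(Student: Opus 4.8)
The plan is to lower-bound $\Delta_{\setP,\setQ}(\matU)$ by exhibiting a single well-chosen competitor for the maximization in \eqref{eq:Delta2} and showing that its concentration on $\setP$ is at least $1-\varepsilon_\setP-\varepsilon_\setQ$. Since the claimed bound is $[1-\varepsilon_\setP-\varepsilon_\setQ]_+$ and $\Delta_{\setP,\setQ}(\matU)\geq 0$ always, it suffices to treat the case $1-\varepsilon_\setP-\varepsilon_\setQ>0$; otherwise the inequality is trivial. Throughout I would use that $\matU$ is unitary, so $\|\vecp\|_2=\|\vecq\|_2$ and $\matU^\herm\vecp=\vecq$.

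The natural competitor is the orthogonal projection of $\vecp$ onto $\setW^{\matU,\setQ}$, namely $\vecx=\matP_\setQ(\matU)\vecp=\matU\matD_\setQ\matU^\herm\vecp=\matU\vecq_\setQ$, which lies in $\setW^{\matU,\setQ}$ by construction and satisfies $\|\vecx\|_2=\|\vecq_\setQ\|_2$ by unitarity. For the denominator, $\varepsilon_\setQ$-concentration of $\vecq$ gives $\|\vecx\|_2=\|\vecq_\setQ\|_2\leq\|\vecq\|_2$. For the numerator, the key observation is the identity $\matU\vecq_\setQ=\vecp-\matU(\vecq-\vecq_\setQ)$, which upon applying $\matD_\setP$ yields $\vecx_\setP=\vecp_\setP-\matD_\setP\matU(\vecq-\vecq_\setQ)$. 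A reverse triangle inequality then gives $\|\vecx_\setP\|_2\geq\|\vecp_\setP\|_2-\|\matD_\setP\matU(\vecq-\vecq_\setQ)\|_2$.

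It remains to bound the two terms. The first is controlled by $\varepsilon_\setP$-concentration of $\vecp$, which yields $\|\vecp_\setP\|_2\geq\|\vecp\|_2-\|\vecp-\vecp_\setP\|_2\geq(1-\varepsilon_\setP)\|\vecp\|_2$. The second is controlled using $\onorm{\matD_\setP}_2\leq 1$, unitarity of $\matU$, and $\varepsilon_\setQ$-concentration of $\vecq$, giving $\|\matD_\setP\matU(\vecq-\vecq_\setQ)\|_2\leq\|\vecq-\vecq_\setQ\|_2\leq\varepsilon_\setQ\|\vecq\|_2$. Combining these and using $\|\vecp\|_2=\|\vecq\|_2$ produces $\|\vecx_\setP\|_2\geq(1-\varepsilon_\setP-\varepsilon_\setQ)\|\vecq\|_2$.

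Finally I would assemble the ratio. Since $\vecq\neq\veczero$ and $1-\varepsilon_\setP-\varepsilon_\setQ>0$, the last estimate forces $\|\vecx_\setP\|_2>0$, hence $\vecx\neq\veczero$ and $\vecx\in\setW^{\matU,\setQ}\setminus\{\veczero\}$ is an admissible competitor. Dividing the numerator bound by $\|\vecx\|_2\leq\|\vecq\|_2$ gives $\|\vecx_\setP\|_2/\|\vecx\|_2\geq 1-\varepsilon_\setP-\varepsilon_\setQ$, and \eqref{eq:Delta2} then yields $\Delta_{\setP,\setQ}(\matU)\geq 1-\varepsilon_\setP-\varepsilon_\setQ=[1-\varepsilon_\setP-\varepsilon_\setQ]_+$. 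The only real obstacle is choosing the competitor correctly: the projection $\matP_\setQ(\matU)\vecp$ is exactly what lets one transfer concentration information from $\vecq$ (through $\setQ$) and from $\vecp$ (through $\setP$) simultaneously, and the rest is careful bookkeeping with the two reverse triangle inequalities and the identity linking $\vecp$ and $\vecq$ via $\matU$.
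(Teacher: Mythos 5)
Your proof is correct, and it takes a recognizably different (adjoint) route from the paper's. You work directly with the variational characterization \eqref{eq:Delta2}, exhibiting the competitor $\vecx=\matP_\setQ(\matU)\vecp=\matU\vecq_\setQ\in\setW^{\matU,\setQ}$ and splitting $\vecx_\setP=\vecp_\setP-\matD_\setP\matU(\vecq-\vecq_\setQ)$; the paper instead evaluates the adjoint operator $\matP_\setQ(\matU)\matD_\setP$ at $\vecp/\|\vecp\|_2$, bounding $\|\vecp-\matP_\setQ(\matU)\vecp_\setP\|_2\leq(\varepsilon_\setP+\varepsilon_\setQ)\|\vecp\|_2$ and then applying the reverse triangle inequality, which forces it to invoke the adjoint identity $\onorm{\matD_\setP\matP_\setQ(\matU)}_2=\onorm{\matP_\setQ(\matU)\matD_\setP}_2$ from Lemma \ref{lem:norm} --- a step you do not need. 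The underlying estimates are the same in both arguments: the two concentration hypotheses, non-expansiveness of the projections, and unitary invariance of $\lVert\,\cdot\,\rVert_2$, so the proofs are essentially mirror images of one another. What your route buys is a slightly cleaner endgame: the denominator bound $\|\vecx\|_2=\|\vecq_\setQ\|_2\leq\|\vecq\|_2$ replaces the paper's normalization by $\|\vecp\|_2$, and your explicit case split on $1-\varepsilon_\setP-\varepsilon_\setQ\leq 0$ disposes of the $[\,\cdot\,]_+$ up front, whereas the paper carries it through the reverse triangle inequality.
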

\begin{proof} 
We have   
\begin{align}
\|\vecp-\matP_\setQ(\matU)\vecp_\setP\|_2 \label{eq:useaaaa1}
&\leq \|\vecp-\matP_\setQ(\matU)\vecp\|_2 + \|\matP_\setQ(\matU)\vecp_\setP-\matP_\setQ(\matU)\vecp \|_2\\
&\leq \|\vecp-\matP_\setQ(\matU)\vecp\|_2 + \onorm{\matP_\setQ(\matU)}_2\|\vecp_\setP-\vecp \|_2\\
&\leq \|\vecp-\matU\matD_\setQ\matU^\herm\vecp\|_2 + \|\vecp_\setP-\vecp \|_2\\
&= \|\vecq-\vecq_\setQ\|_2 + \|\vecp_\setP-\vecp \|_2,\label{eq:useaaaa3}\\
&\leq \varepsilon_\setQ\|\vecq\|_2+\varepsilon_\setP\|\vecp\|_2\\
&= (\varepsilon_\setP+\varepsilon_\setQ)\|\vecp\|_2,  \label{eq:useaaaa2}
\end{align}
where in \eqref{eq:useaaaa3} we made use of the unitary invariance of  $\lVert\,\cdot\,\rVert_2$. 
It follows that
\begin{align}
\|\matP_\setQ(\matU)\vecp_\setP\|_2
&\geq [\|\vecp\|_2-\|\vecp-\matP_\setQ(\matU)\vecp_\setP\|_2]_+\label{eq:usert}\\
&\geq \|\vecp\|_2[1-\varepsilon_\setP-\varepsilon_\setQ]_+, \label{eq:useaaaa}
\end{align}
where \eqref{eq:usert} is by  the  reverse triangle inequality  and 
in \eqref{eq:useaaaa} we used \eqref{eq:useaaaa1}--\eqref{eq:useaaaa2}. 
Since $\vecp\neq\veczero$ by assumption,  \eqref{eq:usert}--\eqref{eq:useaaaa} implies 
\begin{align}
\mleft\|\matP_\setQ(\matU)\matD_\setP\frac{\vecp}{\|\vecp\|_2}\mright\|_2\geq[1-\varepsilon_\setP-\varepsilon_\setQ]_+,
\end{align}
which  in turn yields 
$\onorm{\matP_\setQ(\matU)\matD_\setP}_2\geq [1-\varepsilon_\setP-\varepsilon_\setQ]_+$. This concludes the proof as  
$\Delta_{\setP,\setQ}(\matU)=\onorm{\matP_\setQ(\matU)\matD_\setP}_2$ by Lemma \ref{lem:norm}.  
\end{proof}

Combining Lemma \ref{lem:U1} with the uncertainty relation\index{subject}{uncertainty relation}  Lemma \ref{lem:U}   yields the announced result stating  that a nonzero vector 
can not  be arbitrarily well  concentrated  with respect to two different orthonormal bases. 

\begin{corollary}\label{cor:U}
Let $\matA,\matB\in\complex^{m\times m}$  be  unitary  and   $\setP,\setQ\subseteq\{1,\dots,m\}$. 
Suppose that  there exist a nonzero $\varepsilon_\setP$-concentrated $\vecp\in\complex^m$ and a 
nonzero $\varepsilon_\setQ$-concentrated $\vecq\in\complex^m$ such that 
$\matA\vecp=\matB\vecq$. 
Then,
\begin{align}
|\setP||\setQ| \geq \frac{[1-\varepsilon_\setP-\varepsilon_\setQ]^2_+}{\mu^2(\congmat{\matA}{\matB})}.
\end{align}
\end{corollary}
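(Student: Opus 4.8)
The plan is to reduce the two-basis hypothesis $\matA\vecp=\matB\vecq$ to the single-unitary setting of Lemmas~\ref{lem:U1} and~\ref{lem:U}, and then to sandwich $\Delta_{\setP,\setQ}$ between the lower bound coming from concentration and the upper bound coming from coherence. The only genuine computation will be a coherence identity relating $\congmat{\matI}{\matA^\herm\matB}$ to $\congmat{\matA}{\matB}$.

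First I would set $\matU=\matA^\herm\matB$. Since $\matA$ is unitary, left-multiplying $\matA\vecp=\matB\vecq$ by $\matA^\herm$ gives the equivalent relation $\vecp=\matU\vecq$, and the concentration hypotheses on $\vecp$ and $\vecq$ are untouched. Hence Lemma~\ref{lem:U1} applies directly and yields
\begin{align}
\Delta_{\setP,\setQ}(\matU)\geq[1-\varepsilon_\setP-\varepsilon_\setQ]_+ .
\end{align}
On the other hand, Lemma~\ref{lem:U} applied to the same $\matU$ gives $\Delta_{\setP,\setQ}(\matU)\leq\sqrt{|\setP||\setQ|}\,\mu(\congmat{\matI}{\matU})$. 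Chaining the two bounds produces
\begin{align}
[1-\varepsilon_\setP-\varepsilon_\setQ]_+\leq\sqrt{|\setP||\setQ|}\,\mu(\congmat{\matI}{\matA^\herm\matB}) .
\end{align}

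The step I would single out as carrying the actual content is the identity $\mu(\congmat{\matI}{\matA^\herm\matB})=\mu(\congmat{\matA}{\matB})$, though it is bookkeeping rather than obstacle. To see it, note that the columns of $\matI$ are the standard basis vectors $\vece_i$ while the columns of $\matA^\herm\matB$ are $\matA^\herm\vecb_j$, with $\vecb_j$ the $j$-th column of $\matB$. Orthonormality of the columns within each unitary block makes all intra-block inner products vanish, so the coherence is attained on a cross term, and $|\vece_i^\herm\matA^\herm\vecb_j|=|(\matA\vece_i)^\herm\vecb_j|=|\veca_i^\herm\vecb_j|$ by unitary invariance of the inner product; these are precisely the cross terms defining $\mu(\congmat{\matA}{\matB})$, whence the two coherences agree.

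Finally I would square the combined inequality and divide by $\mu^2(\congmat{\matA}{\matB})$ to reach $|\setP||\setQ|\geq[1-\varepsilon_\setP-\varepsilon_\setQ]^2_+/\mu^2(\congmat{\matA}{\matB})$. The division is legitimate because $2m$ unit vectors cannot be pairwise orthogonal in $\complex^m$, so $\mu(\congmat{\matA}{\matB})>0$; in any case the paper's convention $0\cdot\infty=0$ covers the degenerate reading. I expect no real difficulty here: every step is an immediate consequence of the two preceding lemmas together with unitary invariance.
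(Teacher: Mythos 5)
Your proposal is correct and follows essentially the same route as the paper's own proof: set $\matU=\matA^\herm\matB$, chain the lower bound of Lemma~\ref{lem:U1} with the coherence upper bound of Lemma~\ref{lem:U}, and conclude via the identity $\mu(\congmat{\matI}{\matU})=\mu(\congmat{\matA}{\matB})$. The only difference is that you spell out the verification of this coherence identity, which the paper states without proof.
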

\begin{proof}
Let $\matU=\matA^\herm\matB$. Then, by Lemmata \ref{lem:U} and \ref{lem:U1}, we have
\begin{align}\label{eq:zvzv}
[1-\varepsilon_\setP-\varepsilon_\setQ]_+\leq \Delta_{\setP,\setQ}(\matU)\leq \sqrt{|\setP||\setQ|}\,\mu(\congmat{\matI}{\matU}). 
\end{align}
The claim now follows by noting that $\mu(\congmat{\matI}{\matU})=\mu(\congmat{\matA}{\matB})$.
\end{proof}

For  $\varepsilon_\setP=\varepsilon_\setQ=0$, we recover the well-known  Elad-Bruckstein  result. 
\begin{corollary}\cite[Theorem 1]{elbr02}\label{cor:elbr02}
Let $\matA,\matB\in\complex^{m\times m}$  be   unitary. 
If  $\matA\vecp=\matB\vecq$ for nonzero  $\vecp,\vecq\in\complex^m$, then $\|\vecp\|_0\|\vecq\|_0\geq 1/\mu^2(\congmat{\matA}{\matB})$. 
\end{corollary}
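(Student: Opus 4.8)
The plan is to derive this as the special case $\varepsilon_\setP=\varepsilon_\setQ=0$ of Corollary \ref{cor:U}. The bridge between the two statements is the observation that $0$-concentration coincides with exact support containment: a vector $\vecx$ is $\varepsilon_\setP$-concentrated with $\varepsilon_\setP=0$ if and only if $\|\vecx-\vecx_\setP\|_2\le 0$, i.e.\ $\vecx=\vecx_\setP$, which is precisely the statement that $\vecx$ is supported on $\setP$.

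Concretely, I would first set $\setP=\operatorname{supp}(\vecp)$ and $\setQ=\operatorname{supp}(\vecq)$, so that $|\setP|=\|\vecp\|_0$ and $|\setQ|=\|\vecq\|_0$. With these choices we have $\vecp=\vecp_\setP$ and $\vecq=\vecq_\setQ$, whence $\vecp$ is $\varepsilon_\setP$-concentrated and $\vecq$ is $\varepsilon_\setQ$-concentrated with $\varepsilon_\setP=\varepsilon_\setQ=0$. Since $\vecp,\vecq$ are nonzero by hypothesis and satisfy $\matA\vecp=\matB\vecq$, all hypotheses of Corollary \ref{cor:U} are in force.

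I would then simply invoke Corollary \ref{cor:U} with these sets and parameters. Because $[1-\varepsilon_\setP-\varepsilon_\setQ]_+=[1]_+=1$, the corollary gives
\begin{align*}
\|\vecp\|_0\,\|\vecq\|_0=|\setP|\,|\setQ|\ge \frac{1}{\mu^2(\congmat{\matA}{\matB})},
\end{align*}
which is exactly the claim. There is no genuine obstacle here: the entire content is the correct identification of $0$-concentration with support and the resulting optimal choice of $\setP,\setQ$ as the supports of $\vecp,\vecq$. One may note in passing that $\mu(\congmat{\matA}{\matB})>0$ for unitary $\matA,\matB$, since otherwise every column of $\matB$ would be orthogonal to the full orthonormal basis given by the columns of $\matA$, forcing $\matB=\matzero$; hence the right-hand side is finite and the bound is nonvacuous.
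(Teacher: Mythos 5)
Your proof is correct and follows exactly the paper's route: the paper obtains Corollary \ref{cor:elbr02} precisely by specializing Corollary \ref{cor:U} to $\varepsilon_\setP=\varepsilon_\setQ=0$ with $\setP,\setQ$ taken as the supports of $\vecp,\vecq$, which is what you do. Your added remark that $\mu(\congmat{\matA}{\matB})>0$ for unitary $\matA,\matB$ is a nice touch the paper leaves implicit.
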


\subsection{Noisy Recovery in $(\complex ^m, \lVert\,\cdot\,\rVert_2)$}\label{eq:noisyrecl2}
Uncertainty relations\index{subject}{uncertainty relation} are typically employed to prove that something is not possible. 
For example, by Corollary \ref{cor:U} there is a limit on how well a nonzero vector can be  concentrated with respect to two different orthonormal bases.  
Donoho and Stark \cite{dost89} noticed that uncertainty relations\index{subject}{uncertainty relation} can also be used to show that something unexpected is possible. 
Specifically, \cite[Section 4]{dost89} considers a noisy signal recovery problem, which we now translate to the finite-dimensional setting. 
Let $\vecp,\vecn\in\complex^m$ and  $\setP\subseteq\{1,\dots,m\}$,  set $\setP^c=\{1,\dots,m\}\!\setminus\!\setP$, and  suppose that we observe  $\vecy=\vecp_{\setP^c}+\vecn$.  
Note that the information contained in $\vecp_\setP$ is completely lost in the observation.  
Without   structural assumptions  on $\vecp$, it is therefore not possible to recover information on   $\vecp_\setP$ from $\vecy$. 
However, if  $\vecp$ is sufficiently sparse with respect to an orthonormal basis and $|\setP|$ is sufficiently small, it  turns out that 
all  entries of $\vecp$ can be recovered in a linear fashion to within a  precision determined by the noise level. 
This is often referred to in the literature as stable recovery \cite{dost89}. 
The corresponding formal statement is as follows.  

\begin{lemma}\label{cor:noisy}
Let $\matU\in\complex^{m\times m}$ be unitary,  $\setQ\subseteq\{1,\dots,m\}$,   $\vecp\in\setW^{\matU,\setQ}$, and  
consider  
\begin{align}
\vecy=\vecp_{\setP^c}+\vecn, 
\end{align}
where  $\vecn\in\complex^m$ and $\setP^c=\{1,\dots,m\}\!\setminus\!\setP$ with  $\setP\subseteq\{1,\dots,m\}$. 
If $\Delta_{\setP,\setQ}(\matU)<1$, then there exists a matrix $\matL\in\complex^{m\times m}$  such that 
\begin{align}
\|\matL\vecy-\vecp\|_2\leq C\|\vecn_{\setP^c}\|_2  
\end{align}
with $C=1/(1-\Delta_{\setP,\setQ}(\matU))$.  
In particular, 
\begin{align}\label{eq:assnormQ}
|\setP||\setQ|<\frac{1}{\mu^2(\congmat{\matI}{\matU})}
\end{align}
is sufficient for $\Delta_{\setP,\setQ}(\matU)<1$. 
\end{lemma}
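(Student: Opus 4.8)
The plan is to construct the recovery matrix $\matL$ explicitly as a three-stage linear map and to establish stability through a Neumann-series inversion carried out on the subspace $\setW^{\matU,\setQ}$. The guiding observation is that the restriction of the observation to $\setP$ satisfies $\vecy_\setP=\vecn_\setP$ and hence carries no information about $\vecp$; I would therefore first discard it by applying $\matD_{\setP^c}$. This is precisely the step that forces the final error to depend only on $\|\vecn_{\setP^c}\|_2$ rather than on the full $\|\vecn\|_2$. After this restriction I would project onto $\setW^{\matU,\setQ}$ via $\matP_\setQ(\matU)$ and then undo the resulting distortion.

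Set $\matM=\matP_\setQ(\matU)\matD_\setP$. By Lemma \ref{lem:norm} we have $\onorm{\matM}_2=\Delta_{\setP,\setQ}(\matU)<1$, and $\matM$ maps $\setW^{\matU,\setQ}$ into itself. The key algebraic identity is that, on $\setW^{\matU,\setQ}$, the operator $\matP_\setQ(\matU)\matD_{\setP^c}=\matP_\setQ(\matU)-\matM$ acts as the identity minus the contraction $\matM$; since $\onorm{\matM}_2<1$, it is therefore invertible, with inverse $\sum_{k=0}^{\infty}\matM^k$ of operator $2$-norm at most $1/(1-\Delta_{\setP,\setQ}(\matU))$. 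Let $\matN$ denote this inverse on $\setW^{\matU,\setQ}$, extended by $\matzero$ on $(\setW^{\matU,\setQ})^\perp$, and set $\matL=\matN\matP_\setQ(\matU)\matD_{\setP^c}$.

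It then remains to compute the error. Using $\matD_{\setP^c}\vecy=\vecp_{\setP^c}+\vecn_{\setP^c}$, the identity $\matP_\setQ(\matU)\vecp_{\setP^c}=(\matP_\setQ(\matU)-\matM)\vecp$, and $\matN(\matP_\setQ(\matU)-\matM)\vecp=\vecp$ for $\vecp\in\setW^{\matU,\setQ}$, I would obtain $\matL\vecy-\vecp=\matN\matP_\setQ(\matU)\vecn_{\setP^c}$. Bounding via $\onorm{\matN}_2\leq 1/(1-\Delta_{\setP,\setQ}(\matU))$ together with $\|\matP_\setQ(\matU)\vecn_{\setP^c}\|_2\leq\|\vecn_{\setP^c}\|_2$ then yields the claimed estimate with $C=1/(1-\Delta_{\setP,\setQ}(\matU))$. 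The final ``in particular'' assertion is immediate: by Lemma \ref{lem:U}, the hypothesis $|\setP||\setQ|<1/\mu^2(\congmat{\matI}{\matU})$ gives $\Delta_{\setP,\setQ}(\matU)\leq\sqrt{|\setP||\setQ|}\,\mu(\congmat{\matI}{\matU})<1$.

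The main obstacle I anticipate is the careful handling of the restriction to $\setW^{\matU,\setQ}$: one must verify that $\matP_\setQ(\matU)\matD_{\setP^c}$ is genuinely invertible as a map of this subspace into itself (rather than on all of $\complex^m$, where it is singular whenever $\setP\neq\emptySet$), and then realize $\matN$, and hence $\matL$, as a bona fide $m\times m$ matrix. The conceptual crux, by contrast, is the decision to pre-multiply by $\matD_{\setP^c}$, which is what sharpens the bound from $\|\vecn\|_2$ to $\|\vecn_{\setP^c}\|_2$.
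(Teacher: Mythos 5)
Your proposal is correct. The underlying mechanism is the same as the paper's---discard the entries on $\setP$ by pre-multiplying with $\matD_{\setP^c}$, undo the damage by a Neumann-series inversion of an identity-minus-contraction, and obtain the ``in particular'' claim from Lemma \ref{lem:U}---but your realization of the inverse differs in a way worth comparing. You invert $\matP_\setQ(\matU)\matD_{\setP^c}=\matP_\setQ(\matU)-\matM$, with $\matM=\matP_\setQ(\matU)\matD_\setP$, \emph{restricted to the subspace} $\setW^{\matU,\setQ}$, where it acts as the identity minus the contraction $\matM$ (invoking Lemma \ref{lem:norm} to get $\onorm{\matM}_2=\Delta_{\setP,\setQ}(\matU)$), and you must then extend the restricted inverse $\matN$ by zero to all of $\complex^m$; this is precisely the bookkeeping you flag as the main obstacle. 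The paper sidesteps it by reversing the order of the two factors: since $\onorm{\matD_\setP\matP_\setQ(\matU)}_2=\Delta_{\setP,\setQ}(\matU)<1$, the matrix $\matI-\matD_\setP\matP_\setQ(\matU)$ is invertible on all of $\complex^m$ with
\begin{align}
\onorm{(\matI-\matD_\setP\matP_\setQ(\matU))^{-1}}_2\leq \frac{1}{1-\Delta_{\setP,\setQ}(\matU)},
\end{align}
and one simply takes $\matL=(\matI-\matD_\setP\matP_\setQ(\matU))^{-1}\matD_{\setP^c}$; the key identity $\matL\vecp_{\setP^c}=\vecp$ then drops out in two lines from $\matD_{\setP^c}=\matI-\matD_\setP$ and $\matP_\setQ(\matU)\vecp=\vecp$, with no subspace restriction or zero-extension needed, and the error bound follows exactly as in your computation. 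What your route buys is geometric transparency---the inversion visibly lives inside $\setW^{\matU,\setQ}$, where the signal resides, and the contraction $\matM$ maps that subspace into itself by construction; what the paper's route buys is brevity, since a single global matrix inverse replaces your restricted inverse and its extension.
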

\begin{proof} 
For $\Delta_{\setP,\setQ}(\matU)<1$, it follows that (cf. \cite[p. 301]{hojo13})   
$(\matI-\matD_\setP\matP_\setQ(\matU))$ is invertible with  
\begin{align}
\onorm{(\matI-\matD_\setP\matP_\setQ(\matU))^{-1}}_2
&\leq \frac{1}{1-\onorm{\matD_\setP\matP_\setQ(\matU)}_2}\\
&=\frac{1}{1-\Delta_{\setP,\setQ}(\matU)}.
\end{align}
We now set $\matL=(\matI-\matD_\setP\matP_\setQ(\matU))^{-1}\matD_{\setP^c}$ and note that  
\begin{align}
\matL\vecp_{\setP^c}\label{eq:llll0}
&=(\matI-\matD_\setP\matP_\setQ(\matU))^{-1}\vecp_{\setP^c}\\
&=(\matI-\matD_\setP\matP_\setQ(\matU))^{-1}(\matI-\matD_\setP)\vecp\\
&=(\matI-\matD_\setP\matP_\setQ(\matU))^{-1}(\matI-\matD_\setP\matP_\setQ(\matU))\vecp\label{eq:llll1}\\
&=\vecp, \label{eq:llll2}
\end{align}
where  in \eqref{eq:llll1} we used $\matP_\setQ(\matU)\vecp=\vecp$, which is by assumption.
Next, we  upper-bound  $\|\matL\vecy-\vecp\|_2$ according to  
\begin{align}
\|\matL\vecy-\vecp\|_2
&=\|\matL\vecp_{\setP^c}+ \matL\vecn-\vecp\|_2\\
&=\|\matL\vecn\|_2\label{eq:kkkk}\\
&\leq \onorm{(\matI-\matD_\setP\matP_\setQ(\matU))^{-1}}_2\|\vecn_{\setP^c}\|_2\\
&\leq\frac{1}{1-\Delta_{\setP,\setQ}(\matU)}\|\vecn_{\setP^c}\|_2,
\end{align}
where in \eqref{eq:kkkk} we used \eqref{eq:llll0}--\eqref{eq:llll2}. 
Finally, Lemma  \ref{lem:U}   implies that \eqref{eq:assnormQ} is sufficient for $\Delta_{\setP,\setQ}(\matU)<1$.
\end{proof} 
We next particularize Lemma \ref{cor:noisy} for $\matU=\matF$, 
\begin{align} \label{eq:setP3}
\setP=\mleft\{\frac{m}{n},\frac{2m}{n},\dots,\frac{(n-1)m}{n},m\mright\}  
\end{align}
with $n$ dividing  $m$, 
and 
\begin{align}
\setQ=\{l+1,\dots,l+n\}
\end{align} 
with $l\in \{1,\dots,m\}$ and $\setQ$ interpreted circularly in $\{1,\dots,m\}$.
This means that $\vecp$ is $n$-sparse in $\matF$ and we are missing $n$ entries in the noisy observation $\vecy$.
From Lemma  \ref{lem:lemex} we know that $\Delta_{\setP,\setQ}(\matF)=\sqrt{n/m}$. Since $n$  divides $m$ by assumption, stable recovery of $\vecp$ is possible for $n\leq m/2$. 
In contrast, the coherence-based\index{subject}{coherence}  uncertainty relation in Lemma \ref{lem:U}  yields 
$\Delta_{\setP,\setQ}(\matF) \leq {\frac{n}{\sqrt{m}}}$, and would hence 
suggest that $n^2<m$ is needed for stable recovery.

\section{Uncertainty Relations\index{subject}{uncertainty relation} in $(\complex ^m, \lVert\,\cdot\,\rVert_1)$}\label{sec:l1} 

We introduce uncertainty relations\index{subject}{uncertainty relation} in $(\complex ^m, \lVert\,\cdot\,\rVert_1)$ following the same story line as in 
Section \ref{sec:l2}. Specifically, let $\matU=(\vecu_1\dots\vecu_m)\in\complex^{m\times m}$ be a unitary matrix, $\setP,\setQ\subseteq\{1,\dots,m\}$, and consider the orthogonal projection $\matP_\setQ(\matU)$ 
onto the subspace  $\setW^{\matU,\setQ}$, which is spanned by  $\{\vecu_i:i\in\setQ\}$.  
Let\footnote{
In contrast to the operator $2$-norm, the 
operator $1$-norm is not invariant under unitary transformations so that we do not have  
$\onorm{\matP_\setP(\matA)\matP_\setQ(\matB)}_1= \onorm{\matD_\setP\matP_\setQ(\matA^\herm\matB)}_1$  for general unitary $\matA,\matB$. 
This, however, does not constitute a problem as whenever we apply uncertainty relations\index{subject}{uncertainty relation} in $(\complex ^m, \lVert\,\cdot\,\rVert_1)$, the case of general 
unitary $\matA,\matB$
can always be reduced directly to $\matP_\setP(\matI)=\matD_\setP$ and $\matP_\setQ(\matA^\herm\matB)$, simply by rewriting $\matA\vecp=\matB\vecq$ according to $\vecp=\matA^\herm\matB\vecq$.}
$\Sigma_{\setP,\setQ}(\matU)=\onorm{\matD_\setP\matP_\setQ(\matU)}_1$. 
By Lemma \ref{lem:norm} we have  
\begin{align}\label{eq:Sigma2}
 \Sigma_{\setP,\setQ}(\matU)=\max_{\vecx\in\setW^{\matU,\setQ}\setminus\{\veczero\}}\frac{\|\vecx_\setP\|_1}{\|\vecx\|_1}.
\end{align}  
An uncertainty relation\index{subject}{uncertainty relation} in $(\complex ^m, \lVert\,\cdot\,\rVert_1)$  is  an upper bound of the form  $\Sigma_{\setP,\setQ}(\matU)\leq c$ with $c\geq 0$ and states that
 $\|\vecx_\setP\|_1\leq c \|\vecx\|_1$  for all  $\vecx\in\setW^{\matU,\setQ}$. 
$\Sigma_{\setP,\setQ}(\matU)$ hence quantifies how well a vector supported on $\setQ$ in the basis $\matU$ can be concentrated on $\setP$, where now concentration is measured in terms of $1$-norm.   
Again, an uncertainty relation\index{subject}{uncertainty relation} in $(\complex ^m, \lVert\,\cdot\,\rVert_1)$ is  nontrivial only if $c<1$. 
Application of Lemma  \ref{lem:mmm} yields 
\begin{align}
\frac{1}{m}\|\matD_\setP\matP_\setQ(\matU)\|_1
\leq\Sigma_{\setP,\setQ}(\matU)
\leq \|\matD_\setP\matP_\setQ(\matU)\|_1, \label{eq:up13}  
\end{align}
which constitutes the $1$-norm  equivalent of \eqref{eq:unl21}. 

\subsection{Coherence-based\index{subject}{coherence} Uncertainty Relation\index{subject}{uncertainty relation}}

We next derive a coherence-based\index{subject}{coherence} uncertainty relation\index{subject}{uncertainty relation} for $(\complex ^m, \lVert\,\cdot\,\rVert_1)$, which comes with the same advantages and disadvantages as its $2$-norm counterpart. 


\begin{lemma}\label{lem:U1a} 
Let $\matU\in\complex^{m\times m}$ be a unitary matrix and  $\setP,\setQ\subseteq\{1,\dots,m\}$.  Then, 
\begin{align}\label{eq:normineqa}
\Sigma_{\setP,\setQ}(\matU)\leq |\setP||\setQ|\mu^2(\congmat{\matI}{\matU}).  
\end{align}
\end{lemma}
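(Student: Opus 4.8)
The plan is to mirror the coherence argument of Lemma~\ref{lem:U}, but to replace the trace/Frobenius estimate that was appropriate for the $2$-norm by the exact description of the operator $1$-norm as a maximum absolute column sum. Since the operator $1$-norm of any matrix equals the largest $\lVert\,\cdot\,\rVert_1$ among its columns, and since left-multiplication by $\matD_\setP$ annihilates every row whose index lies outside $\setP$, I would first write
\[
\Sigma_{\setP,\setQ}(\matU)=\onorm{\matD_\setP\matP_\setQ(\matU)}_1=\max_{k\in\{1,\dots,m\}}\sum_{i\in\setP}\bigl|(\matP_\setQ(\matU))_{i,k}\bigr|,
\]
so that only the rows $i\in\setP$ contribute to each column sum.

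Next I would expand the entries of the projection. From $\matP_\setQ(\matU)=\matU\matD_\setQ\matU^\herm$ one reads off $(\matP_\setQ(\matU))_{i,k}=\sum_{l\in\setQ}U_{i,l}\overline{U_{k,l}}$, whence, by the triangle inequality together with the identity $\max_{a,b}|U_{a,b}|=\mu(\congmat{\matI}{\matU})$ already recorded in the proof of Lemma~\ref{lem:U} (cf.~\eqref{eq:ineqnorm2a}--\eqref{eq:ineqnorm2}),
\[
\bigl|(\matP_\setQ(\matU))_{i,k}\bigr|\leq\sum_{l\in\setQ}|U_{i,l}|\,|U_{k,l}|\leq|\setQ|\,\mu^2(\congmat{\matI}{\matU})
\]
for every pair $(i,k)$.

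Summing this estimate over the at most $|\setP|$ nonzero rows $i\in\setP$ bounds every column sum by $|\setP|\,|\setQ|\,\mu^2(\congmat{\matI}{\matU})$ independently of $k$, and taking the maximum over $k$ then yields \eqref{eq:normineqa}. I do not anticipate a genuine obstacle, but the one point that requires care is the choice of estimate. Applying the crude chain \eqref{eq:up13}, which controls $\Sigma_{\setP,\setQ}(\matU)$ by the entrywise sum $\|\matD_\setP\matP_\setQ(\matU)\|_1=\sum_{i\in\setP}\sum_{k=1}^m|(\matP_\setQ(\matU))_{i,k}|$, would sum over all $m$ columns and thereby lose a spurious factor of $m$. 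Obtaining the sharp bound therefore hinges on the exact column-sum formula for the operator $1$-norm, which confines the inner summation to the rows $i\in\setP$ while leaving the single column index $k$ free to be optimized, playing here the role that restriction to the diagonal played in the $2$-norm argument.
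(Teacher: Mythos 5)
Your proof is correct and follows essentially the same route as the paper's: both rest on the max-column-sum characterization of $\onorm{\cdot}_1$ (the paper's Lemma~\ref{lem:mmm}), expand the entries of $\matP_\setQ(\matU)=\matU\matD_\setQ\matU^\herm$, and bound each entry by $|\setQ|\,\mu^2(\congmat{\matI}{\matU})$ via the triangle inequality and the identity $\max_{a,b}|U_{a,b}|=\mu(\congmat{\matI}{\matU})$. The only difference is cosmetic: the paper first pulls out the factor $|\setP|$ and then expands the inner products, whereas you expand first and then sum over $i\in\setP$.
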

\begin{proof}
Let $\tilde\vecu_i$ denote the column vectors of $\matU^\herm$. 
It follows from Lemma 
\ref{lem:mmm} that 
\begin{align} \label{eq:stepup3}
\Sigma_{\setP,\setQ}(\matU)=\max_{j\in\{1,\dots,m\}}\|\matD_\setP\matU\matD_\setQ\tilde \vecu_j\|_1. 
\end{align}
With
\begin{align}
\max_{j\in\{1,\dots,m\}}\|\matD_\setP\matU\matD_\setQ\tilde \vecu_j\|_1
&\leq|\setP|\max_{i,j\in\{1,\dots,m\}}|\tilde\vecu_i^\herm\matD_\setQ\tilde \vecu_j|\label{eq:stepup1}\\
&\leq|\setP||\setQ|\max_{i,j,k\in\{1,\dots,m\}}|\matU_{i,k}||\matU_{j,k}|\\
&\leq |\setP||\setQ| \mu^2(\congmat{\matI}{\matU}), \label{eq:stepup2}
\end{align} 
this establishes the proof. 
\end{proof}
For $\setP=\{1\}$, $\setQ=\{1,\dots,m\}$, and $\matU=\matF$, the upper bounds on $\Sigma_{\setP,\setQ}(\matF)$ in  \eqref{eq:up13}  and \eqref{eq:normineqa} coincide and    equal $1$. 
We next present an  example where \eqref{eq:normineqa} is sharper than \eqref{eq:up13}. 
Let $m$ be even, $\setP=\{m\}$, $\setQ=\{1,\dots,m/2\}$, and     $\matU=\matF$. 
Then, \eqref{eq:normineqa} becomes $\Sigma_{\setP,\setQ}(\matF)\leq 1/2$, whereas  
\begin{align}
 \|\matD_\setP\matP_\setQ(\matF)\|_1
 &=\frac{1}{m}\sum_{l=1}^m\mleft|\sum_{k=1}^{m/2}e^{\frac{2\pi j lk}{m}}\mright|\\
 &=\frac{1}{2}+\frac{1}{m}\sum_{l=1}^{m-1}\mleft|\frac{1-e^{\pi jl}}{1-e^{ \frac{2\pi jl}{m}}}\mright|\\
 &=\frac{1}{2}+\frac{2}{m}\sum_{l=1}^{m/2}\frac{1}{\mleft|1-e^{ \frac{2\pi j(2l-1)}{m}}\mright|} \\
 &=\frac{1}{2}+\frac{1}{m}\sum_{l=1}^{m/2}\frac{1}{\sin\mleft(\frac{\pi(2l-1)}{m}\mright)}.  \label{eq:Jensen}  
\end{align}  
Applying  Jensen's inequality \cite[Theorem 2.6.2]{Cover91} to \eqref{eq:Jensen} 
and using $\sum_{l=1}^\frac{m}{2}(2l-1)=(m/2)^2$ 
then yields 
$\|\matD_\setP\matP_\setQ(\mathbf{F})\|_1\geq 1$, which shows that   \eqref{eq:up13} is trivial. 

For $\setP$ and $\setQ$ as in  \eqref{eq:setP} and \eqref{eq:setQ}, respectively, \eqref{eq:normineqa}  becomes 
$\Sigma_{\setP,\setQ}(\matF)\leq n^2/m$, which for fixed ratio $n/m$ increases linearly in $m$ and becomes trivial for $m\geq (m/n)^2$.  
A more sophisticated uncertainty relation\index{subject}{uncertainty relation} 
based on a large sieve\index{subject}{large sieve} inequality exists for strictly band-limited (infinite) $\ell_1$-sequences  \cite[Theorem  14]{dolo92};  a corresponding finite-dimensional result  does not seem  to be available. 

\subsection{Concentration Inequalities}
Analogously to Section \ref{sec:concentration}, we next ask how well concentrated a given signal vector can be in two different orthonormal bases. Here, however, we consider a different measure of concentration accounting for the fact that we deal with the $1$-norm.

\begin{definition}\label{def:concl1}
Let $\setP\subseteq\{1,\dots,m\}$  and $\varepsilon_\setP\in[0,1]$. The vector $\vecx\in\complex^m$ is said to be 
 $\varepsilon_\setP$-concentrated  if $\|\vecx-\vecx_\setP\|_1\leq \varepsilon_\setP\|\vecx\|_1$. 
\end{definition}  
The fraction of $1$-norm an $\varepsilon_\setP$-concentrated vector exhibits outside $\setP$ is therefore no more than $\varepsilon_\setP$. In particular,  if $\vecx$  is $\varepsilon_\setP$-concentrated  for $\varepsilon_\setP=0$, then $\vecx=\vecx_\setP$ and $\vecx$ is $|\setP|$-sparse.
The zero vector is trivially $\varepsilon_\setP$-concentrated for all   $\setP\subseteq\{1,\dots,m\}$ and   $\varepsilon_\setP\in[0,1]$. 
In the remainder of Section \ref{sec:l1},  concentration is with respect to the $1$-norm according to Definition \ref{def:concl1}.

We are now ready to state the announced result on the concentration of a vector in two different orthonormal bases. 

\begin{lemma}\label{lem:appUa}
Let $\matA,\matB\in\complex^{m\times m}$  be  unitary and  $\setP,\setQ\subseteq\{1,\dots,m\}$. 
Suppose that  there exist a nonzero $\varepsilon_\setP$-concentrated $\vecp\in\complex^m$ and a 
nonzero $\vecq\in\complex^m$ with $\vecq=\vecq_\setQ$ such that 
$\matA\vecp=\matB\vecq$. 
Then,
\begin{align}
|\setP||\setQ| \geq \frac{1-\varepsilon_\setP}{\mu^2(\congmat{\matA}{\matB})}.
\end{align}
\end{lemma}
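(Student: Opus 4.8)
The plan is to mirror the structure of Corollary~\ref{cor:U}, but to work directly with the $1$-norm quantity $\Sigma_{\setP,\setQ}(\matU)$ instead of $\Delta_{\setP,\setQ}(\matU)$, combining a lower bound coming from the concentration hypothesis with the coherence-based upper bound in Lemma~\ref{lem:U1a}. Set $\matU=\matA^\herm\matB$ and rewrite $\matA\vecp=\matB\vecq$ as $\vecp=\matU\vecq$. Since $\vecq=\vecq_\setQ$, the vector $\vecp$ lies in $\setW^{\matU,\setQ}=\range(\matP_\setQ(\matU))$, so $\vecp$ is a legitimate competitor in the maximization \eqref{eq:Sigma2} defining $\Sigma_{\setP,\setQ}(\matU)$. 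The goal is therefore to show $\Sigma_{\setP,\setQ}(\matU)\geq 1-\varepsilon_\setP$ and then invoke Lemma~\ref{lem:U1a} to get $1-\varepsilon_\setP\leq\Sigma_{\setP,\setQ}(\matU)\leq|\setP||\setQ|\,\mu^2(\congmat{\matI}{\matU})$, after which the identity $\mu(\congmat{\matI}{\matU})=\mu(\congmat{\matA}{\matB})$ (as used in Corollary~\ref{cor:U}) finishes the claim.

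The lower bound is the heart of the argument. Because $\vecp\in\setW^{\matU,\setQ}$ and $\vecp\neq\veczero$, equation \eqref{eq:Sigma2} gives
\begin{align}
\Sigma_{\setP,\setQ}(\matU)\geq\frac{\|\vecp_\setP\|_1}{\|\vecp\|_1}.
\end{align}
Now I would use the $\varepsilon_\setP$-concentration of $\vecp$, which by Definition~\ref{def:concl1} states $\|\vecp-\vecp_\setP\|_1\leq\varepsilon_\setP\|\vecp\|_1$. By the reverse triangle inequality,
\begin{align}
\|\vecp_\setP\|_1\geq\|\vecp\|_1-\|\vecp-\vecp_\setP\|_1\geq(1-\varepsilon_\setP)\|\vecp\|_1,
\end{align}
so dividing by $\|\vecp\|_1>0$ yields $\Sigma_{\setP,\setQ}(\matU)\geq 1-\varepsilon_\setP$ directly. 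Note that the roles of $\vecp$ and $\vecq$ are asymmetric here, unlike in the $2$-norm Corollary~\ref{cor:U}: the hypothesis only imposes concentration on $\vecp$ while $\vecq$ is required to be exactly supported on $\setQ$, which is precisely why only $\varepsilon_\setP$ (and not $\varepsilon_\setQ$) appears in the bound.

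I expect the step requiring the most care is the bookkeeping around unitary reduction rather than any hard estimate. Specifically, one must confirm that $\vecq=\vecq_\setQ$ translates into $\vecp\in\setW^{\matU,\setQ}$, i.e.\ that $\matP_\setQ(\matU)\vecp=\vecp$. This follows since $\matP_\setQ(\matU)\vecp=\matU\matD_\setQ\matU^\herm\matU\vecq=\matU\matD_\setQ\vecq=\matU\vecq_\setQ=\matU\vecq=\vecp$, using unitarity of $\matU$ and $\vecq=\vecq_\setQ$. One should also be slightly careful that $1-\varepsilon_\setP$ is automatically nonnegative since $\varepsilon_\setP\in[0,1]$, so no $[\,\cdot\,]_+$ truncation is needed on the left side (which is why the statement reads $1-\varepsilon_\setP$ rather than $[1-\varepsilon_\setP]_+$). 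Assembling the chain $1-\varepsilon_\setP\leq\Sigma_{\setP,\setQ}(\matU)\leq|\setP||\setQ|\,\mu^2(\congmat{\matA}{\matB})$ and rearranging then gives the desired inequality.
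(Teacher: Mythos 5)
Your proof is correct and follows essentially the same route as the paper's: rewrite $\matA\vecp=\matB\vecq$ as $\vecp=\matU\vecq$ with $\matU=\matA^\herm\matB$, deduce $\vecp\in\setW^{\matU,\setQ}$, chain $1-\varepsilon_\setP\leq\|\vecp_\setP\|_1/\|\vecp\|_1\leq\Sigma_{\setP,\setQ}(\matU)\leq|\setP||\setQ|\,\mu^2(\congmat{\matI}{\matU})$ via \eqref{eq:Sigma2} and Lemma~\ref{lem:U1a}, and conclude with $\mu(\congmat{\matI}{\matU})=\mu(\congmat{\matA}{\matB})$. Your explicit verification that $\matP_\setQ(\matU)\vecp=\vecp$ is a welcome detail the paper leaves implicit, but it does not change the argument.
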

\begin{proof}
Rewriting $\matA\vecp=\matB\vecq$ according to $\vecp=\matA^\herm\matB\vecq$, it follows that 
$\vecp\in\setW^{\matU,\setQ}$ with $\matU=\matA^\herm\matB$. 
We have 
\begin{align}
1-\varepsilon_\setP
&\leq \frac{\|\vecp_\setP\|_1}{\|\vecp\|_1}\label{eq:late1}\\
&\leq \Sigma_{\setP,\setQ}(\matU)\label{eq:late2}\\
&\leq |\setP||\setQ|\mu^2(\congmat{\matI}{\matU}),\label{eq:late3} 
\end{align}
where \eqref{eq:late1} is by $\varepsilon_\setP$-concentration of $\vecp$, 
\eqref{eq:late2} follows from \eqref{eq:Sigma2}  and 
$\vecp\in\setW^{\matU,\setQ}$, 
and in \eqref{eq:late3}  we applied  Lemma \ref{lem:U1a}. The proof is concluded by noting that 
$\mu(\congmat{\matI}{\matU})=\mu(\congmat{\matA}{\matB})$.

\end{proof}
For $\varepsilon_\setP=0$, Lemma \ref{lem:appUa} recovers Corollary \ref{cor:elbr02}. 

\subsection{Noisy Recovery in $(\complex ^m, \lVert\,\cdot\,\rVert_1)$}\label{eq:noisyrecl1}
We next consider a noisy signal recovery problem akin to that in Section \ref{eq:noisyrecl2}. 
Specifically, we investigate recovery---through $1$-norm minimization---of a sparse signal  corrupted by   $\varepsilon_\setP$-concentrated noise.  
\begin{lemma}\label{cor:noisy1}
Let 
\begin{align}
\vecy=\vecp+\vecn, 
\end{align}
where $\vecn\in\complex^m$ is  $\varepsilon_\setP$-concentrated to  
$\setP\subseteq\{1,\dots,m\}$  
and $\vecp\in\setW^{\matU,\setQ}$ for $\matU\in\complex^{m\times m}$ unitary and  $\setQ\subseteq\{1,\dots,m\}$.  
Denote  
\begin{align}
\vecz =\underset{\vecw\in\setW^{\matU,\setQ}}{\operatorname{argmin}}(\|\vecy-\vecw\|_1). 
\end{align} 
If $\Sigma_{\setP,\setQ}(\matU)<1/2$, then 
$\|\vecz-\vecp\|_1\leq C\varepsilon_\setP\|\vecn\|_1$ with $C=2/(1-2\Sigma_{\setP,\setQ}(\matU))$. 
In particular, 
\begin{align}\label{eq:assnormQ1}
|\setP||\setQ|<\frac{1}{2\mu^2(\congmat{\matI}{\matU})}
\end{align}
is sufficient for  $\Sigma_{\setP,\setQ}(\matU)<1/2$. 
\end{lemma}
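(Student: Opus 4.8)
The plan is to run a null-space-property-style stability argument organized around the error vector $\vece=\vecz-\vecp$. Since both $\vecz$ and $\vecp$ lie in $\setW^{\matU,\setQ}$, so does $\vece$, and this is precisely what will let me invoke the uncertainty relation \eqref{eq:Sigma2} on $\vece$ at the decisive moment. The point of departure is optimality of $\vecz$: because $\vecp\in\setW^{\matU,\setQ}$ is itself a feasible competitor in the minimization, any minimizer satisfies $\|\vecy-\vecz\|_1\le\|\vecy-\vecp\|_1=\|\vecn\|_1$, and rewriting $\vecy-\vecz=\vecn-\vece$ turns this into $\|\vecn-\vece\|_1\le\|\vecn\|_1$.

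First I would split both sides over $\setP$ and $\setP^c$ and peel off the noise contributions via the triangle and reverse triangle inequalities. Writing $\|\vecn-\vece\|_1=\|\vecn_\setP-\vece_\setP\|_1+\|\vecn_{\setP^c}-\vece_{\setP^c}\|_1$, bounding the first summand below by $\|\vecn_\setP\|_1-\|\vece_\setP\|_1$ and the second by $\|\vece_{\setP^c}\|_1-\|\vecn_{\setP^c}\|_1$, and substituting $\|\vecn\|_1=\|\vecn_\setP\|_1+\|\vecn_{\setP^c}\|_1$ on the left, the $\|\vecn_\setP\|_1$ terms cancel and leave the cone bound
\begin{align*}
\|\vece_{\setP^c}\|_1\le\|\vece_\setP\|_1+2\|\vecn_{\setP^c}\|_1 .
\end{align*}
This is the heart of the argument and the step at which the constant $2$ is born.

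Next I would bring in the uncertainty relation. Applying \eqref{eq:Sigma2} to $\vece\in\setW^{\matU,\setQ}$ gives $\|\vece_\setP\|_1\le\Sigma_{\setP,\setQ}(\matU)\|\vece\|_1$. Decomposing $\|\vece\|_1=\|\vece_\setP\|_1+\|\vece_{\setP^c}\|_1$ and inserting the cone bound yields $\|\vece\|_1\le 2\|\vece_\setP\|_1+2\|\vecn_{\setP^c}\|_1\le 2\Sigma_{\setP,\setQ}(\matU)\|\vece\|_1+2\|\vecn_{\setP^c}\|_1$. Under the hypothesis $\Sigma_{\setP,\setQ}(\matU)<1/2$ the coefficient $1-2\Sigma_{\setP,\setQ}(\matU)$ is strictly positive, so solving for $\|\vece\|_1$ gives $\|\vece\|_1\le\frac{2}{1-2\Sigma_{\setP,\setQ}(\matU)}\|\vecn_{\setP^c}\|_1$. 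Finally, $\varepsilon_\setP$-concentration of $\vecn$ furnishes $\|\vecn_{\setP^c}\|_1=\|\vecn-\vecn_\setP\|_1\le\varepsilon_\setP\|\vecn\|_1$, producing the claimed estimate with $C=2/(1-2\Sigma_{\setP,\setQ}(\matU))$. The \emph{in particular} clause is then immediate from Lemma \ref{lem:U1a}: since $\Sigma_{\setP,\setQ}(\matU)\le|\setP||\setQ|\mu^2(\congmat{\matI}{\matU})$, the condition $|\setP||\setQ|<1/(2\mu^2(\congmat{\matI}{\matU}))$ forces $\Sigma_{\setP,\setQ}(\matU)<1/2$.

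The only delicate point I anticipate is the directional bookkeeping in the cone inequality: one must keep the reverse triangle inequalities pointing the right way so that the noise terms accumulate with the correct sign and the factor $2$ emerges, after which the coupling to the uncertainty relation and the final rearrangement are entirely mechanical.
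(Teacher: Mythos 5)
Your proof is correct and follows essentially the same route as the paper's: optimality of $\vecz$ against the feasible competitor $\vecp$, the split of $\|\vecn-\vece\|_1$ over $\setP$ and $\setP^c$ with reverse triangle inequalities, the uncertainty relation \eqref{eq:Sigma2} applied to the error $\vece\in\setW^{\matU,\setQ}$, and Lemma \ref{lem:U1a} for the \emph{in particular} clause. The only difference is organizational---you isolate the cone bound $\|\vece_{\setP^c}\|_1\leq\|\vece_\setP\|_1+2\|\vecn_{\setP^c}\|_1$ first and invoke the $\varepsilon_\setP$-concentration of $\vecn$ last, whereas the paper carries all bounds in a single inequality chain---but the algebra is identical in substance.
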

\begin{proof} 
Set $\setP^c=\{1,\dots,m\}\!\setminus\!\setP$ and let $\vecq=\matU^\herm\vecp$. Note that $\vecq_\setQ=\vecq$ as a consequence of  $\vecp\in\setW^{\matU,\setQ}$, which is by assumption. 
We have 
\begin{align}
\|\vecn\|_1
&=\|\vecy-\vecp\|_1\\
&\geq \|\vecy-\vecz\|_1\\
&=\|\vecn-\tilde \vecz\|_1\label{eq:huu1}\\
&=\|(\vecn-\tilde \vecz)_\setP\|_1+\|(\vecn-\tilde \vecz)_{\setP^c}\|_1\label{eq:huu1a}\\
&\geq \|\vecn_\setP\|_1-\|\vecn_{\setP^c}\|_1 +\|\tilde \vecz_{\setP^c}\|_1 - \|\tilde \vecz_\setP\|_1\label{eq:huu2}\\
&=\|\vecn\|_1-2\|\vecn_{\setP^c}\|_1 +\|\tilde \vecz\|_1 - 2\|\tilde \vecz_\setP\|_1\\
&\geq \|\vecn\|_1(1-2\varepsilon_\setP) +\|\tilde\vecz\|_1\mleft(1-2\Sigma_{\setP,\setQ}(\matU)\mright),\label{eq:huu3} 
\end{align}
where in \eqref{eq:huu1} we set $\tilde \vecz=\vecz-\vecp$,   
in \eqref{eq:huu2} we applied the reverse triangle inequality, and in \eqref{eq:huu3} 
we used that $\vecn$ is $\varepsilon_\setP$-concentrated and $\tilde \vecz\in \setW^{\matU,\setQ}$, owing to  
$\vecz\in \setW^{\matU,\setQ}$ and $\vecp\in \setW^{\matU,\setQ}$, together with \eqref{eq:Sigma2}.   
This  yields 
\begin{align}
\|\vecz-\vecp\|_1
&=\|\tilde \vecz\|_1\\
&\leq \frac{2\varepsilon_\setP}{1-2\Sigma_{\setP,\setQ}(\matU)}\|\vecn\|_1. 
\end{align}
Finally, \eqref{eq:assnormQ1} implies $\Sigma_{\setP,\setQ}(\matU)<1/2$ thanks to   \eqref{eq:normineqa}. 
\end{proof}  
Note that for  $\varepsilon_\setP=0$, i.e., the noise vector is supported on $\setP$, we can 
  recover $\vecp$ from $\vecy=\vecp+\vecn$ perfectly provided that  $\Sigma_{\setP,\setQ}(\matU)<1/2$.   
For the special case  $\matU=\matF$,  this is guaranteed by  
\begin{align}
|\setP||\setQ|<\frac{m}{2},   
\end{align}
and perfect recovery of $\vecp$ from $\vecy=\vecp+\vecn$ amounts to the finite-dimensional version of what is known 
as Logan's phenomenon \cite[Section 6.2]{dost89}. 

\subsection{Coherence-based\index{subject}{coherence}  Uncertainty Relation\index{subject}{uncertainty relation} for Pairs of General Matrices}

In practice, one is often interested in sparse signal representations with respect to general (i.e., possibly redundant or incomplete) dictionaries. The purpose of this section is to provide a corresponding general uncertainty relation\index{subject}{uncertainty relation}. 
Specifically, we consider representations of a given  signal vector $\vecs$ according to $\vecs=\matA\vecp=\matB\vecq$, where $\matA\in\complex^{m\times p}$ and $\matB\in\complex^{m\times q}$ are  general matrices, $\vecp\in\complex^p$, and $\vecq\in\complex^q$. 
We start by introducing the notion of  mutual coherence\index{subject}{coherence! mutual}  for pairs of matrices. 
\begin{definition}\label{def:mutualcoh}
For $\matA=(\veca_1\dots\veca_p)\in\complex^{m\times p}$ and 
$\matB=(\vecb_1\dots\vecb_q)\in\complex^{m\times q}$, both with columns $\lVert\,\cdot\,\rVert_2$-normalized to $1$, 
the  mutual coherence\index{subject}{coherence! mutual} $\bar\mu(\matA,\matB)$   is 
defined as $\bar\mu(\matA,\matB)=\max_{i,j}|\veca_i^\herm\vecb_j|$.
\end{definition}

The general uncertainty relation\index{subject}{uncertainty relation} we are now ready to state is in terms of a pair of upper bounds on $\|\vecp_\setP\|_1$ and $\|\vecq_\setQ\|_1$ for   $\setP\subseteq\{1,\dots,p\}$ and 
$\setQ\subseteq\{1,\dots,q\}$.  

\begin{theorem}\label{thm:stkupobo12}
Let  $\matA\in\complex^{m\times p}$ and $\matB\in\complex^{m\times q}$, both with column vectors $\lVert\,\cdot\,\rVert_2$-normalized to $1$, and consider   $\vecp\in\complex^p$ and $\vecq\in\complex^q$. Suppose that 
$\matA\vecp=\matB\vecq$. Then, we have
\begin{align}
\|\vecp_\setP\|_1&\leq |\setP|\mleft(\frac{\mu(\matA)\|\vecp\|_1+\bar\mu(\matA,\matB)\|\vecq\|_1}{1+\mu(\matA)}\mright)\label{eq:studerfinalAa}
\end{align}
for all  $\setP\subseteq\{1,\dots,p\}$ and, by symmetry,  
\begin{align}
\|\vecq_\setQ\|_1&\leq |\setQ|\mleft(\frac{\mu(\matB)\|\vecq\|_1 + \bar\mu(\matA,\matB)\|\vecp\|_1}{1+\mu(\matB)}\mright)\label{eq:studerfinalAb}
\end{align}
for all  $\setQ\subseteq\{1,\dots,q\}$.  
\end{theorem}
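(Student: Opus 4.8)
The plan is to establish a uniform per-coordinate bound
\[
|p_k| \leq \frac{\mu(\matA)\|\vecp\|_1 + \bar\mu(\matA,\matB)\|\vecq\|_1}{1+\mu(\matA)}
\]
valid for every index $k\in\{1,\dots,p\}$, and then simply sum this bound over $k\in\setP$ to obtain \eqref{eq:studerfinalAa}. Inequality \eqref{eq:studerfinalAb} then follows by the evident symmetry of the hypothesis $\matA\vecp=\matB\vecq$ under interchanging the roles of $(\matA,\vecp)$ and $(\matB,\vecq)$, so it suffices to treat \eqref{eq:studerfinalAa}.

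To derive the per-coordinate bound, I would set $\vecs=\matA\vecp=\matB\vecq$ and compute the inner product $\veca_k^\herm\vecs$ in two ways. Expanding $\vecs=\matB\vecq=\sum_j q_j\vecb_j$ gives $\veca_k^\herm\vecs=\sum_j q_j\,\veca_k^\herm\vecb_j$, while expanding $\vecs=\matA\vecp=\sum_i p_i\veca_i$ and using the normalization $\veca_k^\herm\veca_k=1$ isolates the diagonal term via $\veca_k^\herm\vecs=p_k+\sum_{i\neq k}p_i\,\veca_k^\herm\veca_i$. Equating the two expressions and solving for $p_k$ yields
\[
p_k=\sum_j q_j\,\veca_k^\herm\vecb_j-\sum_{i\neq k}p_i\,\veca_k^\herm\veca_i.
\]

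Taking absolute values and applying the triangle inequality, I would bound $|\veca_k^\herm\vecb_j|\leq\bar\mu(\matA,\matB)$ in the first sum and $|\veca_k^\herm\veca_i|\leq\mu(\matA)$ (legitimate precisely because $i\neq k$) in the second, obtaining $|p_k|\leq\bar\mu(\matA,\matB)\|\vecq\|_1+\mu(\matA)\sum_{i\neq k}|p_i|$. Rewriting $\sum_{i\neq k}|p_i|=\|\vecp\|_1-|p_k|$ produces $|p_k|(1+\mu(\matA))\leq\mu(\matA)\|\vecp\|_1+\bar\mu(\matA,\matB)\|\vecq\|_1$, which is exactly the claimed per-coordinate bound after dividing by $1+\mu(\matA)$.

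The argument is essentially algebraic and presents no substantial obstacle; the only point requiring care is the self-referential appearance of $|p_k|$ on both sides, resolved cleanly by collecting the $\mu(\matA)|p_k|$ term on the left. I would also flag explicitly the two normalization facts in play: $\veca_k^\herm\veca_k=1$, used to extract the diagonal contribution, and the unit $\lVert\,\cdot\,\rVert_2$-normalization of \emph{all} columns, which is what makes $\mu(\matA)$ and $\bar\mu(\matA,\matB)$ the correct uniform bounds on the off-diagonal and cross inner products, respectively.
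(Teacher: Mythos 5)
Your proposal is correct and is essentially the paper's own proof: both multiply $\matA\vecp=\matB\vecq$ on the left by $\veca_k^\herm$, isolate the diagonal term using the unit normalization, bound the off-diagonal and cross terms by $\mu(\matA)$ and $\bar\mu(\matA,\matB)$ respectively, collect the $\mu(\matA)|p_k|$ term to get the per-coordinate bound, and sum over $\setP$. The only cosmetic difference is that you solve for $p_k$ and apply the forward triangle inequality, whereas the paper lower-bounds $|\veca_k^\herm\matA\vecp|$ via the reverse triangle inequality---an equivalent rearrangement of the same algebra.
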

\begin{proof}
Since \eqref{eq:studerfinalAb} follows from \eqref{eq:studerfinalAa} simply by replacing $\matA$ by $\matB$, $\vecp$ by $\vecq$, and 
$\setP$ by $\setQ$, and noting that $\bar\mu(\matA,\matB)=\bar\mu(\matB,\matA)$, it suffices to prove  \eqref{eq:studerfinalAa}. Let  $\setP\subseteq\{1,\dots,p\}$  and consider an arbitrary but fixed $i\in\{1,\dots, p\}$.  Multiplying  $\matA\vecp=\matB\vecq$ from the left by $\veca_i^\herm$ and taking absolute values results in 
\begin{align}
|\veca_i^\herm \matA \vecp|=|\veca_i^\herm \matB \vecq|.\label{eq:eqstuder} 
\end{align}
The left-hand side of  \eqref{eq:eqstuder} can be lower-bounded according to 
\begin{align}
|\veca_i^\herm \matA \vecp|
&= \Big|p_i +\sum_{\subalign{k&=1\\ k&\neq i}}^p\veca_i^\herm\veca_kp_k\Big|\label{eq:lowerbound1}\\
&\geq |p_i|-\Big|\sum_{\subalign{k&=1\\ k&\neq i}}^p\veca_i^\herm\veca_kp_k\Big|\label{eq:usetirrev}\\
&\geq |p_i|-\sum_{\subalign{k&=1\\ k&\neq i}}^p|\veca_i^\herm\veca_k||p_k|\\
&\geq |p_i|-\mu(\matA)\sum_{\subalign{k&=1\\ k&\neq i}}^p|p_k|\label{eq:usemua}\\
&=(1+\mu(\matA))|p_i|-\mu(\matA)\|\vecp\|_1,\label{eq:lowerbound2} 
\end{align}
where \eqref{eq:usetirrev} is by the reverse triangle inequality and in \eqref{eq:usemua} we used Definition \ref{def:coh}. 
Next, we upper-bound the right-hand side of \eqref{eq:eqstuder}  according to  
\begin{align}
|\veca_i^\herm \matB \vecq|
& = \Big|\sum_{k=1}^q\veca_i^\herm\vecb_kq_k\Big|\label{eq:upperbound1}\\
&\leq \sum_{k=1}^q|\veca_i^\herm\vecb_k||q_k|\\
&\leq \bar\mu(\matA,\matB)\|\vecq\|_1,\label{eq:upperbound2}
\end{align}
where the last step is by  Definition \ref{def:mutualcoh}. Combining the lower bound \eqref{eq:lowerbound1}--\eqref{eq:lowerbound2} and the upper bound \eqref{eq:upperbound1}--\eqref{eq:upperbound2} yields 
\begin{align}\label{eq:ineqstuder}
(1+\mu(\matA))|p_i|-\mu(\matA)\|\vecp\|_1\leq \bar\mu(\matA,\matB)\|\vecq\|_1. 
\end{align}
Since \eqref{eq:ineqstuder} holds for arbitrary $i\in\{1,\dots,p\}$, we can sum over all  $i\in\setP$ and get 
\begin{align}
\|\vecp_\setP\|_1\leq|\setP|\mleft(\frac{\mu(\matA)\|\vecp\|_1+\bar\mu(\matA,\matB)\|\vecq\|_1}{1+\mu(\matA)}\mright). \label{eq:PPP}
\end{align}
\end{proof}

For the special case $\matA=\matI\in\complex^{m\times m}$ and $\matB\in\complex^{m\times m}$ with $\matB$ unitary, we have 
$\mu(\matA)=\mu(\matB)=0$ and $\bar\mu(\matI,\matB)=\mu(\congmat{\matI}{\matB})$, so that 
\eqref{eq:studerfinalAa} and \eqref{eq:studerfinalAb} simplify to  
\begin{align}
\|\vecp_\setP\|_1&\leq |\setP|\,\mu(\congmat{\matI}{\matB})\,\|\vecq\|_1\label{eq:studerfinalAUa}
\end{align}
and
\begin{align}
\|\vecq_\setQ\|_1&\leq |\setQ|\,\mu(\congmat{\matI}{\matB})\,\|\vecp\|_1,\label{eq:studerfinalAUb}
\end{align} 
respectively. 
Thus, for arbitrary but fixed $\vecp\in\setW^{\matB,\setQ}$ and $\vecq=\matB^\herm\vecp$, we have $\vecq_\setQ=\vecq$ so that   \eqref{eq:studerfinalAUa} and  \eqref{eq:studerfinalAUb}   taken together yield 
\begin{align}
\|\vecp_\setP\|_1&\leq |\setP||\setQ|\,\mu^2(\congmat{\matI}{\matB})\,\|\vecp\|_1. 
\end{align}
As $\vecp$ was assumed to be  arbitrary,  by \eqref{eq:Sigma2} this recovers the uncertainty relation\index{subject}{uncertainty relation}
\begin{align}
\Sigma_{\setP,\setQ}(\matB)&\leq |\setP||\setQ|\mu^2(\congmat{\matI}{\matB}) \label{eq:normonea}
\end{align}
in Lemma \ref{lem:U1a}. 

\subsection{Concentration Inequalities for Pairs of General Matrices}

We next refine the result in Theorem \ref{thm:stkupobo12} to vectors that are concentrated in $1$-norm according to Definition \ref{def:concl1}. 
 The formal statement is as follows. 

\begin{corollary}\label{cor1:stkupobo12} 
Let  $\matA\in\complex^{m\times p}$ and $\matB\in\complex^{m\times q}$, both with column vectors $\lVert\,\cdot\,\rVert_2$-normalized to $1$, 
$\setP\subseteq\{1,\dots,p\}$, $\setQ\subseteq\{1,\dots,q\}$, $\vecp\in\complex^p$, and $\vecq\in\complex^q$. 
Suppose that $\matA\vecp=\matB\vecq$.  Then, the following statements hold. 
\begin{enumerate}
\item\label{one}
If $\vecq$ is $\varepsilon_\setQ$-concentrated, then, 
\begin{align}
\|\vecp_\setP\|_1\leq  
\frac{|\setP|}{1+\mu(\matA)}\mleft(\mu(\matA)+\frac{\bar\mu^2(\matA,\matB)|\setQ|}{[(1+\mu(\matB))(1-\varepsilon_\setQ)-\mu(\matB)|\setQ|]_+}\mright)\|\vecp\|_1.\label{eq:frame1}
\end{align}
\item\label{two}
If $\vecp$ is $\varepsilon_\setP$-concentrated, then,
\begin{align}
\|\vecq_\setQ\|_1\leq  
\frac{|\setQ|}{1+\mu(\matB)}\mleft(\mu(\matB)+\frac{\bar\mu^2(\matA,\matB)|\setP|}{[(1+\mu(\matA))(1-\varepsilon_\setP)-\mu(\matA)|\setP|]_+}\mright)\|\vecq\|_1. \label{eq:frame2}
\end{align}
\item \label{three}If  $\vecp$ is $\varepsilon_\setP$-concentrated, $\vecq$ is $\varepsilon_\setQ$-concentrated, $\bar\mu(\matA,\matB)>0$, and $(\vecp^\tp\ \vecq^\tp)^\tp \neq\veczero$, then,  
\begin{align}
|\setP||\setQ|\geq 
\frac{[(1+\mu(\matA))(1-\varepsilon_\setP)-\mu(\matA)|\setP|]_+[(1+\mu(\matB))(1-\varepsilon_\setQ)-\mu(\matB)|\setQ|]_+}{\bar\mu^2(\matA,\matB)}.
\label{eq:frame3}
\end{align}
\end{enumerate}
\end{corollary}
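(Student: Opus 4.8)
The plan is to derive all three statements directly from the two coherence inequalities \eqref{eq:studerfinalAa} and \eqref{eq:studerfinalAb} of Theorem \ref{thm:stkupobo12}, combined with the defining inequality of $1$-norm concentration in Definition \ref{def:concl1}. Since statement \ref{two} is obtained from statement \ref{one} by the substitution $\matA\leftrightarrow\matB$, $\vecp\leftrightarrow\vecq$, $\setP\leftrightarrow\setQ$, $\varepsilon_\setP\leftrightarrow\varepsilon_\setQ$ (using $\bar\mu(\matA,\matB)=\bar\mu(\matB,\matA)$), it suffices to establish \ref{one} and \ref{three}.

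For statement \ref{one}, I would first convert the $\varepsilon_\setQ$-concentration of $\vecq$ into the lower bound $\|\vecq_\setQ\|_1\geq(1-\varepsilon_\setQ)\|\vecq\|_1$ via the triangle inequality. Chaining this with \eqref{eq:studerfinalAb} gives
\[
(1-\varepsilon_\setQ)\|\vecq\|_1\leq |\setQ|\,\frac{\mu(\matB)\|\vecq\|_1+\bar\mu(\matA,\matB)\|\vecp\|_1}{1+\mu(\matB)},
\]
which I would rearrange into
\[
[(1+\mu(\matB))(1-\varepsilon_\setQ)-\mu(\matB)|\setQ|]\,\|\vecq\|_1\leq |\setQ|\,\bar\mu(\matA,\matB)\,\|\vecp\|_1. \qquad (\star)
\]
When the bracketed coefficient is positive, $(\star)$ yields a bound on $\|\vecq\|_1$ in terms of $\|\vecp\|_1$; substituting it into \eqref{eq:studerfinalAa} produces exactly \eqref{eq:frame1}. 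When the bracketed coefficient is non-positive, the denominator in \eqref{eq:frame1} vanishes, its right-hand side is $+\infty$ (with the degenerate subcase $\|\vecp\|_1=0$ read off through the convention $0\cdot\infty=0$), and the bound holds trivially.

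For statement \ref{three}, I would run the argument of \ref{one} symmetrically to obtain, alongside $(\star)$, its mirror image
\[
[(1+\mu(\matA))(1-\varepsilon_\setP)-\mu(\matA)|\setP|]\,\|\vecp\|_1\leq |\setP|\,\bar\mu(\matA,\matB)\,\|\vecq\|_1. \qquad (\star\star)
\]
Multiplying $(\star)$ and $(\star\star)$ and cancelling the common factor $\|\vecp\|_1\|\vecq\|_1$ yields the product bound \eqref{eq:frame3}. The delicate point is exactly this cancellation together with the clipping operator $[\,\cdot\,]_+$: the cancellation is legitimate only when both norms are strictly positive, whereas the hypothesis guarantees only $(\vecp^\tp\ \vecq^\tp)^\tp\neq\veczero$. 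I would dispose of this by a short case analysis. If both brackets are non-positive, the right-hand side of \eqref{eq:frame3} is $0$ and the inequality is immediate. If, say, $\vecp=\veczero$ (whence $\vecq\neq\veczero$), then $(\star)$ forces the $\setQ$-bracket to be $\leq 0$, again collapsing the right-hand side to $0$; the symmetric subcase is analogous. In the remaining case $\|\vecp\|_1,\|\vecq\|_1>0$ with both brackets positive, the left-hand sides of $(\star)$ and $(\star\star)$ are strictly positive, hence so are the right-hand sides (using $\bar\mu(\matA,\matB)>0$), so the two inequalities may be multiplied and the positive factor cancelled to give $[\cdots]_+[\cdots]_+\leq|\setP||\setQ|\bar\mu^2(\matA,\matB)$.

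The main obstacle is not any single computation but the careful bookkeeping of the clipping operator $[\,\cdot\,]_+$ together with the boundary cases (vanishing brackets and vanishing vectors), where the reduction from the generic multiplicative argument to the stated $[\,\cdot\,]_+$-form must be verified; all the analytic content is already packaged in Theorem \ref{thm:stkupobo12} and Definition \ref{def:concl1}.
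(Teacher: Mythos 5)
Your proof follows the paper's own argument essentially step for step: both derive \eqref{eq:frame1} by rearranging \eqref{eq:studerfinalAb} under the concentration hypothesis into a bound on $\|\vecq\|_1$ and substituting it into \eqref{eq:studerfinalAa}, obtain \eqref{eq:frame2} by swapping the roles of the two matrices and vectors, and prove \eqref{eq:frame3} by multiplying the two rearranged inequalities after a case analysis for the degenerate situations. The only blemish is that your case enumeration in part \ref{three} is not literally exhaustive---it omits the case of both norms positive with exactly one bracket non-positive---but that case is immediate from your own first observation (a non-positive bracket annihilates the right-hand side of \eqref{eq:frame3} through the clipping), so nothing of substance is missing.
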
 
\begin{proof}
By Theorem \ref{thm:stkupobo12}, we have   
\begin{align}
\|\vecp_\setP\|_1&\leq |\setP|\mleft(\frac{\mu(\matA)\|\vecp\|_1+\bar\mu(\matA,\matB)\|\vecq\|_1}{1+\mu(\matA)}\mright)\label{eq:studerfinalBa}
\end{align}
and
\begin{align}
\|\vecq_\setQ\|_1&\leq |\setQ|\mleft(\frac{\mu(\matB)\|\vecq\|_1 + \bar\mu(\matA,\matB)\|\vecp\|_1}{1+\mu(\matB)}\mright).\label{eq:studerfinalBb}
\end{align}
Suppose now that $\vecq$ is $\varepsilon_\setQ$-concentrated, i.e., $\|\vecq_\setQ\|_1\geq (1-\varepsilon_\setQ)\|\vecq\|_1$.  
Then,  \eqref{eq:studerfinalBb}   implies that
\begin{align}\label{eq:dddd}
\|\vecq\|_1\leq \frac{|\setQ|\bar\mu(\matA,\matB)}{[(1+\mu(\matB))(1-\varepsilon_\setQ)-\mu(\matB)|\setQ|]_+}\|\vecp\|_1.
\end{align}
Using \eqref{eq:dddd} in \eqref{eq:studerfinalBa} yields \eqref{eq:frame1}. 
The relation \eqref{eq:frame2} follows from \eqref{eq:frame1} by swapping the roles of $\matA$ and $\matB$, $\vecp$ and $\vecq$, and $\setP$ and $\setQ$, and upon noting that $\bar\mu(\matA,\matB)=\bar\mu(\matB,\matA)$.
It remains to establish \eqref{eq:frame3}. 
Using  $\|\vecp_\setP\|_1\geq (1-\varepsilon_\setP)\|\vecp\|_1$ in \eqref{eq:studerfinalBa} and 
$\|\vecq_\setQ\|_1\geq (1-\varepsilon_\setQ)\|\vecq\|_1$ in \eqref{eq:studerfinalBb} yields
\begin{align}
\|\vecp\|_1[(1+\mu(\matA))(1-\varepsilon_\setP)-\mu(\matA)|\setP|]_+&\leq \bar\mu(\matA,\matB)\|\vecq\|_1|\setP|\label{eq:ineqstuder2}
\end{align}
and
\begin{align}
\|\vecq\|_1[(1+\mu(\matB))(1-\varepsilon_\setQ) -\mu(\matB)|\setQ|]_+&\leq \bar\mu(\matA,\matB)\|\vecp\|_1|\setQ|,\label{eq:ineqstuder2a}
\end{align}
respectively. 
Suppose first that  $\vecp=\veczero$. Then, $\vecq\neq\veczero$ by assumption, and  \eqref{eq:ineqstuder2a}  becomes 
\begin{align}
[(1+\mu(\matB))(1-\varepsilon_\setQ) -\mu(\matB)|\setQ|]_+= 0. 
\end{align}
In this case \eqref{eq:frame3} holds trivially. 
Similarly, if 
$\vecq=\veczero$, then $\vecp\neq\veczero$ again by assumption, and  \eqref{eq:ineqstuder2}  becomes 
\begin{align}
[(1+\mu(\matA))(1-\varepsilon_\setP)-\mu(\matA)|\setP|]_+= 0. 
\end{align}
As before, \eqref{eq:frame3} holds trivially. 
Finally, if $\vecp\neq\veczero$ and $\vecq\neq\veczero$,  then we  multiply \eqref{eq:ineqstuder2} by \eqref{eq:ineqstuder2a} and  divide the result by $\bar\mu^2(\matA,\matB)\|\vecp\|_1\|\vecq\|_1$ which yields \eqref{eq:frame3}. 
\end{proof}
Corollary \ref{cor1:stkupobo12} will be used in Section \ref{sec2} to derive recovery thresholds for sparse signal separation\index{subject}{signal separation}. 
The lower bound  on $|\setP||\setQ|$ in \eqref{eq:frame3} is \cite[Theorem 1]{stkupobo12} 
and states that a nonzero vector can not be arbitrarily well concentrated with respect to two different general matrices $\matA$ and $\matB$.  
For the special case  $\varepsilon_\setQ=0$ and  $\matA$ and $\matB$ unitary, and hence $\mu(\matA)=\mu(\matB)=0$ and 
$\bar\mu(\matA,\matB)=\mu(\congmat{\matA}{\matB})$,   \eqref{eq:frame3} recovers  Lemma \ref{lem:appUa}. 

Particularizing \eqref{eq:frame3} to $\varepsilon_\setP=\varepsilon_\setQ=0$ yields  the following result. 

\begin{corollary}\cite[Lemma 33]{kudubo12}\label{cor:stkupobo12}
Let  $\matA\in\complex^{m\times p}$ and $\matB\in\complex^{m\times q}$, both with column vectors $\lVert\,\cdot\,\rVert_2$-normalized to $1$, and consider  $\vecp\in\complex^p$ and $\vecq\in\complex^q$ with $(\vecp^\tp\ \vecq^\tp)^\tp \neq\veczero$.  
Suppose that $\matA\vecp=\matB\vecq$. 
Then, $\|\vecp\|_0\|\vecq\|_0\geq f_{\matA,\matB}(\|\vecp\|_0, \|\vecq\|_0)$, where 
\begin{align}\label{eq:studerfinal}
f_{\matA,\matB}(u, v)=\frac{[1+\mu(\matA)(1-u)]_+[1+\mu(\matB)(1-v)]_+}{\bar\mu^2(\matA,\matB)}. 
\end{align} 
\end{corollary}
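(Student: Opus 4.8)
The plan is to read off Corollary~\ref{cor:stkupobo12} as the special case $\varepsilon_\setP=\varepsilon_\setQ=0$ of the concentration bound \eqref{eq:frame3} in part~\ref{three} of Corollary~\ref{cor1:stkupobo12}. The only freedom to exploit is the choice of the index sets, and I would take them to be the exact supports, $\setP=\{i:p_i\neq 0\}$ and $\setQ=\{j:q_j\neq 0\}$, so that $|\setP|=\|\vecp\|_0$ and $|\setQ|=\|\vecq\|_0$. With this choice $\vecp_\setP=\vecp$ and $\vecq_\setQ=\vecq$, whence $\|\vecp-\vecp_\setP\|_1=0$ and $\|\vecq-\vecq_\setQ\|_1=0$; by Definition~\ref{def:concl1} the vectors $\vecp$ and $\vecq$ are then $\varepsilon_\setP$- and $\varepsilon_\setQ$-concentrated with $\varepsilon_\setP=\varepsilon_\setQ=0$, which is exactly the hypothesis we need.

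Suppose first that $\bar\mu(\matA,\matB)>0$. Then all the hypotheses of part~\ref{three} of Corollary~\ref{cor1:stkupobo12} hold --- the remaining condition $(\vecp^\tp\ \vecq^\tp)^\tp\neq\veczero$ is assumed --- so \eqref{eq:frame3} applies with $\varepsilon_\setP=\varepsilon_\setQ=0$. Substituting $|\setP|=\|\vecp\|_0$ and $|\setQ|=\|\vecq\|_0$ and rewriting each bracket via $(1+\mu(\matA))-\mu(\matA)|\setP|=1+\mu(\matA)(1-\|\vecp\|_0)$ (and likewise for $\matB$) turns the right-hand side of \eqref{eq:frame3} into precisely $f_{\matA,\matB}(\|\vecp\|_0,\|\vecq\|_0)$ of \eqref{eq:studerfinal}, yielding the claim. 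This step is pure bookkeeping.

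The delicate point --- and the one I expect to require the most care --- is the degenerate case $\bar\mu(\matA,\matB)=0$, which is explicitly barred from part~\ref{three} of Corollary~\ref{cor1:stkupobo12} and makes the denominator of $f_{\matA,\matB}$ vanish. Here I would argue directly. If $\bar\mu(\matA,\matB)=0$ then every column of $\matA$ is orthogonal to every column of $\matB$, so $\range(\matA)\perp\range(\matB)$, and $\matA\vecp=\matB\vecq$ forces $\matA\vecp=\matB\vecq=\veczero$. Since $(\vecp^\tp\ \vecq^\tp)^\tp\neq\veczero$, at least one of the two, say $\vecp$, is nonzero, so the $\|\vecp\|_0$ columns of $\matA$ on which $\vecp$ is supported are linearly dependent. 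A diagonal-dominance (Gershgorin) argument on their Gram matrix shows that unit-norm columns with pairwise coherence at most $\mu(\matA)$ can be dependent only when $\|\vecp\|_0\geq 1+1/\mu(\matA)$, equivalently $1+\mu(\matA)(1-\|\vecp\|_0)\leq 0$ (the sub-case $\mu(\matA)=0$ cannot occur, as orthonormal columns have trivial kernel); the corresponding factor $[\,\cdot\,]_+$ in \eqref{eq:studerfinal} therefore vanishes, as does that for $\vecq$ whenever $\vecq\neq\veczero$. Thus the numerator of $f_{\matA,\matB}$ is zero, and under the convention $0\cdot\infty=0$ the asserted inequality collapses to $\|\vecp\|_0\|\vecq\|_0\geq 0$, which is trivially true. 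Assembling the two cases completes the proof.
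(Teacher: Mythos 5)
Your proof is correct, and its core is exactly the paper's argument: take $\setP$ and $\setQ$ to be the supports of $\vecp$ and $\vecq$, so that $|\setP|=\|\vecp\|_0$, $|\setQ|=\|\vecq\|_0$, both vectors are $\varepsilon$-concentrated with $\varepsilon=0$, and then apply part~\ref{three} of Corollary~\ref{cor1:stkupobo12}, i.e.\ \eqref{eq:frame3}, with $\varepsilon_\setP=\varepsilon_\setQ=0$. Where you go beyond the paper is the degenerate case $\bar\mu(\matA,\matB)=0$: the paper's one-line proof invokes \eqref{eq:frame3} without comment, even though part~\ref{three} explicitly assumes $\bar\mu(\matA,\matB)>0$, so strictly speaking that case is left unaddressed there. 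Your treatment of it is sound --- orthogonality of the column spaces forces $\matA\vecp=\matB\vecq=\veczero$, the Gershgorin/diagonal-dominance bound shows that the $[\,\cdot\,]_+$ factor attached to whichever of $\vecp,\vecq$ is nonzero must vanish (with the sub-case $\mu(\matA)=0$ correctly excluded), and the convention $0\cdot\infty=0$ then makes $f_{\matA,\matB}(\|\vecp\|_0,\|\vecq\|_0)=0$, so the inequality holds trivially. This is a minor but genuine tightening of the paper's exposition rather than a different route.
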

\begin{proof}
Let $\setP=\{i\in\{1,\dots,p\}:p_i\neq 0\}$ and $\setQ=\{i\in\{1,\dots,q\}:q_i\neq 0\}$, so that   
 $\vecp_\setP=\vecp$, $\vecq_\setQ=\vecq$, $|\setP|=\|\vecp\|_0$, and $|\setQ|=\|\vecq\|_0$. 
The claim now follows directly  from \eqref{eq:frame3} with 
$\varepsilon_\setP=\varepsilon_\setQ=0$.
\end{proof}
If  $\matA$ and $\matB$ are both unitary, 
then  $\mu(\matA)=\mu(\matB)=0$ and $\bar\mu(\matA,\matB)=\mu(\congmat{\matA}{\matB})$,  and  Corollary \ref{cor:stkupobo12} recovers the  Elad-Bruckstein result in Corollary \ref{cor:elbr02}. 

Corollary \ref{cor:stkupobo12} admits the following appealing geometric interpretation in terms of a null-space property\index{subject}{null-space property},  
which will be seen in Section \ref{sec3} to pave the way to an extension of the classical notion of sparsity\index{subject}{sparsity} to a more general concept of parsimony.  

 \begin{lemma}\label{lem:ns}
Let  $\matA\in\complex^{m\times p}$ and $\matB\in\complex^{m\times q}$, both with column vectors $\lVert\,\cdot\,\rVert_2$-normalized to $1$. Then,  the set (which actually is a finite union of subspaces) 
\begin{align}\label{eq:usub}
\setS=\Bigg\{
\begin{pmatrix}
\vecp\\
\vecq
\end{pmatrix}:\vecp\in\complex^p,\ \vecq\in\complex^q,\  \|\vecp\|_0\|\vecq\|_0<f_{\matA,\matB}(\|\vecp\|_0, \|\vecq\|_0)\Bigg\} 
\end{align}
with  $f_{\matA,\matB}$ 
defined in \eqref{eq:studerfinal} 
intersects the kernel of $\congmat{\matA}{\matB}$  trivially, i.e., 
\begin{align}
\ker(\congmat{\matA}{\matB})\,\cap\setS=\{\veczero\}.
\end{align} 
\end{lemma}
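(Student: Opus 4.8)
The plan is to read the coherence bound of Corollary~\ref{cor:stkupobo12} as a statement about $\ker(\congmat{\matA}{\matB})$ and thereby obtain the lemma with essentially no new analysis. The only genuine reformulation needed is the observation that $(\vecp^\tp\ \vecq^\tp)^\tp\in\ker(\congmat{\matA}{\matB})$ means $\matA\vecp+\matB\vecq=\veczero$, i.e.\ $\matA\vecp=\matB(-\vecq)$, which is precisely the hypothesis $\matA\vecp'=\matB\vecq'$ of Corollary~\ref{cor:stkupobo12} applied with $\vecp'=\vecp$ and $\vecq'=-\vecq$. Since negating $\vecq$ leaves its support unchanged, $\|-\vecq\|_0=\|\vecq\|_0$, so the corollary's conclusion transfers verbatim back to $(\vecp,\vecq)$.

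First I would verify that $\veczero$ lies in the intersection, so that the claimed set is at least as large as $\{\veczero\}$: trivially $\veczero\in\ker(\congmat{\matA}{\matB})$, and since $\|\veczero\|_0=0$ forces $\|\vecp\|_0\|\vecq\|_0=0$ while $f_{\matA,\matB}(0,0)=(1+\mu(\matA))(1+\mu(\matB))/\bar\mu^2(\matA,\matB)>0$, the defining strict inequality of $\setS$ holds and $\veczero\in\setS$. It then remains to show that no \emph{nonzero} vector survives. Arguing by contradiction, suppose $(\vecp^\tp\ \vecq^\tp)^\tp\neq\veczero$ belongs to $\ker(\congmat{\matA}{\matB})\cap\setS$. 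Kernel membership gives $\matA\vecp=\matB(-\vecq)$, and as $(\vecp^\tp\ (-\vecq)^\tp)^\tp\neq\veczero$, Corollary~\ref{cor:stkupobo12} yields $\|\vecp\|_0\|\vecq\|_0\geq f_{\matA,\matB}(\|\vecp\|_0,\|\vecq\|_0)$. This contradicts the defining condition $\|\vecp\|_0\|\vecq\|_0<f_{\matA,\matB}(\|\vecp\|_0,\|\vecq\|_0)$ of $\setS$, so the intersection contains only $\veczero$.

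I do not expect a real obstacle here: all the analytic content is already carried by Theorem~\ref{thm:stkupobo12} and Corollary~\ref{cor:stkupobo12}, and this lemma is a geometric repackaging. The two points demanding a little care are the sign flip $\vecq\mapsto-\vecq$ when passing between the kernel condition $\matA\vecp+\matB\vecq=\veczero$ and the representation $\matA\vecp=\matB\vecq$ used by the corollary, and the bookkeeping that $\setS$ is cut out by a \emph{strict} inequality while the corollary supplies the reversed \emph{non-strict} one, so that the two are mutually exclusive for every nonzero vector.

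Finally, although it is not needed for the stated conclusion, the parenthetical remark that $\setS$ is a finite union of subspaces can be justified by writing $\setS$ as the union, over all index sets $I\subseteq\{1,\dots,p\}$ and $J\subseteq\{1,\dots,q\}$ satisfying $|I||J|<f_{\matA,\matB}(|I|,|J|)$, of the coordinate subspaces consisting of those $(\vecp^\tp\ \vecq^\tp)^\tp$ with $\vecp$ supported within $I$ and $\vecq$ supported within $J$. The equivalence of this union with $\setS$ uses that $f_{\matA,\matB}$ is nonincreasing in each argument (because $\mu(\matA),\mu(\matB)\geq 0$) while $\|\vecp\|_0\|\vecq\|_0$ is nondecreasing in the supports, so enlarging the admissible support preserves the strict inequality.
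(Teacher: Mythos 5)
Your proposal is correct and follows essentially the same route as the paper: both reduce the lemma to Corollary~\ref{cor:stkupobo12} via the observation that kernel membership $\matA\vecp+\matB\vecq=\veczero$ is the corollary's hypothesis after the support-preserving sign flip $\vecq\mapsto-\vecq$, combined with the defining (strict) inequality of $\setS$. The paper phrases this as a chain of equivalences while you argue by contradiction and additionally verify $\veczero\in\setS$ and the finite-union-of-subspaces remark, but the mathematical content is identical.
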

\begin{proof}
The statement of this lemma is  equivalent to the statement of Corollary \ref{cor:stkupobo12} through a chain of equivalences between the following statements: 
\begin{enumerate}
\item \label{equiv1}
$\ker(\congmat{\matA}{\matB})\cap\setS=\{\veczero\}$;
\item \label{equiv2}
if $(\vecp^\tp -\vecq^\tp)^\tp\in\ker(\congmat{\matA}{\matB})\!\setminus\!\{\veczero\}$, then $\|\vecp\|_0\|\vecq\|_0\geq f_{\matA,\matB}(\|\vecp\|_0, \|\vecq\|_0)$;
\item \label{equiv3} if  $\matA\vecp=\matB\vecq$ with  $(\vecp^\tp\ \vecq^\tp)^\tp \neq\veczero$, then $\|\vecp\|_0\|\vecq\|_0\geq f_{\matA,\matB}(\|\vecp\|_0, \|\vecq\|_0)$, 
\end{enumerate}
where \ref{equiv1} $\Leftrightarrow$  \ref{equiv2} is by the definition of $\setS$, 
\ref{equiv2} $\Leftrightarrow$  \ref{equiv3} follows from the fact that  $\matA\vecp=\matB\vecq$ with  $(\vecp^\tp\ \vecq^\tp)^\tp \neq\veczero$  is equivalent to 
$(\vecp^\tp\ -\vecq^\tp)^\tp\in\ker(\congmat{\matA}{\matB})\!\setminus\!\{\veczero\}$, and \ref{equiv3} is the statement in Corollary \ref{cor:stkupobo12}. 
\end{proof}

\section{Sparse Signal Separation\index{subject}{signal separation}}\label{sec2}

Numerous practical signal recovery tasks can be cast as sparse signal separation\index{subject}{signal separation} problems of the following form. 
We want to recover $\vecy\in\complex^p$  with $\|\vecy\|_0\leq \Sy$ and/or $\vecz\in\complex^q$  with $\|\vecz\|_0\leq \Sz$ from the noiseless observation 
\begin{align}\label{eq:problem1}
\vecw
&=\matA\vecy+\matB\vecz,    
\end{align}
where 
$\matA\in\complex^{m\times p}$ and 
$\matB\in\complex^{m\times q}$. 
Here, $\Sy$ and $\Sz$ are the sparsity levels\index{subject}{sparsity! level} of $\vecy$ and $\vecz$  with corresponding 
ambient dimensions  $p$ and $q$, respectively. 
Prominent applications include (image) inpainting, declipping,  super-resolution, the recovery of signals corrupted by impulse noise, and the separation of (e.g., audio or video) signals into two distinct components \cite[Section I]{stkupobo12}.  We next briefly describe some of these problems. 
\begin{enumerate}
\item \emph{Clipping:} 
Non-linearities in power-amplifiers or in
analog-to-digital converters often cause signal clipping
or saturation \cite{absm91}. 
This effect can be cast into the 
signal model \eqref{eq:problem1}  by setting $\matB=\matI$, 
identifying  $\vecs=\matA\vecy$ with the signal to be clipped, and setting 
$\vecz=(g_a(\vecs)-\vecs)$  
with  $g_a(\cdot)$
realizing entry-wise clipping 
of the amplitude 
to the interval $[0,a]$.  
If the clipping level $a$ is not too small, then $\vecz$ will be  sparse, i.e., $t\ll q$. 

\item \emph{Missing entries:} 
Our framework also encompasses super-resolution  \cite{mayu10,elhe01} and inpainting \cite{besacaba00} of, e.g., images, audio, and video signals. 
In both these applications only a subset of the entries of the (full-resolution) signal vector
$\vecs = \matA\vecy$ is available and the task is to fill in the
missing entries, which are  accounted for by writing $\vecw=\vecs+\vecz$ with  
$z_i=-s_i$ if the $i$-th entry of $\vecs$ is missing and  $z_i=0$ else. If the number of entries missing is not too large, then $\vecz$ is sparse, i.e., 
$t\ll q$.

\item \emph{Signal separation\index{subject}{signal separation}:} 
Separation of (audio, image, or video) signals
into two structurally distinct components also fits into the  framework described above.
A prominent example  is the separation of
texture from cartoon parts in images (see \cite{elstqudo05,doku13} 
and references therein). The
matrices $\matA$ and $\matB$ are chosen to allow
for sparse representations of the two distinct features.
Note that here  $\matB\vecz$
no longer plays the role of undesired noise and the goal 
is to recover both $\vecy$ and $\vecz$ from the observation  $\vecw=\matA\vecy+\matB\vecz$. 
\end{enumerate}

The first two examples above demonstrate that in many practically relevant applications the locations of the 
possibly nonzero entries of one of the sparse vectors, say $\vecz$, may be known.  
This can be accounted for by removing the columns of $\matB$ corresponding to the other entries, 
which results in  $t=q$, i.e., the sparsity level\index{subject}{sparsity! level} of $\vecz$ equals the ambient dimension. 
We next show how Corollary \ref{cor1:stkupobo12} can be used to state a sufficient condition  for recovery of $\vecy$  from $\vecw=\matA\vecy+\matB\vecz$ 
when  $t=q$. For recovery guarantees in the case where  the sparsity levels\index{subject}{sparsity! level} of both $\vecy$ and $\vecz$ are strictly smaller than their corresponding ambient dimensions, we refer  to \cite[Theorem 8]{stkupobo12}. 

\begin{theorem}\cite[Theorem 4, Theorem 7]{stkupobo12}\label{thm:studl1}
Let $\vecy\in\complex^p$ with $\|\vecy\|_0\leq \Sy$, 
$\vecz\in\complex^q$,  
$\matA\in\complex^{m\times p}$, and $\matB\in\complex^{m\times q}$, 
both with column vectors $\lVert\,\cdot\,\rVert_2$-normalized to $1$ and $\bar\mu(\matA,\matB)>0$. 
Suppose that    
\begin{align}\label{eq:threshsigsepa}
 2\Sy q < f_{\matA,\matB}( 2\Sy, q)
\end{align}
with 
\begin{align}\label{eq:studerfinal2}
f_{\matA,\matB}(u, v)=\frac{[1+\mu(\matA)(1-u)]_+[1+\mu(\matB)(1-v)]_+}{\bar\mu^2(\matA,\matB)}. 
\end{align}
Then, $\vecy$ can be recovered from $\vecw=\matA\vecy+\matB\vecz$ by either of the following algorithms: 
\begin{align}
&(\mathrm{P0})\ \ 
\begin{cases}
\text{minimize}\ \|\tilde\vecy\|_0\\
\text{subject to}\ \matA\tilde\vecy\in\{\vecw+\matB\tilde\vecz: \tilde\vecz\in\complex^q\}. 
\end{cases}\label{eq:P0}\\[2mm]
&(\mathrm{P1})\ \ 
\begin{cases}
\text{minimize}\ \|\tilde\vecy\|_1\\
\text{subject to}\ \matA\tilde\vecy\in\{\vecw+\matB\tilde\vecz: \tilde\vecz\in\complex^q\}. 
\end{cases}\label{eq:P1}
\end{align}
\end{theorem}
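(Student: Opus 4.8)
The plan is to show that in both cases the unique optimizer equals $\vecy$, by arguing that any feasible perturbation of $\vecy$ must be trivial. First observe that $\vecy$ is feasible for both (P0) and (P1): taking $\tilde\vecz=-\vecz$ gives $\matA\vecy=\vecw+\matB\tilde\vecz$. Now let $\tilde\vecy$ be any minimizer, pick $\tilde\vecz$ with $\matA\tilde\vecy=\vecw+\matB\tilde\vecz$, and set $\vecp=\tilde\vecy-\vecy$ and $\vecq=\vecz+\tilde\vecz$. Subtracting $\vecw=\matA\vecy+\matB\vecz$ yields the crucial identity $\matA\vecp=\matB\vecq$, so that Corollaries \ref{cor:stkupobo12} and \ref{cor1:stkupobo12} become applicable to the pair $(\vecp,\vecq)$. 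Since $\vecq\in\complex^q$ we may always take $\setQ=\{1,\dots,q\}$, which renders $\vecq$ trivially $\varepsilon_\setQ$-concentrated with $\varepsilon_\setQ=0$ and $|\setQ|=q$. The goal in each case is to force $\vecp=\veczero$, i.e.\ $\tilde\vecy=\vecy$.

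For (P0) I would use that optimality gives $\|\tilde\vecy\|_0\le\|\vecy\|_0\le\Sy$, so that $\|\vecp\|_0\le\|\tilde\vecy\|_0+\|\vecy\|_0\le 2\Sy$, while trivially $\|\vecq\|_0\le q$. If $\vecp\neq\veczero$, then $(\vecp^\tp\ \vecq^\tp)^\tp\neq\veczero$ and Corollary \ref{cor:stkupobo12} gives $\|\vecp\|_0\|\vecq\|_0\ge f_{\matA,\matB}(\|\vecp\|_0,\|\vecq\|_0)$. The key observation is that $f_{\matA,\matB}(u,v)$ is nonincreasing in each of $u,v$ (because $\mu(\matA),\mu(\matB)\ge 0$), whence $f_{\matA,\matB}(\|\vecp\|_0,\|\vecq\|_0)\ge f_{\matA,\matB}(2\Sy,q)$, whereas $\|\vecp\|_0\|\vecq\|_0\le 2\Sy q$. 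Chaining these bounds produces $2\Sy q\ge f_{\matA,\matB}(2\Sy,q)$, contradicting \eqref{eq:threshsigsepa}. Hence $\vecp=\veczero$.

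For (P1) the $\ell_1$-minimality must be converted into a cone condition. Writing $\setP=\{i:y_i\neq 0\}$ (so $|\setP|\le\Sy$), the split $\vecp=\vecp_\setP+\vecp_{\setP^c}$ together with $\vecy_{\setP^c}=\veczero$, the reverse triangle inequality, and $\|\tilde\vecy\|_1\le\|\vecy\|_1$ yields $\|\vecp_{\setP^c}\|_1\le\|\vecp_\setP\|_1$, hence $\|\vecp\|_1\le 2\|\vecp_\setP\|_1$. On the other hand, applying part \ref{one} of Corollary \ref{cor1:stkupobo12} with $\setQ=\{1,\dots,q\}$ and $\varepsilon_\setQ=0$ gives $\|\vecp_\setP\|_1\le c\,\|\vecp\|_1$ with
\[
c=\frac{|\setP|}{1+\mu(\matA)}\mleft(\mu(\matA)+\frac{q\,\bar\mu^2(\matA,\matB)}{[1+\mu(\matB)(1-q)]_+}\mright).
\]
Combining the two bounds gives $\|\vecp_\setP\|_1\le 2c\,\|\vecp_\setP\|_1$, forcing $\vecp_\setP=\veczero$ (and then $\vecp=\veczero$) as soon as $2c<1$.

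The crux, and the only delicate point, is the algebraic verification that the hypothesis \eqref{eq:threshsigsepa} is precisely the condition $2c<1$. I would first note that $2\Sy q<f_{\matA,\matB}(2\Sy,q)$ forces both $[1+\mu(\matA)(1-2\Sy)]_+$ and $[1+\mu(\matB)(1-q)]_+$ to be strictly positive—otherwise $f_{\matA,\matB}(2\Sy,q)=0$ cannot exceed the nonnegative quantity $2\Sy q$—so the $[\cdot]_+$ in the denominator of $c$ may be dropped. Clearing denominators in $2c<1$, bounding $|\setP|\le\Sy$ (the worst case), and cancelling the factor $1+\mu(\matA)$ then reduces $2c<1$ exactly to $2\Sy q\,\bar\mu^2(\matA,\matB)<[1+\mu(\matA)(1-2\Sy)][1+\mu(\matB)(1-q)]$, i.e.\ to \eqref{eq:threshsigsepa} with $f_{\matA,\matB}$ as in \eqref{eq:studerfinal2}. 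Tracking the positivity of these bracketed terms and confirming that the equivalence survives the inequality $|\setP|\le\Sy$ is the main bookkeeping obstacle; the monotonicity argument for (P0) and the cone condition for (P1) are otherwise routine.
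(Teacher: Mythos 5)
Your proposal is correct. For (P1) it is essentially the paper's own argument: the same decomposition $\vecp=\tilde\vecy-\vecy$, $\vecq=\vecz+\tilde\vecz$ with $\matA\vecp=\matB\vecq$, and the same cone condition $\|\vecp_{\setP^c}\|_1\le\|\vecp_\setP\|_1$ extracted from $\ell_1$-optimality (the paper phrases this as $\vecp$ being $1/2$-concentrated on the support of $\vecy$). The only divergence is the endgame: the paper inserts $\varepsilon_\setP=1/2$, $\varepsilon_\setQ=0$, $|\setP|\le\Sy$, $|\setQ|=q$ into Part \ref{three} of Corollary \ref{cor1:stkupobo12} and reads off $2\Sy q\ge f_{\matA,\matB}(2\Sy,q)$, contradicting \eqref{eq:threshsigsepa}, whereas you invoke Part \ref{one} and check by hand that \eqref{eq:threshsigsepa} forces the contraction constant $2c<1$. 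Your algebra (positivity of the two brackets, monotonicity in $|\setP|\le\Sy$) is exactly what Part \ref{three} packages for you, so the paper's route is shorter, while yours makes explicit why $1/2$-concentration is the operative threshold; I verified that your implication \eqref{eq:threshsigsepa} $\Rightarrow 2c<1$ does go through, so there is no gap. Where you genuinely add something is (P0): the paper omits this case, deferring to \cite[Appendix B]{stkupobo12}, while your argument---$\|\vecp\|_0\le 2\Sy$, $\|\vecq\|_0\le q$, Corollary \ref{cor:stkupobo12}, and the fact that $f_{\matA,\matB}$ is nonincreasing in each argument---is a complete, correct, and self-contained proof of that half of the theorem.
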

\begin{proof}
We provide the proof for  $(\text{P1})$ only.   
The proof for recovery through $(\text{P0})$  is very similar and can be found in 
\cite[Appendix B]{stkupobo12}.  

Let $\vecw=\matA\vecy+\matB\vecz$ and suppose  that  
$(\text{P1})$ delivers $\tilde\vecy\in\complex^p$. This implies $\|\tilde \vecy\|_1\leq \|\vecy\|_1$
and the existence of  a $\tilde\vecz\in\complex^q$   
such that   
\begin{align}
\matA\tilde\vecy=\vecw+\matB\tilde\vecz. \label{eq:sigsep2a}
\end{align}
On the other hand, we also have 
\begin{align}
\matA\vecy=\vecw-\matB\vecz.\label{eq:sigsep1a}
\end{align}
Subtracting \eqref{eq:sigsep1a} from \eqref{eq:sigsep2a}  yields 
\begin{align}
\matA(\underbrace{\tilde\vecy-\vecy}_{=\vecp})=\matB(\underbrace{\tilde\vecz+\vecz}_{=\vecq}).
\end{align}
We now set  
\begin{align}\label{eq:setU}
\setU&=\{i\in\{1,\dots,p\}:y_i\neq 0\}
\end{align}
and
\begin{align}
\setU^c&=\{1,\dots,p\}\!\setminus\!\setU
\end{align} 
and show that  $\vecp$ is $\varepsilon_\setU$-concentrated (with respect to $1$-norm) for $\varepsilon_\setU=1/2$, i.e., 
\begin{align}
\|\vecp_{\setU^c}\|_1&\leq \frac{1}{2}\|\vecp\|_1.\label{eq:conc1}  
\end{align}
We have
\begin{align}
\| \vecy\|_1
&\geq \|\tilde \vecy\|_1\label{eq:aaa1}\\
&=\|\vecy+\vecp\|_1\\
&=\|\vecy_\setU+\vecp_\setU\|_1+\|\vecp_{\setU^c}\|_1\label{eq:useV}\\
&\geq \|\vecy_\setU\|_1-\|\vecp_\setU\|_1+\|\vecp_{\setU^c}\|_1\label{eq:steptrirev}\\
&= \|\vecy\|_1-\|\vecp_\setU\|_1+\|\vecp_{\setU^c}\|_1,  \label{eq:aaa2} 
\end{align} 
where \eqref{eq:useV} follows from the definition of $\setU$ in \eqref{eq:setU}, and 
in \eqref{eq:steptrirev} we applied the reverse triangle inequality. 
Now, \eqref{eq:aaa1}--\eqref{eq:aaa2} implies  
$\|\vecp_\setU\|_1\geq\|\vecp_{\setU^c}\|_1$. Thus, 
$2\|\vecp_{\setU^c}\|_1\leq\|\vecp_\setU\|_1+\|\vecp_{\setU^c}\|_1=\|\vecp\|_1$, which establishes \eqref{eq:conc1}. 
Next, set $\setV=\{1,\dots,q\}$ and note that  $\vecq$ is trivially  $\varepsilon_\setV$-concentrated (with respect to  $1$-norm) for $\varepsilon_\setV=0$. 
Suppose, toward a contradiction, that $\vecp\neq\veczero$. Then, we have  
\begin{align}
&2 \Sy q\label{eq:bbb1}\\
&\geq2|\setU||\setV|\\
&\geq 
\frac{[(1+\mu(\matA))-2\mu(\matA)|\setU|]_+[1+\mu(\matB)(1-|\setV|)]_+}{\bar\mu^2(\matA,\matB)}\label{eq:usethmstud} \\
&\geq 
\frac{[(1+\mu(\matA))-2\Sy\mu(\matA)]_+[1+\mu(\matB)(1-q)]_+}{\bar\mu^2(\matA,\matB)},  \label{eq:bbb2} 
\end{align} 
where  \eqref{eq:usethmstud} is obtained by applying Part \ref{three} of 
Corollary \ref{cor1:stkupobo12} with $\vecp$  $\varepsilon_\setU$-concentra\-ted for $\varepsilon_\setU=1/2$ and $\vecq$ 
$\varepsilon_\setV$-concentrated for $\varepsilon_\setV=0$.  
But \eqref{eq:bbb1}---\eqref{eq:bbb2} contradicts \eqref{eq:threshsigsepa}. Hence, we  
must have $\vecp=\veczero$, which yields  $\tilde\vecy=\vecy$. 
\end{proof}


We next provide an example showing that, as soon as \eqref{eq:threshsigsepa} is saturated, recovery through  $(\text{P0})$ or $(\text{P1})$ can fail.  
Take $m=n^2$ with $n$ even,  $\matA=\matF\in\complex^{m\times m}$, 
and $\matB\in\complex^{m\times \sqrt{m}}$ containing every $\sqrt{m}$-th column of the   $m\times m$ identity matrix, i.e., 
\begin{align}\label{eq:matB}
B_{k,l}=\begin{cases}
1&\quad\text{if}\ k=\sqrt{m}\,l\\
0&\quad\text{else}
\end{cases}
\end{align}
for all $k\in\{1,\dots,m\}$ and $l\in\{1,\dots,\sqrt{m}\}$.   
For every $a\in\naturals$ dividing $m$, we define the vector $\vecd^{(a)}\in\complex^m$ with components 
\begin{align}
d^{(a)}_l=
\begin{cases}
1&\quad\text{if}\ l\in\mleft\{a,2a,\dots,\mleft(\frac{m}{a}-1\mright)a,m\mright\}\\
0&\quad\text{else}.
\end{cases}
\end{align} 
Straightforward calculations now yield   
\begin{align}\label{eq:Fd}
\matF\vecd^{(a)}=\frac{\sqrt{m}}{a}\vecd^{(m/a)}
\end{align}
for all $a\in\naturals$  dividing $m$. Suppose that $\vecw=\matF\vecy+\matB\vecz$  
with
\begin{align}
\vecy&=\vecd^{(2\sqrt{m})}-\vecd^{(\sqrt{m})}\in\complex^{m}\label{eq:defy}\\
\vecz&=(1\dots 1)^\tp\in\complex^{\sqrt{m}}.\label{eq:defz} 
\end{align}
Evaluating \eqref{eq:threshsigsepa} for $\matA=\matF$, $\matB$ as defined in \eqref{eq:matB}, and $q=\sqrt{m}$ results in  $s<\sqrt{m}/2$. 
Now, $\vecy$ in \eqref{eq:defy} has $\|\vecy\|_0=\sqrt{m}/2$  and thus just violates the threshold $s<\sqrt{m}/2$. 
We next show that this slender violation is enough for the existence of an alternative  pair $\tilde\vecy\in\complex^{m}$, $\tilde\vecz\in\complex^{\sqrt{m}}$ satisfying   $\vecw=\matF\tilde\vecy+\matB\tilde\vecz$ with  $\|\tilde\vecy\|_0=\|\vecy\|_0$ and  $\|\tilde\vecy\|_1=\|\vecy\|_1$.  
Thus,  neither $(\text{P0})$  nor  $(\text{P1})$ can distinguish between $\vecy$ and $\tilde\vecy$. 
Specifically, we set  
\begin{align}
\tilde \vecy&=\vecd^{(2\sqrt{m})}\in\complex^{m}\\
\tilde \vecz&=\veczero\in\complex^{m} 
\end{align}
and note that   $\|\tilde \vecy\|_0=\|\vecy\|_0=\|\tilde \vecy\|_1=\|\vecy\|_1=\sqrt{m}/2$. It remains to establish that 
$\vecw=\matF\tilde\vecy+\matB\tilde\vecz$. 
To this end,  first note that  
\begin{align}
\vecw
&=\matF\vecy+\matB\vecz\\
&=\frac{1}{2}\vecd^{(\sqrt{m}/2)}-\vecd^{(\sqrt{m})}+\matB\vecz\label{eq:stepsss}\\
&=\frac{1}{2}\vecd^{(\sqrt{m}/2)},\label{eq:stepttt}
\end{align}
where \eqref{eq:stepsss} follows from \eqref{eq:Fd}  and  \eqref{eq:stepttt} is by \eqref{eq:matB}. 
Finally,  again using \eqref{eq:Fd}, we find that
\begin{align}
\matF\tilde\vecy+\matB\tilde\vecz
&=\frac{1}{2}\vecd^{(\sqrt{m}/2)},
\end{align}
which completes the argument. 

The threshold  $s<\sqrt{m}/2$ constitutes a special instance of the so-called  ``square-root bottleneck''\index{subject}{square-root bottleneck} \cite{tr08a} all 
cohe\-rence-based deterministic recovery thresholds suffer from.  
The square-root bottleneck\index{subject}{square-root bottleneck} says that the number of measurements, $m$, has to scale at least quadratically in the sparsity level\index{subject}{sparsity! level} $s$. It
can be circumvented by considering random models for either the signals or the measurement matrices  
\cite{carota06, pobrst13, do06, tr08}, leading to thresholds of the form $m\propto s\log p$ and applying with high probability. 
Deterministic linear recovery thresholds, i.e., $m \propto s$, were, to the best of our knowledge, first been reported in \cite{dojohost92} for the DFT measurement matrix under positivity constraints on the vector to be recovered. Further instances of deterministic linear recovery thresholds were discovered in 
the context of spectrum-blind sampling \cite{febr96,elmo10} and system identification \cite{hebo13}.

\section{The Set-Theoretic Null-Space Property\index{subject}{null-space property! set-theoretic}}\label{sec3} 
The notion of sparsity\index{subject}{sparsity} underlying the theory developed so far is that of either the number of nonzero entries 
or of concentration in terms of $1$-norm or $2$-norm. 
In practice, one often encounters more general concepts of parsimony, such as manifold  or fractal set structures. 
Manifolds are prevalent in data science, e.g., in compressed sensing \cite{bawa09,care09,elkubo10,capl11,albdekori18,ristbo15}, machine learning \cite{lz12}, image processing \cite{lufahe98,soze98}, and handwritten-digit recognition \cite{hidare97}. 
Fractal sets find application in image compression and in modeling of Ethernet traffic \cite{letawi94}. 
Based on the null-space property\index{subject}{null-space property} established in Lemma \ref{lem:ns}, 
we now extend the theory to account for more general notions of parsimony.  
To this end, we first need a suitable measure of ``description complexity'' that goes beyond the concepts of sparsity\index{subject}{sparsity} and concentration. 
Formalizing this idea requires an adequate dimension measure, which, as it turns out,  is lower modified 
Minkowski dimension\index{subject}{Minkowski dimension! modified}.
We start by defining  Minkowski dimension  and modified Minkowski dimension. 

\begin{definition}\cite[Section 3.1]{fa90}\footnote{Minkowski dimension is sometimes also referred to as box-counting dimension, which is the origin of the subscript B in the notation $\dim_\mathrm{B}(\cdot)$ used henceforth.} \label{definitiondim}
For $\setU\subseteq \complex^{m}$ nonempty, the lower and upper Minkowski dimensions\index{subject}{Minkowski dimension} of $\setU$ are defined as 
\begin{align}
\underline{\dim}_\mathrm{B}(\setU)&=\liminf_{\rho\to 0} \frac{\log N_\setU(\rho)}{\log \frac{1}{\rho}}\label{eq:lowerB}
\end{align}
and
\begin{align}
\overline{\dim}_\mathrm{B}(\setU)&=\limsup_{\rho\to 0} \frac{\log N_\setU(\rho)}{\log \frac{1}{\rho}},\label{eq:upperB}
\end{align} 
respectively, where 
\begin{align}
N_\setU(\rho)=\min\Big\{k \in\naturals : \setU\subseteq\hspace*{-4truemm} \bigcup_{i\in\{1,\dots,k\}}\hspace*{-4truemm} \setB_{m}(\vecu_i,\rho),\ \vecu_i\in \setU\Big\}\label{eq:coveringnumber}
\end{align}
is the covering number of $\setU$ for radius $\rho>0$.  
If $\underline{\dim}_\mathrm{B}(\setU)=\overline{\dim}_\mathrm{B}(\setU)$, this common value, denoted by  $\dim_\mathrm{B}(\setU)$, is   
the Minkowski dimension of $\setU$.
\end{definition}

\begin{definition}\cite[Section 3.3]{fa90} \label{definitiondimlocal} 
For $\setU\subseteq \complex^{m}$ nonempty, the lower and upper modified Minkowski dimensions\index{subject}{Minkowski dimension! modified} of $\setU$ are defined as 
\begin{align}
\underline{\dim}_\mathrm{MB}(\setU)=\inf\mleft\{\sup_{i\in\naturals} \underline{\dim}_\mathrm{B}(\setU_i) : \setU\subseteq \bigcup_{i\in\naturals}\setU_i\mright\}
\end{align}
and
\begin{align}
\overline{\dim}_\mathrm{MB}(\setU)=\inf\mleft\{\sup_{i\in\naturals} \overline{\dim}_\mathrm{B}(\setU_i) : \setU\subseteq \bigcup_{i\in\naturals}\setU_i\mright\},
\end{align}
respectively, where in both cases the infimum  is over all possible  coverings $\{\setU_i\}_{i\in\naturals}$ of $\setU$ by nonempty compact sets $\setU_i$. 
If $\underline{\dim}_\mathrm{MB}(\setU)=\overline{\dim}_\mathrm{MB}(\setU)$, this common value, denoted by  $\dim_\mathrm{MB}(\setU)$, is   
the modified Minkowski dimension\index{subject}{Minkowski dimension! modified} of $\setU$.
\end{definition}
For further details on (modified) Minkowski dimension\index{subject}{Minkowski dimension! modified}, we refer the interested reader to  \cite[Section 3]{fa90}. 

We are now ready to extend the null-space property\index{subject}{null-space property} in Lemma \ref{lem:ns} to the following set-theoretic null-space property\index{subject}{null-space property! set-theoretic}.


\begin{theorem}\label{thm:david}
Let $\setU\subseteq\complex^{p+q}$ be nonempty  with  $\underline{\dim}_\mathrm{MB}(\setU)<2m$, and let $\matB\in\complex^{m\times q}$ with $m\geq q$ be a full-rank matrix. Then, 
$\ker \congmat{\matA}{\matB}\cap(\setU\!\setminus\!\{\veczero\})=\emptyset$ for Lebesgue a.a. $\matA\in\complex^{m\times p}$. 
\end{theorem}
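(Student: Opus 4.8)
The plan is to show that the set of ``bad'' matrices $\setE(\setU)=\{\matA\in\complex^{m\times p}:\ker\congmat{\matA}{\matB}\cap(\setU\!\setminus\!\{\veczero\})\neq\emptyset\}$ is Lebesgue-null in $\complex^{m\times p}\cong\reals^{2mp}$, via a covering argument driven by the hypothesis $\underline{\dim}_\mathrm{MB}(\setU)<2m$. Throughout I write $\vecv=(\vecp^\tp\ \vecq^\tp)^\tp$, so that $\vecv\in\ker\congmat{\matA}{\matB}$ means $\matA\vecp=-\matB\vecq$. Since a countable union of null sets is null, I would first peel off a convenient family of pieces. By Definition \ref{definitiondimlocal} and $\underline{\dim}_\mathrm{MB}(\setU)<2m$, there is a covering $\setU\subseteq\bigcup_{i\in\naturals}\setU_i$ by compact sets with $\sup_i\underline{\dim}_\mathrm{B}(\setU_i)=:D<2m$; as $\setE(\setU)\subseteq\bigcup_i\setE(\setU_i)$, it suffices to treat a single compact $\setV$ with $\underline{\dim}_\mathrm{B}(\setV)\leq D<2m$. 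Intersecting further with the shells $\{\vecv:\|\vecv\|_2\geq 1/k\}$, $k\in\naturals$, and again invoking countable additivity (lower box dimension only decreases under compact subsets), I may assume $\setV$ is compact, $\underline{\dim}_\mathrm{B}(\setV)\leq D<2m$, and bounded away from the origin, $\|\vecv\|_2\geq\eta$ on $\setV$ for some $\eta>0$. Finally, nullity being local, it is enough to bound the measure of $\setE(\setV)$ inside an arbitrary Frobenius ball $\{\|\matA\|_2\leq R\}$.

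The geometric heart is a lower bound on the $\vecp$-part of any kernel element witnessed by a bounded $\matA$. If $\matA\vecp=-\matB\vecq$ with $\|\matA\|_2\leq R$, then $\|\matB\vecq\|_2=\|\matA\vecp\|_2\leq\onorm{\matA}_2\|\vecp\|_2\leq R\|\vecp\|_2$ (recall $\onorm{\matA}_2\leq\|\matA\|_2$); since $\matB$ is full rank with $m\geq q$, there is $c_\matB>0$ with $\|\matB\vecq\|_2\geq c_\matB\|\vecq\|_2$, whence $\|\vecq\|_2\leq(R/c_\matB)\|\vecp\|_2$ and therefore $\|\vecp\|_2\geq c\,\|\vecv\|_2$ for a constant $c=c(R,\matB)>0$. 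In particular no $\vecv$ with $\vecp=\veczero$ can lie in the kernel, and, once $\setV$ is bounded away from the origin, every covering center that is close to an actual kernel element inherits a uniform lower bound on the norm of its $\vecp$-part.

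With these reductions I would run the covering estimate. Fix $D'\in(D,2m)$. The liminf in \eqref{eq:lowerB} being $\leq D<D'$ furnishes a sequence $\rho_j\downarrow 0$ with $N_\setV(\rho_j)<\rho_j^{-D'}$; cover $\setV$ by balls $\setB_{p+q}(\vecv_i,\rho_j)$, $i=1,\dots,N_\setV(\rho_j)$, with centers $\vecv_i=(\vecp_i^\tp\ \vecq_i^\tp)^\tp\in\setV$, and set $\delta_j=(R+\onorm{\matB}_2)\rho_j$. If $\matA$ with $\|\matA\|_2\leq R$ is bad through some $\vecv=\vecv_i+\vecw$, $\|\vecw\|_2\leq\rho_j$, then $\matA\vecp_i+\matB\vecq_i=-(\matA\vecw_p+\matB\vecw_q)$ yields $\|\matA\vecp_i+\matB\vecq_i\|_2\leq\delta_j$, so $\matA$ lies in the slab $\setT_i=\{\|\matA\|_2\leq R:\|\matA\vecp_i+\matB\vecq_i\|_2\leq\delta_j\}$. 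For $j$ large, the previous paragraph gives $\|\vecp_i\|_2\geq c_0>0$ uniformly in $i$. The linear map $L_i:\matA\mapsto\matA\vecp_i$ acts row-by-row as the functional $\veca\mapsto\veca^\tp\vecp_i$ of norm $\|\vecp_i\|_2$ and is onto $\complex^m$, so Fubini across $\complex^{mp}=(\ker L_i)^\perp\oplus\ker L_i$ bounds $\mathrm{vol}(\setT_i)$ by the volume of a radius-$(\delta_j/\|\vecp_i\|_2)$ ball in the $2m$ transverse directions, namely $V_m(\delta_j/\|\vecp_i\|_2)$, times the volume of a radius-$R$ ball in the $2m(p-1)$ directions of $\ker L_i$; hence $\mathrm{vol}(\setT_i)\leq C\rho_j^{2m}$ with $C$ independent of $i,j$. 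Summing, $\mathrm{vol}(\setE(\setV)\cap\{\|\matA\|_2\leq R\})\leq N_\setV(\rho_j)\,C\rho_j^{2m}<C\rho_j^{2m-D'}$, and letting $j\to\infty$ (the left-hand side is independent of $j$ and $2m-D'>0$) shows this measure is $0$. Reassembling the countably many null pieces over $i$, $k$, and $R\in\naturals$ proves $\setE(\setU)$ is null.

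I expect the genuine obstacle to be the behavior of $\setU$ near the origin: the slab volumes carry $\|\vecp_i\|_2$ in the denominator, which is uncontrolled for centers close to $\veczero$, so a naive union over all of $\setU$ would diverge. The two reductions that tame this—passing to pieces bounded away from $\veczero$ and using the full-rank bound $\|\vecp\|_2\geq c\|\vecv\|_2$ on kernel elements—are exactly what make the covering sum summable; the transversality computation of $\mathrm{vol}(\setT_i)$ and the extraction of the scale sequence $\rho_j$ from the liminf are then routine.
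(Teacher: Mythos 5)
Your proof is correct, and at the top level it follows the same strategy as the paper: reduce, via the definition of $\underline{\dim}_\mathrm{MB}$, to compact pieces of lower Minkowski dimension $D<2m$; cover each piece by $\rho$-balls; show that each ball whose vicinity meets $\ker\congmat{\matA}{\matB}$ confines $\matA$ to a ``slab'' of measure $O(\rho^{2m})$; and conclude from $N(\rho_j)\rho_j^{2m}\lesssim\rho_j^{2m-D'}\to 0$ along a sequence realizing the liminf. The execution, however, differs in two genuine ways. First, the paper randomizes: it draws the rows of $\rmatA$ uniformly on $\setB_p(\veczero,r)$, packages the slab estimate as the concentration-of-measure bound of Lemma \ref{lem:com}, and argues by contradiction through the liminf of $\log\opP[\cdot]/\log(1/\rho)$. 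You stay deterministic, bounding Lebesgue outer measure directly inside Frobenius balls $\{\|\matA\|_2\leq R\}$ with a Fubini slab-volume computation; the two key estimates are the same calculation in different clothing (the paper's per-row bound in Lemma \ref{lem:com} is precisely your transverse-directions Fubini step, normalized by $V_p(r)$), but your version avoids both the probabilistic reformulation and the contradiction detour. Second, and more substantively, the paper's bound \eqref{eq:ns2} carries the factor $1/\|\vecu_i\|_2^{2m}$, which is uniform over covering centers only after a reduction to sets whose $\vecu$-part is bounded away from zero --- a step the paper outsources to \cite[Eqs.~(10)--(14)]{striagbo17}. You supply this ingredient explicitly: the shell decomposition $\{\|\vecv\|_2\geq 1/k\}$ combined with the observation that full-rankness of $\matB$ and $\onorm{\matA}_2\leq\|\matA\|_2\leq R$ force $\|\vecp\|_2\geq c(R,\matB)\|\vecv\|_2$ on any kernel element, which then transfers to nearby covering centers. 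This makes your write-up self-contained exactly where the paper leans on an external reference. One phrasing nit: your uniform lower bound $\|\vecp_i\|_2\geq c_0$ holds for the \emph{relevant} centers (those whose $\rho_j$-balls actually contain a kernel element of some bad $\matA$ with $\|\matA\|_2\leq R$), not literally for all $i$; since the union bound only involves relevant centers, this costs nothing, but it should be stated that way.
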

\begin{proof}
See Section \ref{thm:davidproof}. \end{proof}
The set $\setU$ in this set-theoretic null-space property\index{subject}{null-space property! set-theoretic} generalizes  the finite union of linear subspaces $\setS$ in  Lemma \ref{lem:ns}. 
For $\setU\subseteq\reals^{p+q}$, the equivalent of Theorem \ref{thm:david}  was reported previously in   \cite[Proposition 1]{striagbo17}.  
The set-theoretic null-space property\index{subject}{null-space property! set-theoretic} can be interpreted in geometric terms as follows.
If $p+q\leq m$, then $\congmat{\matA}{\matB}$ is a tall matrix so that 
the kernel of $\congmat{\matA}{\matB}$ is  $\{\veczero\}$ for Lebesgue-a.a. matrices $\matA$. The statement of the theorem holds trivially in this case.  
If $p+q> m$, then the kernel of $\congmat{\matA}{\matB}$ is a $(p+q-m)$-dimensional  subspace of the ambient space $\complex^{p+q}$ for Lebesgue-a.a. matrices $\matA$. 
The  theorem therefore says that, for Lebesgue-a.a.  $\matA$, the set $\setU$ intersects the subspace $\ker(\congmat{\matA}{\matB})$   at most trivially if 
the sum of $\dim\ker(\congmat{\matA}{\matB})$ and\footnote{The factor $1/2$ stems from the fact that  
(modified) Minkowski dimension\index{subject}{Minkowski dimension! modified}  ``counts real dimensions''. For example, the modified Minkowski dimension\index{subject}{Minkowski dimension! modified} of an $n$-dimensional linear subspace of $\complex^m$ is $2n$ \cite[Example II.2]{albdekori18}.
} $\underline{\dim}_\mathrm{MB}(\setU)/2$  is 
strictly smaller than the  dimension of the ambient space. 
What is remarkable here is that the 
notions of Euclidean dimension (for the kernel of $\congmat{\matA}{\matB}$) and of lower modified
Minkowski dimension\index{subject}{Minkowski dimension! modified} (for the set $\setU$) are compatible. We finally note that, by virtue of the chain of
equivalences in the proof of Lemma \ref{lem:ns}, the set-theoretic null-space property\index{subject}{null-space property! set-theoretic} in Theorem \ref{thm:david} 
leads to a set-theoretic uncertainty relation\index{subject}{uncertainty relation}, albeit not in the form of an upper bound on an operator norm; for a detailed discussion of
this equivalence the interested reader is referred to \cite{striagbo17}.

We next put the set-theoretic null-space property\index{subject}{null-space property! set-theoretic} in Theorem \ref{thm:david} in perspective with the 
null-space property\index{subject}{null-space property} in Lemma \ref{lem:ns}. Fix the sparsity levels\index{subject}{sparsity! level} $s$ and $t$, consider 
the set
\begin{align}\label{eq:usub2}
\setS_{s,t}=\Bigg\{
\begin{pmatrix}
\vecp\\
\vecq
\end{pmatrix}:\vecp\in\complex^p, \vecq\in\complex^q,  \|\vecp\|_0\leq s, \|\vecq\|_0\leq t\Bigg\},   
\end{align}
which is  a finite union of  
 $(s+t)$-dimensional linear subspaces, and, for the sake of concreteness, let $\matA=\matI$ and $\matB=\matF$ of size $q \times q$.
Lemma  \ref{lem:ns} then states that the kernel of $\congmat{\matI}{\matF}$  intersects $\setS_{s,t}$ trivially provided that 
\begin{align}
m>st,\label{eq:npq}
\end{align}
which leads to a recovery threshold in the signal separation problem\index{subject}{signal separation} that is quadratic in the sparsity levels\index{subject}{sparsity! level} $s$ and $t$ \cite[Theorem 8]{stkupobo12}. 
To see what the set-theoretic null-space property\index{subject}{null-space property! set-theoretic} gives, we start by noting that, by 
\cite[Example II.2]{albdekori18}, $\dim_\mathrm{MB}(\setS_{s,t})=2(s+t)$.  
Theorem \ref{thm:david} hence states that, for Lebesgue a.a. matrices $\matA\in\complex^{m\times p}$, 
the kernel of  $\congmat{\matA}{\matB}$ intersects $\setS_{s,t}$ trivially, provided that
\begin{align}
m>s+t. \label{eq:spt}
\end{align}
This is striking as it says that, while the threshold in \eqref{eq:npq} is quadratic in the sparsity levels\index{subject}{sparsity! level} $s$ and $t$ and, therefore, suffers from the square-root bottleneck\index{subject}{square-root bottleneck}, the threshold in \eqref{eq:spt} is linear in $s$ and $t$. 

To understand the operational implications of the observation just made, we demonstrate how the set-theoretic null-space property\index{subject}{null-space property! set-theoretic} 
 in Theorem \ref{thm:david}  leads to  a sufficient condition  for   the recovery of vectors in sets of small lower modified Minkowski dimension\index{subject}{Minkowski dimension! modified}.  

\begin{lemma} \label{lem:appnull-spaceprob}
Let $\setS\subseteq\complex^{p+q}$ be nonempty with $\underline{\dim}_\mathrm{MB}(\setS\ominus\setS)< 2m$, where $\setS\ominus\setS=\{\vecu-\vecv:\vecu,\vecv\in\setS\}$, and let $\matB\in\complex^{m\times q}$, with $m\geq q$, be a full-rank matrix. Then, $\congmat{\matA}{\matB}$ is one-to-one on $\setS$ for Lebesgue a.a. $\matA\in\complex^{m\times p}$. 
\end{lemma}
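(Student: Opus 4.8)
The plan is to reduce the injectivity statement to the set-theoretic null-space property already established in Theorem \ref{thm:david}, applied not to $\setS$ itself but to the difference set $\setU=\setS\ominus\setS$.

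First I would unwind what ``$\congmat{\matA}{\matB}$ is one-to-one on $\setS$'' means. By definition it requires that, for every pair $\vecu,\vecv\in\setS$, the equality $\congmat{\matA}{\matB}\vecu=\congmat{\matA}{\matB}\vecv$ forces $\vecu=\vecv$. Since $\congmat{\matA}{\matB}$ is linear, $\congmat{\matA}{\matB}\vecu=\congmat{\matA}{\matB}\vecv$ is equivalent to $\congmat{\matA}{\matB}(\vecu-\vecv)=\veczero$, and $\vecu-\vecv$ ranges over all of $\setS\ominus\setS$ as $\vecu,\vecv$ range over $\setS$. Hence injectivity on $\setS$ is equivalent to the statement that the only $\vecw\in\setS\ominus\setS$ with $\congmat{\matA}{\matB}\vecw=\veczero$ is $\vecw=\veczero$, i.e.
\begin{align}
\ker\congmat{\matA}{\matB}\cap\bigl((\setS\ominus\setS)\setminus\{\veczero\}\bigr)=\emptyset.
\end{align}

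Second I would set $\setU=\setS\ominus\setS$ and check that it satisfies the hypotheses of Theorem \ref{thm:david}. As $\setS$ is nonempty, $\setU$ is nonempty (it contains $\veczero$); the dimension bound $\underline{\dim}_\mathrm{MB}(\setU)<2m$ is precisely the assumption of the lemma; and $\matB\in\complex^{m\times q}$ with $m\geq q$ is full-rank by hypothesis. Theorem \ref{thm:david} then delivers $\ker\congmat{\matA}{\matB}\cap(\setU\setminus\{\veczero\})=\emptyset$ for Lebesgue a.a.\ $\matA\in\complex^{m\times p}$, which is exactly the trivial-intersection condition obtained in the first step. Combining the two steps finishes the proof.

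The only subtlety, and hence the main point to get right, is the translation in the first step: injectivity on a general \emph{set} $\setS$ (which need not be a subspace, so one cannot simply quote rank/kernel facts about $\setS$) corresponds to the kernel meeting the \emph{difference} set $\setS\ominus\setS$ trivially, and it is this difference set whose lower modified Minkowski dimension is controlled in the assumption. Once this correspondence is established, no computation remains and the result is an immediate specialization of Theorem \ref{thm:david}.
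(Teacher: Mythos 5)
Your proof is correct and follows exactly the route the paper takes: the paper's own (one-line) proof likewise invokes Theorem \ref{thm:david} with $\setU=\setS\ominus\setS$ together with linearity of matrix--vector multiplication. Your write-up simply makes explicit the translation between injectivity on $\setS$ and the kernel of $\congmat{\matA}{\matB}$ meeting $(\setS\ominus\setS)\setminus\{\veczero\}$ trivially, which is precisely the intended argument.
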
 
\begin{proof}
Follows from the set-theoretic null-space property\index{subject}{null-space property! set-theoretic} in Theorem \ref{thm:david} and the linearity of 
matrix-vector multiplication. 
\end{proof}

To elucidate the implications of Lemma \ref{lem:appnull-spaceprob}, consider  $\setS_{s,t}$ defined in \eqref{eq:usub2}. 
Since $\setS_{s,t}\ominus\setS_{s,t}$ is again a finite union of linear subspaces of dimensions no larger than $\min\{p,2s\}+\min\{q,2t\}$,  where the $\min\{\cdot,\cdot\}$-operation accounts for the fact that the dimension of a linear subspace can not exceed  the dimension of its ambient space, 
we have \cite[Example II.2]{albdekori18} 
\begin{align}
{\dim}_\mathrm{MB}(\setS_{s,t}\ominus\setS_{s,t})=
2(\min\{p,2s\}+\min\{q,2t\}).  
\end{align}
Application of Lemma \ref{lem:appnull-spaceprob} now yields that, for Lebesgue a.a. matrices $\matA\in\complex^{m\times p}$, we can recover  
$\vecy\in\complex^p$ with $\|\vecy\|_0\leq s$ and 
$\vecz\in\complex^q$ with $\|\vecz\|_0\leq t$ 
from $\vecw=\matA\vecy+\matB\vecz$ 
provided that $m>\min\{p,2s\}+\min\{q,2t\}$. 
This qualitative behavior (namely, linear in $s$ and $t$)  is best possible as it  can not be improved even if the support sets of 
 $\vecy$ and $\vecz$ were  known prior to recovery. 
We emphasize, however, that the statement in Lemma \ref{lem:appnull-spaceprob} guarantees injectivity of $\congmat{\matA}{\matB}$ only absent computational considerations for recovery.


\section{A Large Sieve\index{subject}{large sieve} Inequality in $(\complex ^m, \lVert\,\cdot\,\rVert_2)$} \label{sec4}
We  present a slightly improved and generalized version of the large sieve\index{subject}{large sieve} inequality stated in \cite[Equation (32)]{dolo92}.

\begin{lemma}\label{lem:sievel2}
Let $\mu$ be a $1$-periodic, $\sigma$-finite measure on $\reals$, $n\in\naturals$, $\varphi\in[0,1)$, $\veca\in\complex^n$, and consider the 
$1$-periodic trigonometric polynomial 
\begin{align}
\psi(s)=e^{2\pi j\varphi}\sum_{k=1}^{n}a_k e^{-2\pi j ks}. 
\end{align}
Then, 
\begin{align}\label{eq:toshowa}
\int_{[0,1)} |\psi(s)|^2 \mathrm d\mu(s) \leq  \mleft(n-1+\frac{1}{\delta}\mright)\sup_{r\in[0,1)}\mu((r,r+\delta)) \|\veca\|_2^2
\end{align}
for all $\delta\in (0,1]$. 
\end{lemma}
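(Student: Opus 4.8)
The plan is to first remove the phase: since $|e^{2\pi j\varphi}|=1$, the integrand $|\psi(s)|^2$ is unchanged if we set $\varphi=0$, so I assume this and write $\psi(s)=\sum_{k=1}^n a_k e^{-2\pi jks}$. Put $M=\sup_{r\in[0,1)}\mu((r,r+\delta))$. If $M=\infty$ the bound is vacuous, so I assume $M<\infty$; together with $1$-periodicity this makes $\mu$ finite on $[0,1)$, as $[0,1)$ is covered by $\lceil1/\delta\rceil$ arcs of length $\delta$. It is then natural to read the left-hand side as a Hermitian form: expanding the square and integrating termwise,
\begin{align}
\int_{[0,1)}|\psi(s)|^2\,\mathrm d\mu(s)=\sum_{k,l=1}^{n}a_k\overline{a_l}\,\widehat\mu(k-l),\qquad \widehat\mu(h)=\int_{[0,1)}e^{-2\pi jhs}\,\mathrm d\mu(s),
\end{align}
so the quantity equals $\veca^\herm\matT\veca$ for the Hermitian positive-semidefinite Toeplitz moment matrix $\matT=(\widehat\mu(k-l))_{k,l}$, and \eqref{eq:toshowa} is exactly the eigenvalue bound $\onorm{\matT}_2\le(n-1+1/\delta)\,M$. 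I would work on the circle $\reals/\integers$ throughout to keep the periodicity bookkeeping clean.

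The structural fact I would exploit is that $\Psi:=|\psi|^2$ is a \emph{nonnegative} trigonometric polynomial whose Fourier coefficients live in $\{-(n-1),\dots,n-1\}$, with $\widehat\Psi(0)=\|\veca\|_2^2$. To reach the \emph{sharp} constant I would invoke the extremal-function (Beurling--Selberg) circle of ideas behind the classical large sieve: build a $1$-periodic majorant of a length-$\delta$ arc that is band-limited to degree $n-1$, integrate it against $\Psi$, and use $\Psi\ge0$ so that pairing against the majorant's Fourier coefficients collapses—up to controlled high-frequency slack—onto $\widehat\Psi(0)=\|\veca\|_2^2$. In this accounting the $1/\delta$ summand is the extremal mass of the interval majorant, the $n-1$ summand is the bandwidth of $\Psi$, and the passage from the textbook setting of $\delta$-separated nodes to a general measure is achieved by replacing the ``at most one node per arc'' spacing estimate with the local-density factor $M$. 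Crucially, atoms of $\mu$ cause no difficulty, because the argument only ever integrates a fixed band-limited kernel against $\mu$; it never tries to dominate $\mu$ pointwise by a bounded density.

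As a correctness check, and as a fallback that recovers the right \emph{shape}, I would also run the elementary Gallagher-type estimate. For $u$ in the arc $(s-\delta,s)$ one has $|\psi(s)|^2=|\psi(u)|^2-\int_u^s2\Re\bigl(\overline{\psi(t)}\psi'(t)\bigr)\,\mathrm dt$; averaging over $u$ along the arc, integrating against $\mathrm d\mu(s)$, interchanging the order of integration on the circle, and bounding each inner arc-mass by $M$ gives
\begin{align}
\int_{[0,1)}|\psi(s)|^2\,\mathrm d\mu(s)\le \frac{M}{\delta}\,\|\veca\|_2^2+M\int_{[0,1)}\Bigl|\tfrac{\mathrm d}{\mathrm ds}|\psi(s)|^2\Bigr|\,\mathrm ds.
\end{align}
Writing $|\psi|^2=|\tilde\psi|^2$ with $\tilde\psi(s)=e^{2\pi j\frac{n+1}{2}s}\psi(s)$, whose spectrum lies in $\{-\frac{n-1}{2},\dots,\frac{n-1}{2}\}$, the $L^2$ Bernstein inequality $\|\tilde\psi'\|_2\le\pi(n-1)\|\tilde\psi\|_2$ and Cauchy--Schwarz bound the remaining integral by $2\|\tilde\psi\|_2\|\tilde\psi'\|_2\le2\pi(n-1)\|\veca\|_2^2$, which yields the constant $2\pi(n-1)+1/\delta$.

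The main obstacle is therefore not the form of \eqref{eq:toshowa} but its sharp constant: the elementary route is wasteful by the factor $2\pi$ on the $n-1$ term, and shaving this down is precisely where the extremal-function machinery of the large sieve is genuinely needed. The second delicate point is the adaptation of that machinery from counting $\delta$-separated points to the general $\sigma$-finite $1$-periodic measure $\mu$ through the single scalar $M$, together with verifying that the constructed band-limited majorant carries exactly the extremal mass producing $n-1+1/\delta$ rather than a larger constant.
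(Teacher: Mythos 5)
Your only complete argument is the Gallagher-type fallback, and it proves a strictly weaker statement: the constant $2\pi(n-1)+1/\delta$ rather than $n-1+1/\delta$, the factor $2\pi$ being a genuine loss of that method (Bernstein plus Cauchy--Schwarz), not a presentational one. The route you designate as the main proof is not an argument: you never specify the pairing that converts ``a $1$-periodic majorant of a length-$\delta$ arc, band-limited to degree $n-1$'' together with the single number $M=\sup_{r\in[0,1)}\mu((r,r+\delta))$ into the claimed bound, and the extremal object you name is in fact the wrong one. The Beurling--Selberg majorant of a $\delta$-arc with spectrum in $\{-(n-1),\dots,n-1\}$ has minimal mass $\delta+1/n$; the constant $n-1+1/\delta$ belongs to the dual extremal problem, namely majorizing the \emph{frequency} interval $[1,n]$ (length $n-1$) by a function whose Fourier transform is supported in a window of length $\delta$. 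Moreover, the classical way of exploiting that majorant (Selberg's duality proof of the large sieve) uses Poisson summation together with $\delta$-separation of the sample points so that all off-diagonal terms vanish; for a general $\sigma$-finite $1$-periodic $\mu$ there are no points to separate, and the naive surrogate ``bound each arc mass by $M$'' degrades the result (one picks up $\mu$-masses of arcs of length $2\delta$, or, shrinking the transform's support to $[-\delta/2,\delta/2]$, the constant $n-1+2/\delta$). This sharp-constant issue for general measures is exactly what your closing paragraph concedes is unresolved; your phase reduction and Toeplitz-form observation are fine but carry no load.

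What closes the gap --- and is the paper's proof --- is Vaaler's factorized form of the extremal function: there exists $g\in L^2(\reals)$ whose Fourier transform $G$ vanishes outside $[-\delta/2,\delta/2]$, with $|g(t)|\geq 1$ on $[1,n]$ and $\|G\|_2^2=n-1+1/\delta$. Setting $\theta(s)=\sum_{k=1}^{n}(a_k/g(k))e^{-2\pi j ks}$ gives the exact reproducing identity $\psi(s)=\int_{-\delta/2}^{\delta/2}G(r)\theta(s-r)\,\mathrm{d}r$, and then Cauchy--Schwarz in $r$, Fubini, and $1$-periodicity yield $\int_{[0,1)}|\psi|^2\,\mathrm{d}\mu\leq \|G\|_2^2\int_{0}^{1}\mu\bigl((r-\delta/2,r+\delta/2)\bigr)|\theta(r)|^2\,\mathrm{d}r\leq (n-1+1/\delta)\,M\,\|\veca\|_2^2$, the last step using $|g(k)|\geq 1$ so that $\int_0^1|\theta|^2\leq\|\veca\|_2^2$. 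Note that this argument never sees sampling nodes or Poisson summation: the measure enters only through masses of open arcs of length exactly $\delta$ (so your observation that atoms are harmless is vindicated), and it is the convolution identity --- not a pairing of an arc majorant against $|\psi|^2$ --- that converts ``bandwidth $n-1$ plus window $\delta$'' into the additive constant $n-1+1/\delta$. If you want to complete your outline, this factorization step is the missing mechanism.
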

\begin{proof} 
Since 
\begin{align}
|\psi(s)|=\mleft|\sum_{k=1}^{n}a_k e^{-2\pi j ks}\mright|, 
\end{align}
we can assume, without loss of generality, that $\varphi=0$. 
The proof now follows closely the line of argumentation in  \cite[pp. 185--186]{va85} and in the proof of \cite[Lemma 5]{dolo92}. 
Specifically, we make use of the result in  
 \cite[p. 185]{va85} saying that, for every $\delta>0$, there exists a function $g\in L^2(\reals)$ with  Fourier transform 
\begin{align}
G(s)=\int_{-\infty}^\infty g(t) e^{-2\pi j st} \mathrm d t
\end{align}
such that 	
$\|G\|_2^2 =n-1+1/\delta$, $|g(t)|^2\geq 1$ for all $t\in[1,n]$, 
  and  $G(s)=0$ for all $s\notin [-\delta/2,\delta/2]$. 
With this  $g$, consider the $1$-periodic trigonometric polynomial  
\begin{align} 
\theta(s)= \sum_{k=1}^{n}\frac{a_k}{g(k)} e^{-2\pi j ks}     
\end{align}
and note that 
\begin{align}
\int_{-\delta/2}^{\delta/2} G(r) \theta(s-r) \mathrm d r\label{eq:www1}
&=\sum_{k=1}^{n}\frac{a_k}{g(k)}e^{-2\pi j ks} \int_{-\infty}^\infty G(r) e^{2\pi j kr} \mathrm d r\\
&=\sum_{k=1}^{n}a_ke^{-2\pi j ks}\\ 
&=\psi(s)\quad \text{for all}\ s\in\reals.   \label{eq:www2} 
\end{align}
We now have 
\begin{align}
\int_{[0,1)} \mathrm |\psi(s)|^2\mathrm d \mu(s)
&=\int_{[0,1)}  \Bigg|\int_{-\delta/2}^{\delta/2}G(r) \theta(s-r) \mathrm d r\Bigg|^2 \mathrm d \mu(s)\label{eq:vvv2}\\
&\leq \|G\|_2^2 \int_{[0,1)} \mleft( \int_{-\delta/2}^{\delta/2}|\theta(s-r)|^2\mathrm d r\mright)  \mathrm d \mu(s)\label{eq:vvv3}\\
&= \|G\|_2^2\int_{[0,1)} \mleft(  \int_{s-\delta/2}^{s+\delta/2}|\theta(r)|^2\mathrm d r\mright)  \mathrm d \mu(s)\\
&= \|G\|_2^2\int_{-1}^{2}\mu\big((r-\delta/2,r+\delta/2)\cap[0,1)\big) |\theta(r)|^2\mathrm d r\label{eq:vvv4}\\
&= \|G\|_2^2\sum_{i=-1}^1 \int_{0+i}^{1+i}\mu\big((r-\delta/2,r+\delta/2)\cap[0,1)\big) |\theta(r)|^2\mathrm d r\\
&= \|G\|_2^2\sum_{i=-1}^1 \int_{0}^{1}\mu\big((r-\delta/2,r+\delta/2)\cap[i,1+i)\big) |\theta(r)|^2\mathrm d r\label{eq:vvv4a1}\\
&= \|G\|_2^2\int_{0}^{1}\mu\big((r-\delta/2,r+\delta/2)\cap[-1,2)\big) |\theta(r)|^2\mathrm d r\label{eq:vvv4a}\\
&= \|G\|_2^2\int_{0}^{1}\mu\big((r-\delta/2,r+\delta/2)\big) |\theta(r)|^2\mathrm d r\label{eq:vvv5}
\end{align}
for all $\delta\in (0,1]$,  where \eqref{eq:vvv2} follows from  \eqref{eq:www1}--\eqref{eq:www2}, 
in \eqref{eq:vvv3} we applied the Cauchy-Schwartz inequality \cite[Theorem 1.37]{he11}, 
\eqref{eq:vvv4} is by Fubini's theorem \cite[Theorem 1.14]{ma99} (recall that $\mu$ is $\sigma$-finite by assumption) upon noting that   
\begin{align}
&\{(r,s):s\in[0,1),r\in(s-\delta/2,s+\delta/2)\}\\
&=\{(r,s):r\in [-1,2), s\in(r-\delta/2,r+\delta/2)\cap[0,1)\}
\end{align}
for all  $\delta\in(0,1]$, 
in \eqref{eq:vvv4a1}  we used the $1$-periodicity of $\mu$ and $\theta$,  and \eqref{eq:vvv4a} is by  $\sigma$-additivity of $\mu$. 
Now, 
\begin{align}
 \int_{0}^1 \mu\big((r-\delta/2,r+\delta/2)\big) |\theta(r)|^2\mathrm d r\label{eq:kkkk1}
 &\leq \sup_{r\in [0,1)}\mu((r,r+\delta)) \int_0^1 |\theta(r)|^2\mathrm d r\\
 &=\sup_{r\in [0,1)}\mu((r,r+\delta)) \sum_{k=1}^{n}\frac{|a_k|^2}{|g(k)|^2}\\
&\leq  \sup_{r\in [0,1)}\mu((r,r+\delta))\|\veca\|_2^2 \label{eq:kkkk2}
\end{align}
for all $\delta>0$, 
where \eqref{eq:kkkk2} follows from $|g(t)|^2\geq 1$ for all $t\in[1,n]$. Using \eqref{eq:kkkk1}--\eqref{eq:kkkk2} and $\|G\|^2_2=n-1+1/\delta$  in \eqref{eq:vvv5} establishes \eqref{eq:toshowa}.  
\end{proof}

Lemma \ref{lem:sievel2} is a slightly strengthened version of the large sieve inequality\index{subject}{large sieve} \cite[Equation (32)]{dolo92}. 
Specifically, in \eqref{eq:toshowa}  it is sufficient to consider open intervals $(r,r+\delta)$, whereas \cite[Equation (32)]{dolo92} requires    closed intervals $[r,r+\delta]$. 
Thus, the upper bound in \cite[Equation (32)]{dolo92} can be strictly larger than that in  \eqref{eq:toshowa} whenever  $\mu$ has mass points.  

\section{Uncertainty Relations\index{subject}{uncertainty relation} in $L_1$ and $L_2$}\label{sec:cont}
The following table contains a list of  infinite-dimensional counterparts---available in the literature---to results in this chapter. 
Specifically, these results apply to  band-limited $L_1$- and $L_2$-functions and hence  correspond to $\matA=\matI$  and 
$\matB=\matF$ in our setting. 

\begin{center}
\begin{longtable}{|l|l|l|}
\hline
  & $L_2$ analog & $L_1$ analog\\\hline
 Upper bound in  \eqref{eq:unl21F} & \cite[Lemma 2]{dost89}&\\\hline
 Corollary  \ref{cor:U}  & \cite[Theorem 2]{dost89}&\\\hline
 Lemma \ref{cor:noisy}&\cite[Theorem 4]{dost89}&\\\hline
 Lemma \ref{lem:U1a}&&\cite[Lemma 3]{dost89}\\\hline
 Lemma \ref{cor:noisy1}&&\cite[Lemma 2]{dolo92}\\\hline
 Lemma \ref{lem:sievel2}&&\cite[Theorem 4]{dolo92}\\\hline
\end{longtable}
\end{center}


\section{Proof of Theorem \ref{thm:david}}\label{thm:davidproof}\label{sec:proof}
By definition of lower modified Minkowski dimension, \index{subject}{Minkowski dimension! modified} there exists a covering $\{\setU_i\}_{i\in\naturals}$ of $\setU$ 
by nonempty compact sets $\setU_i$ satisfying  $\underline{\dim}_\mathrm{B}(\setU_i)<2m$ for all $i\in\naturals$. The countable subadditivity of Lebesgue measure $\lambda$ now implies 
\begin{align}
&\lambda(\{\matA\in\complex^{m\times p}:\ker \congmat{\matA}{\matB}\cap(\setU\!\setminus\!\{\veczero\})\neq\emptyset\})\\
&\leq\sum_{i=1}^\infty \lambda(\{\matA\in\complex^{m\times p}:\ker \congmat{\matA}{\matB}\cap(\setU_i\!\setminus\!\{\veczero\})\neq\emptyset\}). \label{eq:sumzero1}
\end{align}
We next establish that every term in the sum on the right-hand side  of \eqref{eq:sumzero1} equals zero. 
Take an arbitrary but fixed $i\in\naturals$. 
Repeating the steps in \cite[Equation (10)--(14)]{striagbo17} shows that it suffices  to prove that
\begin{align}\label{eq:toshow}
\opP[\ker(\congmat{\rmatA}{\matB})\cap\setV\neq\emptyset]=0
\end{align}
with 
\begin{align}
\setV=\Bigg\{\begin{pmatrix}\vecu\\\vecv\end{pmatrix}:\vecu\in\complex^p, \vecv\in\complex^q, \|\vecu\|_2>0\Bigg\}\cap\setU_i 
\end{align}
and  $\rmatA=({\rveca}_1\dots{\rveca}_m)^\herm$, where the random vectors ${\rveca}_i$  are independent and  uniformly  distributed  on $\setB_p(\veczero,r)$ for arbitrary but fixed $r>0$. Suppose, toward a contradiction, that \eqref{eq:toshow} is false. This implies 
\begin{align}
0&=\liminf_{\rho\to 0}\frac{\log \opP[\ker(\congmat{\rmatA}{\matB})\cap\setV\neq\emptyset]}{\log\frac{1}{\rho}}\\
 &\leq \liminf_{\rho\to 0}\frac{\log \sum_{i=1}^{N_\setV(\rho)}\opP[\ker(\congmat{\rmatA}{\matB})\cap\setB_{p+q}(\vecc_i,\rho) \neq\emptyset]}{\log\frac{1}{\rho}}, \label{eq:ns3}
\end{align}
where  we have chosen $\{\vecc_i:i=1,\dots,N_\setV(\rho)\}\subseteq\setV$ such that 
\begin{align}
\setV\subseteq \bigcup_{i=1}^{N_\setV(\rho)}\setB_{p+q}(\vecc_i,\rho) 
\end{align}
with $N_\setV(\rho)$ denoting the covering number of $\setV$ for radius $\rho>0$ (cf. \eqref{eq:coveringnumber}). 
Now  let $i\in\{1,\dots,N_\setV(\rho)\}$ be arbitrary but fixed and write $\vecc_i=(\vecu_i^\tp\ \vecv_i^\tp)^\tp$.   
It follows that 
\begin{align}
\|\rmatA\vecu_i+\matB\vecv_i\|_2
&=\|\congmat{\rmatA}{\matB}\vecc_i\|_2\label{eq:nsa1}\\
&\leq \|\congmat{\rmatA}{\matB}(\vecx-\vecc_i)\|_2 +\|\congmat{\rmatA}{\matB}\vecx\|_2\\
&\leq \|\congmat{\rmatA}{\matB}\|_2\|(\vecx-\vecc_i)\|_2 +\|\congmat{\rmatA}{\matB}\vecx\|_2\\
&\leq (\|[\rmatA\|_2+\|\matB\|_2)\rho +\|\congmat{\rmatA}{\matB}\vecx\|_2\\
&\leq (r\sqrt{m}+\|\matB\|_2)\rho +\|\congmat{\rmatA}{\matB}\vecx\|_2\quad\text{for all}\ \vecx\in\setB_{p+q}(\vecc_i,\rho),\label{eq:nsa2}
\end{align}
where in the last step we made use of  $\|\rveca_i\|_2\leq r$ for $i=1,\dots,m$. 
We now have  
\begin{align}
&\opP[\ker(\congmat{\rmatA}{\matB})\cap\setB_{p+q}(\vecc_i,\rho) \neq\emptyset]\label{eq:ns1}\\
&\leq\opP[\exists\,\vecx\in\setB_{p+q}(\vecc_i,\rho): \|\congmat{\rmatA}{\matB}\vecx\|_2<\rho]\\
&\leq \opP[\|\rmatA\vecu_i+\matB\vecv_i\|_2<\rho(1+r\sqrt{m}+\|\matB\|_2)]\label{eq:ns3a}\\
&\leq \rho^{2m} \frac{C(p,m,r)}{\|\vecu_i\|_2^{2m}}(1+r\sqrt{m}+\|\matB\|_2)^{2m},\label{eq:ns2}
\end{align}
where \eqref{eq:ns3a} is by \eqref{eq:nsa1}--\eqref{eq:nsa2}, and in \eqref{eq:ns2} 
we applied the concentration of measure result Lemma \ref{lem:com} below (recall that $\vecc_i=(\vecu_i^\tp\ \vecv_i^\tp)^\tp\in\setV$ implies $\vecu_i\neq\veczero$) with $C(p,m,r)$ as in \eqref{eq:Cpmr} below. 
Inserting \eqref{eq:ns1}--\eqref{eq:ns2} into \eqref{eq:ns3} yields 
\begin{align}
0&\leq \liminf_{\rho\to\infty} \frac{\log(N_\setV(\rho)\rho^{2m})}{\log\frac{1}{\rho}}\\
 &=\underline{\dim}_\mathrm{B}(\setV)-2m\\
 &<0,\label{eq:mb1}
\end{align}
where \eqref{eq:mb1} follows from $\underline{\dim}_\mathrm{B}(\setV)\leq \underline{\dim}_\mathrm{B}(\setU_i)<2m$, 
which constitutes a contradiction. Therefore, \eqref{eq:toshow} must hold. 
\qed
\begin{lemma}\label{lem:com}
Let $\rmatA=({\rveca}_1\dots{\rveca}_m)^\herm$ with independent random vectors ${\rveca}_i$ uniformly distributed on $\setB_p(\veczero,r)$ for  $r>0$. Then, 
\begin{align}
\opP[\|\rmatA\vecu+\vecv\|_2<\delta]\leq \frac{C(p,m,r)}{\|\vecu\|_2^{2m}}\delta^{2m}
\end{align}  
with
\begin{align}\label{eq:Cpmr}
C(p,m,r)=\mleft(\frac{\pi V_{p-1}(r)}{V_{p}(r)}\mright)^m
\end{align}
for all $\vecu\in\complex^p\!\setminus\!\{\veczero\}$,  $\vecv\in\complex^m$, and  $\delta>0$.
\end{lemma}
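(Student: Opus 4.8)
The plan is to reduce the joint estimate to a product of one-dimensional estimates via independence of the rows of $\rmatA$, and then to control each factor by bounding the density of a single linear functional of a uniform vector on the ball. First I would note that the $i$-th entry of $\rmatA\vecu+\vecv$ equals $\rveca_i^\herm\vecu+v_i$, that these $m$ entries are independent because the $\rveca_i$ are, and that $\|\rmatA\vecu+\vecv\|_2<\delta$ forces $|\rveca_i^\herm\vecu+v_i|<\delta$ for every $i$. Independence then yields
\[
\opP[\|\rmatA\vecu+\vecv\|_2<\delta]\leq\prod_{i=1}^{m}\opP[|\rveca_i^\herm\vecu+v_i|<\delta],
\]
so it suffices to show that for a single $\rveca$ uniform on $\setB_p(\veczero,r)$, any $\vecu\neq\veczero$, and any $w\in\complex$,
\[
\opP[|\rveca^\herm\vecu+w|<\delta]\leq\frac{\pi V_{p-1}(r)}{V_p(r)\,\|\vecu\|_2^2}\,\delta^2.
\]

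Next I would reduce to a canonical direction. By unitary invariance of the uniform law on $\setB_p(\veczero,r)$, choosing a unitary $\matV$ with $\matV\vecu=\|\vecu\|_2\vece_1$ shows that $\rveca^\herm\vecu$ has the same distribution as $Z:=\|\vecu\|_2\,\overline{a_1}$, where $a_1$ is the first complex coordinate of a uniform point on the ball. The event $|\rveca^\herm\vecu+w|<\delta$ then requires $Z$ to lie in a disk of radius $\delta$, hence of area $\pi\delta^2$, so I would bound its probability by $(\text{area})\times\sup_z f_Z(z)$, using $\opP[Z\in S]=\int_S f_Z\,\mathrm dA\leq\|f_Z\|_\infty\,\mathrm{area}(S)$.

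The key computation is the marginal density of $a_1$. Slicing the ball at fixed $a_1=\zeta$ leaves a complex $(p-1)$-dimensional ball of radius $\sqrt{r^2-|\zeta|^2}$, whence $f_{a_1}(\zeta)=V_{p-1}(\sqrt{r^2-|\zeta|^2})/V_p(r)$ for $|\zeta|\leq r$; since $V_{p-1}$ is increasing, this is maximized at $\zeta=\veczero$ with value $V_{p-1}(r)/V_p(r)$. The relation $Z=\|\vecu\|_2\,\overline{a_1}$ is a linear change of variables over $\reals^2$ with real Jacobian $\|\vecu\|_2^2$ (conjugation contributing an absolute Jacobian of $1$), so $\sup_z f_Z(z)=V_{p-1}(r)/(V_p(r)\|\vecu\|_2^2)$. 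This gives the single-row bound, and multiplying the $m$ identical factors produces $C(p,m,r)\delta^{2m}/\|\vecu\|_2^{2m}$ with $C(p,m,r)=(\pi V_{p-1}(r)/V_p(r))^m$, as claimed.

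The main obstacle is the density step: one must verify that the marginal of a single complex coordinate has exactly the slice-volume form, that its supremum is attained at the center, and that the map $a_1\mapsto\|\vecu\|_2\,\overline{a_1}$ rescales the density by $\|\vecu\|_2^{-2}$ (the factor $\pi$ in $C(p,m,r)$ emerging as the area of the unit disk). Everything else---the independence-based factorization and the ``area times sup-density'' bound---is routine.
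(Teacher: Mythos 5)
Your proof is correct and follows essentially the same route as the paper's: the independence-based factorization into per-row events, the unitary-invariance reduction to the first complex coordinate, and a bound on that coordinate's marginal density for a uniform point on the ball. The only cosmetic difference is that you compute the exact slice-volume marginal and take its supremum, whereas the paper bounds the same integral by enlarging each slice to the full $(p-1)$-dimensional ball of radius $r$; the resulting estimate is identical.
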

\begin{proof}
Since 
\begin{align}
\opP[\|\rmatA\vecu+\vecv\|_2<\delta]\label{eq:com1}
&\leq\prod_{i=1}^m\opP[|{\rveca}_i^\herm\vecu+v_i|<\delta]
\end{align}
owing to the independence of the $\rmatA_i$  and as $\|\rmatA\vecu+\vecv\|_2<\delta$ implies $|{\rveca}_i^\herm\vecu+v_i|<\delta$ for  $i=1,\dots,m$, it is sufficient to show that 
\begin{align}
\opP[|\rvecb^\ast\vecu+v|<\delta]\leq \frac{D(p,r)}{\|\vecu\|_2^2}\delta^2
\end{align} 
for all $\vecu\in\complex^p\!\setminus\!\{\veczero\}$,  $v\in\complex$, and  $\delta>0$, where the random vector $\rvecb$ is  uniformly distributed on $\setB_p(\veczero,r)$ and 
\begin{align}
D(p,r)=\frac{\pi V_{p-1}(r)}{V_{p}(r)}.
\end{align} 
We have 
\begin{align}
\opP[|\rvecb^\herm\vecu+v|<\delta]
&=\opP\Bigg[\frac{|\rvecb^\herm\vecu+v |}{\|\vecu\|_2}<\frac{\delta}{\|\vecu\|_2}\Bigg]\\
&=\opP[|\rvecb^\herm\matU^\herm\vece_1+\tilde v|<\tilde \delta]\label{eq:come1}\\
&=\opP[|\rvecb^\herm\vece_1+\tilde v|<\tilde \delta]\label{eq:come2}\\
&=\frac{1}{V_{p}(r)}\int_{\setB_p(\veczero,r)} \ind{\{b_1: |b_1+\tilde v|\,<\, \tilde\delta\}}(b_1)\mathrm d \vecb\label{eq:come3}\\
&\leq \frac{1}{V_{p}(r)}\int_{|b_1+\tilde v|\leq \tilde\delta}\mathrm d b_1\ \int_{\setB_{p-1}(\veczero,r)}\mathrm d (b_2\dots b_p)^\tp\\
&=\frac{V_{p-1}(r)}{V_{p}(r)}\int_{|b_1+\tilde v|<\tilde\delta} \mathrm d b_1\\
&=\frac{V_{p-1}(r)}{V_{p}(r)}\pi{\tilde\delta}^2\label{eq:come4}\\
&=\frac{\pi V_{p-1}(r)}{V_{p}(r)\|\vecu\|^2}{\delta}^2, 
\end{align}
where the unitary matrix $\matU$ in \eqref{eq:come1} has been chosen such that   $\matU(\vecu/\|\vecu\|_2)=\vece_1=(1\ 0\ \dots\ 0)^\tp\in\complex^p$ and we set  $\tilde\delta:=\delta/\|\vecu\|_2$ and  $\tilde v:=v/\|\vecu\|_2$. 
Further,  \eqref{eq:come2} follows from unitary invariance of the uniform distribution on $\setB_p(\veczero,r)$, 
and in \eqref{eq:come3} the factor $1/V_{p}(r)$ is owing to the assumption of a uniform probability density function on $\setB_p(\veczero,r)$.  
\end{proof}

\section{Results for $\onorm{\cdot}_1$ and $\onorm{\cdot}_2$}\label{sec:norm}

\begin{lemma}\label{lem:norm}
Let $\matU\in\complex^{m\times m}$ be unitary, $\setP, \setQ\subseteq\{1,\dots,m\}$, and  consider 
the orthogonal projection $\matP_\setQ(\matU)=\matU\matD_\setQ\matU^\herm$ 
onto the subspace $\setW^{\matU,\setQ}$.  
Then, 
\begin{align}
\onorm{\matP_\setQ(\matU)\matD_\setP}_2
&=\onorm{\matD_\setP\matP_\setQ(\matU)}_2\label{eq:normid1}. 
\end{align} 
Moreover, we have  
\begin{align}\label{eq:norm2A}
 \onorm{\matD_\setP\matP_\setQ(\matU)}_2=\max_{\vecx\in\setW^{\matU,\setQ}\setminus\{\veczero\}}\frac{\|\vecx_\setP\|_2}{\|\vecx\|_2}
\end{align}
and
\begin{align} \label{eq:norm1A}
 \onorm{\matD_\setP\matP_\setQ(\matU)}_1&=\max_{\vecx\in\setW^{\matU,\setQ}\setminus\{\veczero\}}\frac{\|\vecx_\setP\|_1}{\|\vecx\|_1}. 
\end{align}
\end{lemma}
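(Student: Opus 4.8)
All three identities share the same structure: on the left an operator norm, defined as a maximum over all of $\complex^m$, and on the right a quotient maximised only over $\setW^{\matU,\setQ}=\range(\matP_\setQ(\matU))$. The plan is to dispatch \eqref{eq:normid1} by a self-adjointness argument and to obtain \eqref{eq:norm2A} together with the usable part of \eqref{eq:norm1A} from one common two-sided estimate, pinpointing the single norm-dependent ingredient that distinguishes the two cases.

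For \eqref{eq:normid1} I would first note that both factors are Hermitian: $\matD_\setP$ is real and diagonal, and $\matP_\setQ(\matU)=\matU\matD_\setQ\matU^\herm$ satisfies $\matP_\setQ(\matU)^\herm=\matP_\setQ(\matU)$. Hence $(\matD_\setP\matP_\setQ(\matU))^\herm=\matP_\setQ(\matU)\matD_\setP$, and because the operator $2$-norm is the largest singular value and therefore invariant under the conjugate transpose, $\onorm{\matD_\setP\matP_\setQ(\matU)}_2=\onorm{\matP_\setQ(\matU)\matD_\setP}_2$. This is a one-line step, and I anticipate no difficulty.

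For the quotient identities, let $\|\cdot\|_\star$ stand for either $\|\cdot\|_2$ or $\|\cdot\|_1$. The lower bound is uniform in the norm: for $\vecx\in\setW^{\matU,\setQ}\setminus\{\veczero\}$ we have $\matP_\setQ(\matU)\vecx=\vecx$, so $\matD_\setP\matP_\setQ(\matU)\vecx=\vecx_\setP$ and thus $\onorm{\matD_\setP\matP_\setQ(\matU)}_\star\ge\|\vecx_\setP\|_\star/\|\vecx\|_\star$; maximising over $\vecx$ yields the ``$\ge$'' direction of both \eqref{eq:norm2A} and \eqref{eq:norm1A}. For the reverse bound I would take any $\vecy\neq\veczero$, set $\vecx=\matP_\setQ(\matU)\vecy\in\setW^{\matU,\setQ}$, discard the trivial case $\vecx=\veczero$, and factor $\|\matD_\setP\matP_\setQ(\matU)\vecy\|_\star/\|\vecy\|_\star=\bigl(\|\vecx_\setP\|_\star/\|\vecx\|_\star\bigr)\bigl(\|\vecx\|_\star/\|\vecy\|_\star\bigr)$. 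The first factor is at most the right-hand maximum, so everything reduces to the nonexpansiveness $\|\matP_\setQ(\matU)\vecy\|_\star\le\|\vecy\|_\star$.

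This nonexpansiveness is exactly where the two norms part ways and is the main obstacle. For $\|\cdot\|_2$ it is immediate from orthogonality, $\|\vecy\|_2^2=\|\matP_\setQ(\matU)\vecy\|_2^2+\|(\matI-\matP_\setQ(\matU))\vecy\|_2^2\ge\|\matP_\setQ(\matU)\vecy\|_2^2$, which closes the upper bound and establishes \eqref{eq:norm2A}. For $\|\cdot\|_1$ an orthogonal projection is in general \emph{not} nonexpansive (the $1$-norm of $\matP_\setQ(\matU)\vecy$ may exceed that of $\vecy$), so this route does not deliver the ``$\le$'' direction of \eqref{eq:norm1A}, and I expect the genuine difficulty to sit precisely here. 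The ``$\ge$'' direction above survives unchanged for the $1$-norm, and it is only this inequality that the subsequent applications invoke---e.g.\ \eqref{eq:late2} and \eqref{eq:huu3} use merely $\|\vecx_\setP\|_1/\|\vecx\|_1\le\onorm{\matD_\setP\matP_\setQ(\matU)}_1$ for $\vecx\in\setW^{\matU,\setQ}$. I would therefore re-examine whether equality in \eqref{eq:norm1A} truly holds by reducing $\onorm{\matD_\setP\matP_\setQ(\matU)}_1$ to its maximal absolute column sum and comparing with the restricted quotient; should equality fail, it suffices to retain the one-sided bound that the chapter actually uses.
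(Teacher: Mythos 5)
Your proofs of \eqref{eq:normid1} and \eqref{eq:norm2A} are correct and coincide in substance with the paper's: the paper obtains \eqref{eq:normid1} from self-adjointness of $\onorm{\cdot}_2$ together with $\matP_\setQ(\matU)^\herm=\matP_\setQ(\matU)$ and $\matD_\setP^\herm=\matD_\setP$, and it proves \eqref{eq:norm2A} by a cyclic chain of inequalities whose only nontrivial ingredient---invoked in its step \eqref{eq:norm1}---is exactly the $2$-norm nonexpansiveness $\|\matP_\setQ(\matU)\vecx\|_2\le\|\vecx\|_2$ that you isolate, combined with idempotency of the projection.

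Where you and the paper part ways is \eqref{eq:norm1A}, and there your suspicion is vindicated: the equality is false in general, and the paper's own proof breaks at precisely the point you identified. The paper asserts that \eqref{eq:norm1A} ``follows by repeating the steps in \eqref{eq:stepa1}--\eqref{eq:stepa2x} with $\lVert\,\cdot\,\rVert_2$ replaced by $\lVert\,\cdot\,\rVert_1$ at all occurrences'', but the $1$-norm version of step \eqref{eq:norm1} requires $\onorm{\matP_\setQ(\matU)}_1\le 1$, which is the nonexpansiveness that fails for orthogonal projections in $1$-norm. Carrying out the column-sum comparison you propose yields a concrete counterexample: take $m=2$, $\setP=\setQ=\{1\}$, and
\begin{align*}
\matU=\frac{1}{5}\begin{pmatrix} 4 & -3\\ 3 & 4\end{pmatrix},
\qquad
\matP_\setQ(\matU)=\frac{1}{25}\begin{pmatrix} 16 & 12\\ 12 & 9\end{pmatrix},
\qquad
\matD_\setP\matP_\setQ(\matU)=\frac{1}{25}\begin{pmatrix} 16 & 12\\ 0 & 0\end{pmatrix}.
\end{align*}
By Lemma \ref{lem:mmm}, $\onorm{\matD_\setP\matP_\setQ(\matU)}_1=16/25$ (maximal absolute column sum), whereas $\setW^{\matU,\setQ}$ is spanned by $\vecu_1=(4/5\ \ 3/5)^\tp$, so every nonzero $\vecx\in\setW^{\matU,\setQ}$ gives $\|\vecx_\setP\|_1/\|\vecx\|_1=(4/5)/(7/5)=4/7$. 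Since $16/25>4/7$, the left-hand side of \eqref{eq:norm1A} strictly exceeds the right-hand side. (In the same example $\onorm{\matP_\setQ(\matU)}_1=28/25>1$, confirming exactly where the paper's argument fails.)

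Your closing observation is also correct and is what keeps the rest of the chapter intact: the only property ever invoked downstream is the one-sided bound $\onorm{\matD_\setP\matP_\setQ(\matU)}_1\ge\|\vecx_\setP\|_1/\|\vecx\|_1$ for $\vecx\in\setW^{\matU,\setQ}\setminus\{\veczero\}$, which is the trivial direction because $\matP_\setQ(\matU)$ acts as the identity on its range. This is all that \eqref{eq:late2} and \eqref{eq:huu3} use, and the upper bounds \eqref{eq:up13} and \eqref{eq:normineqa} are bounds on the operator norm $\Sigma_{\setP,\setQ}(\matU)$ and therefore hold a fortiori for the restricted maximum. So the lemma should be repaired exactly as you suggest: keep $\Sigma_{\setP,\setQ}(\matU)$ defined as $\onorm{\matD_\setP\matP_\setQ(\matU)}_1$ and replace the equality in \eqref{eq:norm1A} by ``$\ge$''.
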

\begin{proof}
The identity \eqref{eq:normid1} follows from 
\begin{align}
\onorm{\matD_\setP\matP_\setQ(\matU)}_2
&=\onorm{(\matD_\setP\matP_\setQ(\matU))^\ast}_2\label{eq:selfedj}\\
&=\onorm{\matP_\setQ^\ast(\matU)\matD_\setP^\ast}_2\\
&=\onorm{\matP_\setQ(\matU)\matD_\setP}_2,    
\end{align}
where in \eqref{eq:selfedj} we used that $\onorm{\cdot}_2$ is self-adjoint \cite[p. 309]{hojo13}, $\matP_\setQ(\matU)^\ast=\matP_\setQ(\matU)$, and $\matD_\setP^\ast=\matD_\setP$. 
To establish \eqref{eq:norm2A}, we note that  
\begin{align}
\onorm{\matD_\setP\matP_\setQ(\matU)}_2
&=\max_{\vecx:\,\|\vecx\|_2=1}\|\matD_\setP\matP_\setQ(\matU)\vecx\|_2\label{eq:stepa1}\\
&= \max_{ 
\substack{\vecx:\, \matP_\setQ(\matU)\vecx\neq\veczero\\
\phantom{\vecx:\,}\|\vecx\|_2=1
}}
\mleft\|\matD_\setP\matP_\setQ(\matU)\vecx\mright\|_2\\
&\leq \max_{ 
\substack{\vecx:\,\matP_\setQ(\matU)\vecx\neq\veczero\\
\phantom{\vecx:\,}\|\vecx\|_2=1
}}
\mleft\|\matD_\setP\frac{\matP_\setQ(\matU)\vecx}{\|\matP_\setQ(\matU)\vecx\|_2}\mright\|_2\label{eq:norm1}\\
&\leq \max_{\vecx:\, \matP_\setQ(\matU)\vecx\neq\veczero}
\mleft\|\matD_\setP\frac{\matP_\setQ(\matU)\vecx}{\|\matP_\setQ(\matU)\vecx\|_2}\mright\|_2\\
&=\max_{\vecx\in\setW^{\matU,\setQ}\setminus\{\veczero\}}\frac{\|\vecx_\setP\|_2}{\|\vecx\|_2}\\
&=\max_{\vecx:\,\matP_\setQ(\matU)\vecx\neq\veczero}\mleft\|\matD_\setP\matP_\setQ(\matU)\frac{\matP_\setQ(\matU)\vecx}{\|\matP_\setQ(\matU)\vecx\|_2}\mright\|_2\\
&\leq \max_{\vecx:\, \|\vecx\|_2=1}\|\matD_\setP\matP_\setQ(\matU)\vecx\|_2\\
&=\onorm{\matD_\setP\matP_\setQ(\matU)}_2,\label{eq:stepa2x}
\end{align}
where in \eqref{eq:norm1} we used $\|\matP_\setQ(\matU)\vecx\|_2\leq \|\vecx\|_2$, which implies $\|\matP_\setQ(\matU)\vecx\|_2\leq 1$ for all $\vecx$ with $\|\vecx\|_2=1$. 
Finally, \eqref{eq:norm1A} follows by repeating the steps in \eqref{eq:stepa1}--\eqref{eq:stepa2x} with $\lVert\,\cdot\,\rVert_2$ replaced by $\lVert\,\cdot\,\rVert_1$ at all occurrences. 
\end{proof} 

\begin{lemma}\label{lem:ineqonorm}
Let $\matA\in\complex^{m\times n}$. Then, 
\begin{align}
\frac{\|\matA\|_2}{\sqrt{\rank(\matA)}}
\leq
\onorm{\matA}_2
&\leq \|\matA\|_2.
\end{align}
\end{lemma}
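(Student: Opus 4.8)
The plan is to reduce both norms to the singular values of $\matA$ and then invoke an elementary inequality between the largest singular value and the Euclidean length of the vector of all singular values. We may assume $\matA\neq\matzero$, since otherwise all three quantities vanish (with the convention $0\cdot\infty=0$ the leftmost expression reads $0$, and the chain $0\leq 0\leq 0$ holds trivially). First I would write a singular value decomposition $\matA=\matU\Sigma\matV^\herm$ with $\matU\in\complex^{m\times m}$ and $\matV\in\complex^{n\times n}$ unitary and $\Sigma\in\reals^{m\times n}$ carrying the singular values $\sigma_1\geq\sigma_2\geq\dots\geq\sigma_{\min\{m,n\}}\geq 0$ on its main diagonal, exactly $r=\rank(\matA)$ of which are strictly positive.

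Next I would record the two standard norm identities that follow from this decomposition together with the unitary invariance of $\onorm{\cdot}_2$ and of $\|\cdot\|_2$, namely $\onorm{\matA}_2=\sigma_1$ and $\|\matA\|_2=\sqrt{\tr(\matA\matA^\herm)}=\sqrt{\sum_{i=1}^{r}\sigma_i^2}$. With these in hand the upper bound is immediate, since $\sigma_1^2$ is one summand of the nonnegative sum $\sum_{i=1}^{r}\sigma_i^2$, whence $\onorm{\matA}_2=\sigma_1\leq\|\matA\|_2$. For the lower bound I would bound each term by the largest, $\sigma_i^2\leq\sigma_1^2$ for $i=1,\dots,r$, so that $\|\matA\|_2^2=\sum_{i=1}^{r}\sigma_i^2\leq r\,\sigma_1^2=\rank(\matA)\,\onorm{\matA}_2^2$; taking square roots and dividing by $\sqrt{\rank(\matA)}$ then yields $\|\matA\|_2/\sqrt{\rank(\matA)}\leq\onorm{\matA}_2$.

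The main obstacle here is essentially bookkeeping rather than anything deep: the only points requiring care are that the number of strictly positive singular values equals $\rank(\matA)$ (so that the factor is $\sqrt{\rank(\matA)}$ and not $\sqrt{\min\{m,n\}}$), and that both norms are genuinely unitarily invariant, which is what legitimizes passing from $\matA$ to $\Sigma$. If one prefers to avoid invoking the SVD explicitly, the identities $\onorm{\matA}_2=\sqrt{\lambda_{\max}(\matA^\herm\matA)}$ and $\|\matA\|_2^2=\tr(\matA^\herm\matA)=\sum_i\lambda_i(\matA^\herm\matA)$ express both quantities through the eigenvalues of the positive semidefinite matrix $\matA^\herm\matA$, after which the argument is word-for-word identical, using $\lambda_i(\matA^\herm\matA)=\sigma_i^2$.
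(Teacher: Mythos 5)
Your proposal is correct and follows essentially the same route as the paper's proof: both reduce the claim to the singular values via unitary invariance of $\onorm{\cdot}_2$ and $\|\cdot\|_2$, identify $\onorm{\matA}_2=\sigma_1$ and $\|\matA\|_2=\sqrt{\sum_{i=1}^r\sigma_i^2}$, and conclude from $\sigma_1\leq\sqrt{\sum_{i=1}^r\sigma_i^2}\leq\sqrt{r}\,\sigma_1$. Your handling of the zero matrix and the remark about the eigenvalues of $\matA^\herm\matA$ are fine but add nothing beyond the paper's argument.
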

\begin{proof}
The proof is trivial for $\matA=\matzero$. If $\matA\neq\matzero$, set $r=\rank(\matA)$ and let  $\sigma_1,\dots,\sigma_r$ denote the nonzero singular values of $\matA$ organized in decreasing order.  
Unitary invariance of $\onorm{\cdot}_2$ and $\|\cdot\|_2$ (cf. \cite[Problem 5, p. 311]{hojo13}) yields 
$\onorm{\matA}_2=\sigma_1$ and $\|\matA\|_2=\sqrt{\sum_{i=1}^r\sigma_i^2}$.  
The claim now follows from   
\begin{align}
\sigma_1\leq \sqrt{\sum_{i=1}^r\sigma_i^2}\leq \sqrt{r}\sigma_1. 
\end{align}  
\end{proof}

\begin{lemma}\label{lem:mmm}
For $\matA=(\veca_1\dots\veca_n)\in\complex^{m\times n}$, we have 
\begin{align}
\onorm{\matA}_1=\max_{j\in\{1,\dots,n\}}\|\veca_j\|_1\label{eq:opnorm1a}
\end{align}
and 
\begin{align}
\frac{1}{n}\|\matA\|_1\leq \onorm{\matA}_1\leq \|\matA\|_1. \label{eq:opnorm1b}
\end{align}
\end{lemma}
\begin{proof}
The identity \eqref{eq:opnorm1a} is established in \cite[p.294]{hojo13}, and \eqref{eq:opnorm1b} follows directly from \eqref{eq:opnorm1a}. 
\end{proof}







  \bibliographystyle{IEEEtran} 

  \bibliography{references}\label{refs}

  \cleardoublepage



%

\end{document}